\DeclareSymbolFont{AMSb}{U}{msb}{m}{n}
\documentclass[reqno,centertags,10pt,a4paper,noamsfonts]{amsart}
\pdfoutput=1
\usepackage[svgnames]{xcolor}
\usepackage{graphicx}
\usepackage[english]{babel}
\usepackage[babel=true,expansion=alltext,protrusion=alltext-nott,final]{microtype}
\usepackage[margin={3cm},marginparwidth={2.5cm}]{geometry}
\usepackage[utf8]{inputenc}
\usepackage{textcomp}
\usepackage[sb]{libertine}
\usepackage[libertine,bigdelims,vvarbb]{newtxmath}
\useosf
\usepackage[varqu,varl]{inconsolata}
\usepackage[supsfam=LibertinusSerif-Sup,%
       supscaled=1.2,%
       raised=-.13em]{superiors}
\usepackage{mathtools}
\usepackage[T1]{fontenc}
\usepackage{textcomp}
\usepackage{amsthm}

\usepackage[cal=cm]{mathalpha}
\usepackage[inline]{enumitem}
\numberwithin{equation}{section}
\usepackage[normalem]{ulem}

\usepackage{float}  
\usepackage{caption}
\usepackage{subcaption}
\usepackage{xspace}         
\usepackage[scientific-notation=true]{siunitx}      
\usepackage{pstricks}
\usepackage{tikz}
\usetikzlibrary{positioning} 
\usetikzlibrary{calc}
\usepackage{pgfplots}
\pgfplotsset{width=10cm,compat=1.9}
\usepackage{bm}
\usepackage{sidecap}
\usepackage{graphicx,color}

\usepackage{amsmath}
\DeclareFontFamily{U}{mathx}{}
\DeclareFontShape{U}{mathx}{m}{n}{<-> mathx10}{}
\DeclareSymbolFont{mathx}{U}{mathx}{m}{n}
\DeclareMathAccent{\widehat}{0}{mathx}{"70}
\DeclareMathAccent{\widecheck}{0}{mathx}{"71}

\providecommand{\mr}[1]{\href{http://www.ams.org/mathscinet-getitem?mr=#1}{MR~#1}}
\providecommand{\zbl}[1]{\href{https://zbmath.org/?q=an:#1}{Zbl~#1}}

\providecommand{\doi}[2]{\href{https://doi.org/#1}{#2}}
\definecolor{light_gray}{gray}{0.75}
\definecolor{lighter_gray}{gray}{0.5}
\colorlet{light_blue}{blue!20}
\definecolor{dark_green}{rgb}{0.0, 0.6, 0.0}
\definecolor{royal_blue}{rgb}{0.0, 0.22, 0.66}
\definecolor{salmon}{rgb}{1.0, 0.55, 0.41}
\definecolor{gold}{rgb}{0.8, 0.63, 0.21}
\definecolor{navy_blue}{rgb}{0.0, 0.0, 0.5}
\definecolor{crimson}{rgb}{0.79, 0.0, 0.09}
\definecolor{amethyst}{rgb}{0.6, 0.4, 0.8}
\definecolor{alizarin}{rgb}{0.82, 0.1, 0.26}
\definecolor{amaranth}{rgb}{0.9, 0.17, 0.31}
\definecolor{azure}{rgb}{0.0, 0.5, 1.0}
\definecolor{canaryyellow}{rgb}{0.82, 0.41, 0.12}
\definecolor{carrotorange}{rgb}{0.8, 0.33, 0.0}
\definecolor{cadmiumgreen}{rgb}{0.0, 0.42, 0.24}
\definecolor{copper}{rgb}{0.72, 0.45, 0.2}
\definecolor{aqua}{rgb}{0.5, 1.0, 0.83}
\definecolor{awesome}{rgb}{1.0, 0.13, 0.32}
\definecolor{candyapplered}{rgb}{1.0, 0.03, 0.0}
\definecolor{caribbeangreen}{rgb}{0.0, 0.8, 0.6}
\definecolor{indigo}{rgb}{0.0, 0.25, 0.42}

\DeclareMathOperator{\weaklystar}{\rightharpoonup\kern-2.2ex ^* \, \,}

\def\XXint#1#2#3{{\setbox0=\hbox{$#1{#2#3}{\int}$ }
\vcenter{\hbox{$#2#3$ }}\kern-.6\wd0}}

\newcommand{\R}{\mathbb R}
\newcommand{\N}{\mathbb N}
\newcommand{\Z}{\mathbb Z}

\newcommand\norm[1]{\lVert #1 \rVert}
\newcommand\bignorm[1]{\big\lVert #1 \big\rVert}

\newcommand\inner[1]{\langle #1 \rangle}

\newcommand{\mL}{\mathrm{L}}

\renewcommand{\phi}{\varphi}

\newcommand{\mH}{\mathrm{H}}
\newcommand{\mW}{\mathrm{W}}

\newcommand{\T}{\mathbb{T}}

\theoremstyle{plain}
\newtheorem{theorem}{Theorem}[section]
\newtheorem{proposition}[theorem]{Proposition}

\newtheorem{lemma}[theorem]{Lemma}
\newtheorem{assumption}[theorem]{Assumption}
\newtheorem*{theorem*}{Theorem}

\theoremstyle{definition}
\newtheorem{definition}[theorem]{Definition}
\newtheorem{remark}[theorem]{Remark}
\newtheorem*{remark*}{Remark}

\usepackage[pagebackref=false]{hyperref}
\hypersetup{
	linktoc=all,
	bookmarksdepth=3,
	colorlinks=true,
	linkcolor=Green,citecolor=Orange,urlcolor=DarkBlue,
	pdftitle={Grand canonical DFT in 1D},
	pdfauthor={Thiago Carvalho Corso}, 
	pdfsubject={},
	pdfkeywords={}} 
\begin{document}
\numberwithin{table}{section}
%%%%%%%%%%%%%%%%%%%%%%%%%%%%%%%%%%%%%%%%%%%%%%%%%%
\title[Strictly correlated electrons in a quantum ring]{Strictly correlated electrons in a quantum ring: from Kohn-Sham to Kantorovich potentials}

%%%%%%%%%%%%%%%%%%%%%%%%%%%%%%%%%%%%%%%%%%%%%%%%%%
\author[T.~Carvalho~Corso]{Thiago Carvalho Corso}
\address[T.~Carvalho Corso]{Institute of Applied Analysis and Numerical Simulation, University of Stuttgart, Pfaffenwaldring 57, 70569 Stuttgart, Germany}
\email{thiago.carvalho-corso@mathematik.uni-stuttgart.de}

%%%%%%%%%%%%%%%%%%%%%%%%%%%%%%%%%%%%%%%%%%%%%%%%%%
\keywords{Multimarginal optimal transport, strongly interacting limit, strictly correlated electrons, density functional theory, semiclassical limit, Seidl conjecture}
\subjclass[2020]{Primary: 49Q20, 81Q05
 Secondary: 35Q40
, 81Q35}
%% 49Q20 - Variational problems in a geometric measure-theoretic setting
%% 	35Q40  - PDEs in connection with quantum mechanics
%% 	81Q05  - Closed and approximate solutions to the Schrödinger, Dirac, Klein-Gordon and other equations of quantum mechanics
%% 81Q35  - Quantum mechanics on special spaces: manifolds, fractals, graphs, lattices 

\date{\today}
\thanks{\emph{Funding information}:  DFG -- Project-ID 442047500 -- SFB 1481.  \\[1ex]
\textcopyright 2026 by the author. Faithful reproduction of this article, in its entirety, by any means is permitted for noncommercial purposes.}
%%%%%%%%%%%%%%%%%%%%%%%%%%%%%%%%%%%%%%%%%%%%%%%%%%
%\dedicatory{}
%%%%%%%%%%%%%%%%%%%%%%%%%%%%%%%%%%%%%%%%%%%%%%%%%%
\begin{abstract} Our goal in this paper is twofold. First, we characterize the class of pairwise interactions for which the Seidl conjecture on the structure of optimal plans for the symmetric multimarginal optimal transport problem with one-dimensional marginal holds. This extends previous results by Colombo, De Pascale, and Di Marino \cite{CDD15}, which treated the case of translation-invariant, convex and decreasing interactions. In particular, our results apply to physically relevant interactions for electrons living on a quantum ring. The second main goal of the paper is to rigorously derive the leading order asymptotics of the adiabatic connection potential for strongly interacting systems. More precisely, we show that for electrons in a quantum ring (or one-dimensional interval), not only the Lieb density functional converges to the optimal transport (or strictly correlated) functional in the semiclassical limit, but also the representing potential converges to a regular Kantorovich potential. As an intermediate step, we also extend previous results on the strongly interacting limit of the Lieb functional to periodic systems in arbitrary dimensions.
\end{abstract}
%%%%%%%%%%%%%%%%%%%%%%%%%%%%%%%%%%%%%%%%%%%%%%%%%%
\setcounter{tocdepth}{1}
\maketitle
%\tableofcontents
%%%%%%%%%%%%%%%%%%%%%%%%%%%%%%%%%%%%%%%%%%%%%%%%%%
\setcounter{secnumdepth}{2}
%%%%%%%%%%%%%%%%%%%%%%%%%%%%%%%%%%%%%%%%%%%%%%%%%%
\section{Introduction}

\subsection{Motivation}

In \cite{CDD15}, Colombo, De Pascale, and Di Marino have shown that, for pairwise interaction potentials $w$ that are convex and decreasing, the multimarginal optimal transport problem 
\begin{align}
    F_{\rm OT}(\rho) \coloneqq  \min \left\{\int_{\R^n} c_n(x) \mathrm{d} \gamma(x)  : \gamma \in \Pi(\rho)\right\} \label{eq:MMOT}
\end{align}
with cost function
\begin{align}
    c_n(x_1,...,x_n) = \sum_{i \neq j} c_2(x_i,x_j) = \sum_{i \neq j} w(x_i-x_j), \label{eq:cost function}
\end{align}
where $\Pi(\rho)$ denotes the set of probability measures in $\R^n$ with each marginal equal to $\rho \in \mathcal{P}(\R)$, has an explicit minimizer in terms of $\rho$. More precisely, for any non-atomic $\rho \in \mathcal{P}(\R)$, we let $-\infty = d_0 < d_1 < ... < d_n = +\infty$ be such that
\begin{align*}
    \rho([d_i,d_{i+1}] ) = 1/n, \quad i = 0,1,..., n-1,
\end{align*}
and let $T:\R \rightarrow \R$ be the unique (up to $\rho$ null sets) function that is increasing on each interval $(d_i,d_{i+1})$ and satisfies
\begin{align*}
    &T^\#(\mathbb{1}_{[d_i,d_{i+1}]} \rho ) = \mathbb{1}_{[d_{i+1},d_{i+2}]} \rho, \quad i=0,1,..., n-2\\
    &T^\#(\mathbb{1}_{[d_{n-1},d_{n}]} \rho) = \mathbb{1}_{[d_0,d_1]} \rho.
\end{align*}
Then the push-forward measure
\begin{align}
    \gamma_\rho = (\mathrm{id}, T, T^{(2)},...,T^{(n-1)})^\# \rho \in \Pi(\rho), \quad \mbox{where $T^{(k)} = \overbrace{T\circ T ... \circ T}^{k \text{~times}}$,}\label{eq:optimizer}
\end{align}
 is an optimizer of~\eqref{eq:MMOT}. Moreover, if $w$ is strictly convex, then the symmetrization of~\eqref{eq:optimizer} is the unique minimizer of~\eqref{eq:MMOT} among the symmetric measures.

This result was conjectured by Seidl in \cite{Sei99} and plays an important role in the strongly interacting limit of density functional theory (DFT) \cite{SGS07,GSV09}. More precisely, in the limit $\varepsilon \downarrow 0$, the celebrated Levy-Lieb \cite{Lev79,Lie83} constrained search density functional 
\begin{align*}
    F_\varepsilon(\rho) \coloneqq \min \left\{ \int_{\R^n} \varepsilon |\nabla \Psi(x)|^2 dx + \int_{\R^n} c_n(x) |\Psi(x)|^2 \mathrm{d} X : \Psi \in \wedge^n L^2(\R) \quad \rho_\Psi = n\rho\right\},
\end{align*}
where $\rho_\Psi$ denotes the single-particle density of $\Psi$, converges to the multimarginal optimal transport problem~\eqref{eq:MMOT}  \cite{BDG12,CFK13,BD17,CFK18,Lew18}, which is also called the strictly correlated electrons (SCE) functional in the physics literature. Consequently, for one-dimensional systems, the Seidl (transport) map $T$ allow for an explicit description of the asymptotic behavior of the probability distribution associated to the minimizer of $F_\varepsilon(\rho)$ in the regime of strong interactions. While restricted to the one-dimensional case, this explicit construction of optimal maps has served as a fruitful test ground for the development of new density functionals aiming to capture the physics of general strongly correlated systems  \cite{RSG11,MMC+13,FGG22,VGD+23}. Moreover, this result also plays a central role in a recent derivation of the next order expansion of $F_\epsilon(\rho)$ in the semiclassical limit $\varepsilon \downarrow 0$. \cite{CDS25}.

However, the assumptions on the interaction potential in \cite{CDD15} are too restrictive for some applications \cite{VKS+04,FP05,LG12,LG13,CKG+17,PKF19}. To be more precise, these works deal with periodic systems, for which the natural interaction potentials must also be periodic, as particles are either restricted to the flat torus or a quantum ring. In particular, physically relevant interactions can not be strictly decreasing and the results in \cite{CDD15} do not immediately apply. It is therefore natural to ask the following question:
\begin{itemize}
    \item Can one extend the results in \cite{CDD15} to a larger class of interaction potentials? Better yet, can one characterize the maximal class of (not necessarily translation invariant) two-body interactions, for which~\eqref{eq:MMOT} always admit a minimizer of the form~\eqref{eq:optimizer}?
\end{itemize}
Answering this question is the first goal of this paper. 

The second goal of the paper is related to recent advances towards a rigorous mathematical foundation of density functional theory for one-dimensional systems \cite{SPR+24,Cor25a,Cor25b,Cor25c,SPR+25,CL25}. More precisely, in these works the authors show that, for any density function $\rho$ with finite kinetic energy that is strictly positive on an interval, and for a rather general class of pairwise interactions $w$, there exists an external potential $v = v(\rho,w)$ in the dual Sobolev space $\mH^{-1}$ such that $\rho$ is the ground-state density of the Hamiltonian 
\begin{align}
    H_n(v,w) = -\Delta + \sum_{i\neq j}^n w(x_i,x_j) - \sum_{j=1}^n v(x_j),\quad \mbox{acting on}\quad \mathcal{H}_n = \wedge^n \mL^2([0,2\pi])\label{eq:Hamiltonian}
\end{align}
under Neumann or periodic boundary conditions\footnote{We remark that the potential depends on the boundary conditions. Moreover, in the periodic case, $\rho$ may be only ensemble $v$-representable, i.e., the ground-state with single-particle density $\rho$ may be a mixed state.}. In particular, for such densities, these results guarantee the existence of the adiabatic connection, i.e., a map $\lambda \in \R \mapsto v_\lambda = v(\rho,\lambda w)$ such that $\rho$ is the ground-state density of the Schr\"odinger operator $H_n(v_\lambda,\lambda w)$ for every $\lambda \in \R$. Furthermore, in a recent work by the author and Laestadius \cite{CL25}, it is shown that the map $\lambda \mapsto v_\lambda$ is real analytic, thereby justifying the G\"orling-Levy perturbation series expansion of $v_\lambda$ in the weakly interacting limit $\lambda \rightarrow 0$. However, these results are restricted to the finite $\lambda$ case and does not provide information on the opposite --strongly interacting-- limit $\lambda \rightarrow \infty$. Therefore, in the current paper, our second main goal is to rigorously investigate the asymptotic behaviour of $v_\lambda$ in the limit $\lambda \rightarrow \infty$.

\subsection{Main contributions}
 In summary, the main contributions of the paper can be described as follows.
 \begin{itemize}
     \item We characterize the set of all pairwise interactions $w$ for which, for any $n\in \N$ and any $\rho \in \mathcal{P}(I)$, the Seidl plan is an optimizer of the $n$-marginal optimal transport problem with marginal $\rho$.
     \item We show that, for periodic systems in arbitrary dimensions, the Lieb functional converges to the optimal transport functional in the semiclassical regime $\varepsilon \downarrow 0$. 
     \item We show that, for one-dimensional periodic systems, the adiabatic potential converges towards the Kantorovich potential of the optimal transport problem in the limit $\varepsilon \downarrow 0$ (or equivalently $\lambda \uparrow \infty$). 
 \end{itemize}
\section{Main results}

We now turn to the precise statement of our main results. 
\subsection{Well-ordering costs and the Seidl conjecture} Our first main result gives a characterization of pair interactions for which the Seidl conjecture holds. To state it precisely, let us introduce the following definition.

\begin{definition}[Well-ordering interaction] \label{def:well-ordering} We say that a symmetric continuous function $w:J\times J \rightarrow \R\cup \{\infty\}$ is well-ordering in a set $J \subset \R$ if the following holds. For any $x_1\leq x_2 \leq x_3 \leq x_4 \in J$ we have
\begin{align*}
    w(x_1,x_3) + w(x_2,x_4) = \min \{ w(x_{\sigma(1)},x_{\sigma(2)}) + w(x_{\sigma(3)},x_{\sigma(4)}) : \sigma : \{1,2,3,4\} \rightarrow \{1,2,3,4\} \quad\mbox{bijective} \}.
\end{align*}
Moreover, we say that $w$ is strictly well-ordering if the equality
\begin{align*}
    w(x_1,x_3) + w(x_2,x_4) = w(x_{\sigma(1)},x_{\sigma(2)}) + w(x_{\sigma(3)},x_{\sigma(4)})
\end{align*}
holds true if and only if either $w(x_1,x_3) + w(x_2,x_4) = \infty$ or 
\begin{align*}
    \delta_{x_1} + \delta_{x_3} = \delta_{x_{\sigma(1)}} + \delta_{x_{\sigma(2)}} \quad \mbox{or} \quad \delta_{x_1} + \delta_{x_3} = \delta_{x_{\sigma(3)}} + \delta_{x_{\sigma(4)}},
\end{align*}
where $\delta_x$ denotes the Dirac delta measure at $x\in \R$.
\end{definition}

Using this definition, our first main result can be stated as follows.

\begin{theorem}[Optimal transport characterization of well-ordering interactions] \label{thm:main} Let $w:I \times I \rightarrow \R\cup \{+\infty\}$ be a continuous symmetric function on an closed interval $I=[a,b] \subset \R$ such that $F_{\rm OT}(\rho) < \infty$ for any non-atomic $\rho \in \mathcal{P}(I)$. Then, the measure $\gamma_\rho$ in~\eqref{eq:optimizer} is a minimizer of the MMOT problem~\eqref{eq:MMOT} for arbitrary $n\in \N$ and non-atomic $\rho \in \mathcal{P}(I)$ if and only if $w$ is well-ordering. Moreover, if $w$ is strictly well-ordering, then the symmetrization of $\gamma_\rho$ is the unique symmetric minimizer of~\eqref{eq:MMOT}.
\end{theorem}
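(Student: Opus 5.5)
The proof splits into a necessity part and a sufficiency part, with uniqueness handled as a refinement of sufficiency.

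\emph{Necessity.} Suppose the Seidl plan is optimal for every $n$ and every non-atomic $\rho$. We take $n=2$ and let $\rho$ be a non-atomic approximation of $\frac14(\delta_{x_1}+\delta_{x_2}+\delta_{x_3}+\delta_{x_4})$, for instance uniform measures on tiny intervals around each $x_i$. For $n=2$ the Seidl map sends the lower half of the mass to the upper half monotonically, so it couples the cluster at $x_1$ with the one at $x_3$, and $x_2$ with $x_4$. Any other pairing of the four clusters gives a competitor plan in $\Pi(\rho)$; taking the sum of the two corresponding symmetric product blocks keeps all marginals equal to $\rho$. Optimality of $\gamma_\rho$ then gives $w(x_1,x_3)+w(x_2,x_4)\le w(x_{\sigma(1)},x_{\sigma(2)})+w(x_{\sigma(3)},x_{\sigma(4)})$ up to an error that vanishes by continuity of $w$. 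Letting the intervals shrink yields the well-ordering inequality. Coinciding points are handled by the same limit, and infinite values by lower semicontinuity.

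\emph{Sufficiency.} Since $c_n$ is a sum of pair costs, $\int c_n\,d\gamma=\sum_{i\ne j}\int w\,d\gamma_{ij}$. The problem therefore reduces to a lower bound on a sum of two-point costs. I would proceed as in \cite{CDD15} through discretization. First, approximate $\rho$ by empirical-type measures $\frac1{nN}\sum_k \delta_{y_k}$ with distinct points $y_1<\dots<y_{nN}$. For such marginals, every symmetric plan can be decomposed, by Birkhoff--von Neumann, into permutation-like configurations. Optimality then reduces to a combinatorial statement: among all partitions of $\{y_1,\dots,y_{nN}\}$ into $N$ ordered $n$-tuples, the cost $\sum_{\text{tuples}}\sum_{i\neq j}w$ is minimized by the Seidl tuples $(y_k,y_{k+N},\dots,y_{k+(n-1)N})$.

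To prove the combinatorial statement, I would use the well-ordering inequality as a local swap move. Given any two tuples and any two indices, we can exchange entries between the tuples. The key quantity is the interleaving structure: for the optimal configuration, each tuple occupies exactly one point in each block of $N$ consecutive points. I would define a potential, for example the number of ``crossing defects'', meaning pairs of tuples whose sorted entries are not interlaced. I would then show that a swap justified by the four-point inequality strictly decreases this potential while not increasing the total pair cost. The main obstacle lies here. A swap between two tuples changes cross terms with all $n-2$ other members of each tuple, not just a single pair. So the four-point inequality must be applied simultaneously to all pairs $(i,j)$ affected. One plausible route is to sum over pairs: compare the cost of the pair-marginal $\gamma_{ij}$ with the optimal two-marginal plan between the corresponding sub-blocks. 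The well-ordering property should imply that for $2$-marginal problems with non-atomic marginals on ordered sets, the shifted monotone coupling $x\mapsto T^{(k)}(x)$ is optimal among couplings with the constraint structure induced by the quantiles. I would try first to prove that $\sum_{i\neq j}\int w\,d\gamma_{ij}\ge \sum_{k=1}^{n-1}(n-k)\cdot 2\int w(x,T^{(k)}x)\,d\rho$ pair by pair using this interlacing argument. Finally, I pass to the limit $N\to\infty$ using stability of optimal plans under weak convergence, which is justified by continuity of $w$ and finiteness of $F_{\rm OT}$.

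\emph{Uniqueness under strict well-ordering.} Any symmetric optimal $\gamma$ has a support that is $c$-cyclically monotone. Strictness then forces every pair of support points, viewed as ordered tuples, to be interlaced. This means the support lies in the graph of the Seidl cyclic structure. Non-atomicity of $\rho$ then pins down the monotone maps uniquely, up to symmetrization.
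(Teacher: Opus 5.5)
Your necessity argument is fine. It replaces the paper's appeal to $c$-cyclical monotonicity of $\mathrm{supp}\,\gamma_\rho$ (for a positive continuous $\rho$ with $\int_{x_1}^{x_3}\rho=\int_{x_2}^{x_4}\rho=\frac12$, so that $T(x_1)=x_3$ and $T(x_2)=x_4$) with an explicit four-cluster competitor, and both work. The sufficiency part, however, has a genuine gap.

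First, the Birkhoff--von Neumann step is false for $n\ge 3$. Multimarginal transport polytopes with uniform discrete marginals have non-Monge vertices, even among symmetric plans. Take $n=3$, $N=1$ and $\mu=\frac13(\delta_{y_1}+\delta_{y_2}+\delta_{y_3})$. The symmetric plan $\frac12\mathrm{Sym}\,\delta_{(y_1,y_1,y_2)}+\frac12\mathrm{Sym}\,\delta_{(y_2,y_3,y_3)}$ lies in $\Pi(\mu)$. A direct check shows that no permutation plan $\frac13\sum_k\delta_{(y_k,y_{\sigma(k)},y_{\tau(k)})}$ is supported inside its support. So it is not a mixture of such plans, let alone of your partition configurations. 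Hence the discrete linear program does not reduce to your combinatorial problem over partitions. Showing that its optimum is attained at a partition is essentially as hard as the theorem itself.

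Second, and more seriously, the combinatorial core is not proved, as you acknowledge. An arbitrary exchange of entries between two tuples can increase the cost. Already for $n=2$, exchanging $x_1\leftrightarrow x_3$ in $\{x_1,x_2\},\{x_3,x_4\}$ compares $w(x_1,x_4)+w(x_2,x_3)$ with $w(x_1,x_2)+w(x_3,x_4)$, which well-ordering does not control. So a crossing-defect potential gives nothing without two things: a rule specifying which entries to exchange, and a proof that all cross terms with the other $n-2$ members of each tuple can be grouped into correctly ordered quadruples.

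Your fallback is circular. Bounding each $\int w\,\mathrm{d}\gamma_{ij}$ by the monotone two-marginal cost between quantile blocks presupposes that, after ordering the coordinates, $\gamma_{ij}$ couples the $i$-th and $j$-th quantile blocks of $\rho$. That is exactly the interlacing you need to establish.

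The missing ingredient is Proposition~\ref{prop:well-ordering}: for $x_1\le\dots\le x_{2n}$ and every $A$ with $|A|=n$, one has $c_n(x_{A_o})+c_n(x_{A_e})\le c_n(x_A)+c_n(x_{A^c})$. The paper proves it as follows:
\begin{itemize}
\item It encodes $A$ through the cumulative function $f_A$ of $\sum_{j\in A}\delta_j-\sum_{j\in A^c}\delta_j$.
\item It exchanges \emph{all} global maxima $\ell_k\in A$ of $f_A$ with $\ell_k+1\in A^c$ simultaneously.
\item It uses the (cyclically) increasing bijection $\sigma$ of Lemma~\ref{lem:maximum points}, which places every affected pair $(x_j,x_{\sigma(j)})$ in a well-ordered quadruple with $(x_{\ell_k},x_{\ell_k+1})$.
\item Since $\mathrm{Osc}(f_A)$ strictly decreases at each step, the procedure terminates at $A_o$ or $A_e$ without increasing the cost.
\end{itemize}

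This inequality is then applied to two support points of a symmetric optimal plan, exchanging an arbitrary set of coordinates. That is what forces the $i$-th smallest coordinates of all support points to lie to the left of the $(i+1)$-th smallest ones. This gives the block structure to which Lemma~\ref{lem:strict 2 marginal} is applied pair by pair. Your uniqueness sketch likewise needs the strict version of this unproved inequality. Working directly with $c$-cyclical monotonicity in the continuum also removes the need for discretization and for the limit $N\to\infty$.
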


It is not difficult to see that the well-ordering property is necessary and sufficient for $\gamma_\rho$ to be a minimizer of~\eqref{eq:MMOT} in the two marginal case. The striking feature of Theorem~\ref{thm:main} is that the well-ordering property, which is a two marginal condition, is also sufficient for the multimarginal case with an arbitrary number of marginals.

\begin{remark}[Only symmetric interactions matter] There is no loss of generality in assuming that $w$ is symmetric. Indeed, since the sum in the cost function~\eqref{eq:cost function} is taken with respect to all $i\neq j$, the cost is the same if we replace $w$ by its symmetric part $w_{\rm sym}(x,y) = \frac12 (w(x,y) + w(y,x))$. In particular, for translation invariant interactions $w(x,y) = w(x-y)$, it suffices to work with even functions.
\end{remark}

At a first glance, the well-ordering property may seem difficult to verify in practice. However, as we illustrate next with several examples, this is not the case.

\subsection*{Applications to multimarginal optimal transport on one-dimensional manifolds} Let us first consider the case of translation invariant interactions, $w(x,y) = w(x-y)$. In this case, one has the following reformulation of the well-ordering property.
\begin{proposition} Let $a<b$ and $w:[a-b,b-a]\rightarrow \R\cup \{+\infty\}$ be an even continuous function. Then $w(x-y)$ is well-ordering in $[a,b]$ if and only if the following holds:
\begin{enumerate}[label=(\roman*)]
    \item $w$ is convex, and
    \item $w$ satisfies 
    \begin{align}
        w(d_0 + \delta) + w(d_1+\delta) \leq w(d_1) + w(d_0), \quad \mbox{for any $d_0,d_1,\delta>0$ with $d_0 + d_1 + \delta \leq b-a$.} \label{eq:increasing condition}
    \end{align}
\end{enumerate}
Moreover, $w$ is strictly well-ordering if and only if $w$ is strictly convex and one has strict inequality in~\eqref{eq:increasing condition} for $\delta>0$.
\end{proposition}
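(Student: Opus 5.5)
The plan is to reduce the well-ordering condition to inequalities in the three gaps. For $x_1\le x_2\le x_3\le x_4$, put $d_0 = x_2-x_1$, $\delta = x_3-x_2$ and $d_1 = x_4-x_3$, all $\geq 0$ with $d_0+\delta+d_1\le b-a$. Since a bijection $\sigma$ only induces a pairing of $\{1,2,3,4\}$ into two pairs, and $w$ is even, there are just three competing sums:
\begin{align*}
    A &= w(d_0+\delta) + w(\delta+d_1) \quad (\text{pairing } \{1,3\},\{2,4\}),\\
    B &= w(d_0) + w(d_1) \quad (\text{pairing } \{1,2\},\{3,4\}),\\
    C &= w(d_0+\delta+d_1) + w(\delta) \quad (\text{pairing } \{1,4\},\{2,3\}).
\end{align*}
Well-ordering in $[a,b]$ is therefore equivalent to $A\le B$ and $A\le C$ for all admissible $d_0,\delta,d_1\ge 0$. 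The rest of the proof matches these two inequalities with conditions (ii) and (i), respectively.

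First, $A\le B$ is literally \eqref{eq:increasing condition}. The only point to check is the degenerate cases where some gap vanishes. If $\delta=0$ then $A=B$. If $d_0=0$ or $d_1=0$, the inequality follows from the case with all gaps positive by continuity of $w$ (valued in $\R\cup\{+\infty\}$, with the convention that continuity at points where $w=+\infty$ is understood in the extended sense). So \eqref{eq:increasing condition} for positive parameters is equivalent to $A\le B$ on the closed set of parameters.

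Second, $A\le C$ says
\begin{align*}
    w(\delta+d_0)+w(\delta+d_1)\le w(\delta)+w(\delta+d_0+d_1).
\end{align*}
Writing $s=\delta$, $t=\delta+d_0+d_1$ and $u=\delta+d_0$, $v=\delta+d_1$ with $u+v=s+t$, this is the statement that for $0\le s\le u,v\le t\le b-a$ with $u+v=s+t$ one has $w(u)+w(v)\le w(s)+w(t)$. This is the standard majorization characterization of convexity on $[0,b-a]$.
\begin{itemize}
    \item If $w$ is convex, write $u=\lambda s+(1-\lambda)t$ and $v=(1-\lambda)s+\lambda t$ and add the two convexity inequalities.
    \item Conversely, taking $d_0=d_1=h$ gives midpoint convexity $2w(\delta+h)\le w(\delta)+w(\delta+2h)$ on $[0,b-a]$. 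Together with continuity this yields convexity on $[0,b-a]$.
    \item Convexity then extends to $[a-b,b-a]$ because $w$ is even and, by (ii) with $d_0\to0$, nonincreasing near $0$. So $0$ is a minimum and the even extension of a convex function with minimum at $0$ is convex.
\end{itemize}
This last step is the point I expect to need the most care. If $w(0)=+\infty$ is allowed, (i) on $[0,b-a]$ combined with evenness still gives convexity on the whole interval provided one reads convexity in the extended sense. I would check that $w$ is finite away from $0$ (needed for $F_{\rm OT}<\infty$ anyway) and treat the value at $0$ by the same limit argument.

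For the strict version, strictly well-ordering means that equality with a different pairing forces $A=\infty$ or the pairing to coincide with $A$'s. When some gap is zero, two pairings coincide as measures (for example, $\delta=0$ makes $A$'s and $B$'s pairings agree as sums of Dirac masses). So strictness amounts to $A<B$ and $A<C$ whenever all gaps are positive and $A<\infty$.
\begin{itemize}
    \item $A<C$ for positive gaps is exactly strict inequality in the majorization argument, which is equivalent to strict convexity. The converse direction uses the strict midpoint inequality, and strictness passes through the standard argument.
    \item $A<B$ is strict \eqref{eq:increasing condition}.
\end{itemize}
The bookkeeping of which degenerate configurations are excluded by the delta-measure condition is routine but must be done carefully.
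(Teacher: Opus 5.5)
The paper states this proposition without proof, so there is no argument of its own to compare yours with. Your skeleton is the natural one and is sound: reduce well-ordering to $A\le B$ and $A\le C$, read $A\le B$ as (ii), and identify $A\le C$ with convexity through the majorization inequality for $0\le s\le u,v\le t$, $u+v=s+t$. Two steps are wrong, however.

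\textbf{The extension of convexity to $[a-b,b-a]$.} Your third bullet claims something false, and its justification contradicts itself. If $w$ is non-increasing on $[0,\epsilon)$, then $0$ is a local \emph{maximum} of $w$ on $[0,b-a]$, not a minimum, and the even reflection has a concave kink at $0$. For instance, $w(t)=e^{-|t|}$ is convex and decreasing on $[0,b-a]$. It therefore satisfies (ii) termwise and is well-ordering by your own argument, yet it is not convex on $[a-b,b-a]$. For $w(t)=1/|t|$ no extended-valued reading helps either: a convex function has an interval as effective domain, while here $\{w<\infty\}=[a-b,0)\cup(0,b-a]$. So (i) can only mean convexity on $[0,b-a]$, as in the paper's Coulomb-type examples. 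That is all either direction uses, since every argument of $w$ in $A,B,C$ is non-negative. Delete the bullet rather than trying to repair it.

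\textbf{The strict case.} Your claim that every configuration with a vanishing gap is excluded by the Dirac-mass clause is incorrect. If $d_0=0<\delta$ (i.e.\ $x_1=x_2<x_3$), it is the pairings of $A$ and $C$ that coincide. The $B$-pairing is different, since $\delta_{x_1}+\delta_{x_2}=2\delta_{x_1}\neq\delta_{x_1}+\delta_{x_3}$ and $\delta_{x_3}+\delta_{x_4}\neq\delta_{x_1}+\delta_{x_3}$. Hence strict well-ordering also demands
$w(\delta)+w(\delta+d_1)<w(0)+w(d_1)$ for all $\delta>0$, $d_1\ge 0$ with $\delta+d_1\le L\coloneqq b-a$, together with the mirror case $d_1=0$. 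It likewise demands $A<C$ when $\delta=0<d_0,d_1$. The latter follows from strict convexity on $[0,L]$. The former does not follow from strict (ii) by letting $d_0\downarrow 0$, because strictness is lost in the limit. So the ``if'' half of the strict statement is not proved as written.

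It can be closed as follows; the case $w(0)=+\infty$ is trivial.
\begin{itemize}
\item If $\delta+d_1<L$: convexity makes $t\mapsto w(t)-w(t+\delta)$ non-increasing, so $w(0)-w(\delta)\ge w(\epsilon)-w(\epsilon+\delta)$ for small $\epsilon>0$. Strict (ii) at $(\epsilon,d_1,\delta)$, or at $(\epsilon,\epsilon,\delta)$ if also $d_1=0$, then finishes.
\item If $\delta+d_1=L$ (so $x_1=x_2=a$, $x_4=b$): there is no room for $\epsilon$. Instead, on the face $d_0+\delta+d_1=L$ one has $A-B=\theta(d_0+\delta)-\theta(d_0)$ with $\theta(s)\coloneqq w(s)-w(L-s)$. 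So strict (ii) says that $\theta$ is strictly decreasing on $(0,L)$. By continuity $\theta(\delta)<\theta(0)$, which is exactly the missing inequality.
\end{itemize}
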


We can now use the above reformulation to verify the well-ordering property in several cases that are physically relevant.
\begin{enumerate}[label=(\arabic*)]
\item (Unbounded intervals) First, in the case of an unbounded interval, i.e., $a = -\infty$ or $b = +\infty$, one can set $d_1 = d_0$ in~\eqref{eq:increasing condition} to see that $w$ must be decreasing. This shows that, for translation invariant interactions on unbounded intervals, the convex plus decreasing assumption on $w$ used in~\cite{CDD15} is optimal for the Seidl conjecture to hold.

\item (Flat torus) In the case of bounded domains, the decreasing condition is no longer necessary because inequality~\eqref{eq:increasing condition} (with $d_0 = d_1$) only needs to hold for $\delta \leq b-a - 2d_0$. For instance, one can show that interactions of the form
\begin{align}
    w(x,y) = w(x-y) = w(\min\{|x-y|,|x-y+2\pi|,|x-y-2\pi|\}) = w(|x-y|_{\mathbb{T}}), \label{eq:torus cost}
\end{align}
which are natural on the flat torus $\T = \R /(2\pi\Z)$, also satisfy~\eqref{eq:increasing condition}, provided that $w$ is convex and decreasing in $[0,\pi]$. Indeed, in this case, for $d_0+\delta, d_1+\delta < \pi $,~\eqref{eq:increasing condition} is immediate from the decreasing property of $w$, while for $d_0 +\delta >\pi$ we have
\begin{align*}
    w(d_0+\delta) + w(d_1+\delta) = w(2\pi-d_0-\delta) + w(d_1+\delta) \leq w(2\pi-d_0) + w(d_1) = w(d_0) + w(d_1),
\end{align*}
where we used that $w(x) = w(2\pi-x)$ for the equalities and the convexity of $w$ for the inequality. 
\item (Quantum ring) Similar considerations show that interactions of the form 
\begin{align}
    w(\theta,\phi) = w(2 \sin(|\theta-\phi|/2)) \label{eq:ring cost}
\end{align}
for $\theta,\phi\in [0,2\pi]$ are well-ordering in $I = [0,2\pi]$ if and only if $w$ is convex and decreasing in $[0,2]$. Note that these are the physically relevant interactions in the case of particles leaving in the ring $S^1 = \{x\in \R^2 : |x| =1 \}$. Indeed, in this case, the distance between two particles at positions $x_1 = \rm{e}^{i \theta_1} \in S^1$ and $x_2 = \rm{e}^{i \theta_2}\in S^1$ is given by $2 \sin(|\theta_1-\theta_2|/2)$.
\end{enumerate}
To illustrate that Theorem~\ref{thm:main} is applicable beyond the case of translation invariant interactions, we also consider the following examples. A proof is presented in the appendix.
\begin{enumerate}[label=(\arabic*),start=4]
\item (Trivial interaction) Any function of the form $w(x,y) = f(x) + f(y)$ for any $f$ is trivially well-ordering. Of course, such examples are not interesting as $w$ is an one-body operator and not a real pairwise interaction. 
\item (Particles on a graph) Let $f:[0,\infty) \rightarrow \R$ and $g:[0,\infty) \rightarrow \R\cup\{+\infty\}$ be convex and non-increasing functions, then the interaction
\begin{align}
    w(x,y) = g\left(\sqrt{(x-y)^2 + (f(x)-f(y))^2}\right) \label{eq:graph example}
\end{align}
is well-ordering in $[0,\infty)$. Interactions of this form are natural for particles confined to the graph $\mathrm{Gr}(f) =\{(x,f(x)) : x\in [0,\infty)\}\subset \R^2$. 
\item (Cone of well-ordering interactions) It is easy to verify that the space of well-ordering interactions is a cone, i.e., closed under pointwise addition and multiplication by positive constants. In particular, interactions of the form
\begin{align*}
    w(x,y) = \sum_{j=1}^m w_j(x,y) \quad
\end{align*}
with $w_j$ well-ordering are also well-ordering. Moreover, $w$ is translation invariant if and only if every $w_j$ is translation invariant. 
\end{enumerate}

\subsection{Strictly correlated electrons on a quantum ring} We now turn to the results concerning the strongly interacting limit of the Lieb functional in a quantum ring. To state it precisely, let us introduce the Lieb density functional in the periodic setting as follows.
\begin{align}
    F_{\rm per}^\varepsilon(\rho) \coloneqq \inf \left\{ \sum_{k=1}^\infty \lambda_k  \int_{I_n} |\nabla \Psi _k(x)|^2 + c_n(x) |\Psi_k(x)|^2 \mathrm{d} x: \Psi_k \in \mathcal{H}_n \cap \mH^1_{\rm per}(I_n), \quad \sum_{k} \lambda_k \rho_{\Psi_k} = n\rho \right\}, \label{eq:Lieb functional}
\end{align}
where $\mathcal{H}_n = \wedge^n \mL^2(I)$ and $\mH^1_{\rm per}(I_n)$ denotes the space of Sobolev functions on the box $I_n = [0,2\pi]^n$ with periodic boundary conditions. Notice that the domain of $F_{\rm per}^\varepsilon$ is contained in $\mH^1_{\rm per}(I)$ because any periodic wavefunction with finite kinetic energy has single-particle density in $\sqrt{\rho} \in \mH^1_{\rm per}(I)$. We also emphasize that $F^\varepsilon_{\rm per}$ depends on the number of electrons $n$, though this dependence will be omitted in the notation for simplicity.

Let us also impose the following assumption on the pairwise interaction. 
\begin{assumption} \label{assump:interaction} Let $w:I\times I \rightarrow \R_+ \cup \{+\infty\}$ be a symmetric continuous\footnote{By continuity in $\R\cup \{+\infty\}$, we mean with respect to the topology in $\R\cup\{+\infty\}$ generated by the intervals $(a,b)$ and $(a,\infty]$.} function, then we assume that
\begin{align}
    \{ w= \infty\} = D_0 \coloneqq  \{ (x,y) \in I\times I : |x-y|_{\mathbb{T}} = 0\}, \label{eq:periodic diagonal}
\end{align}
where $|\cdot|_{\T}$ is the torus norm defined in~\eqref{eq:torus cost}. Note that this assumption holds for any costs of the form~\eqref{eq:torus cost} and~\eqref{eq:ring cost} with continuous $w\geq 0$ such that $\lim_{t\downarrow 0} w(t) = \infty$.
\end{assumption}

\begin{remark}[Non-negative interactions] There is no loss of generality in assuming that $w \geq 0$. Indeed, since $I\times I$ is compact, any continuous function $w:I\times I\rightarrow \R \cup\{+\infty\}$ is bounded from below; therefore, we can simply shift the cost by a sufficiently large positive constant.
\end{remark}
Under this assumption, we have the following result on the periodic Lieb functional. 

\begin{theorem}[Strictly correlated electrons on a quantum ring] \label{thm:SCE limit} Let $I = [0, 2\pi]$ and $\rho \in \mathcal{P}(I)$ be such that $\sqrt{\rho} \in \mH^1_{\rm per}(I)$. Suppose that $w$ satisfies Assumption~\ref{assump:interaction}. Then 
\begin{align}
    \lim_{\varepsilon \downarrow 0} F_{\rm per}^\varepsilon(\rho) = F_{\rm OT}(\rho). \label{eq:SCE limit}
\end{align}
Moreover, up to subsequences, the optimizers $\Psi_\varepsilon$ of $F_{\rm per}^\varepsilon(\rho)$ satisfy $|\Psi_\varepsilon|^2 \rightharpoonup \gamma_{\rm opt}$ as $\varepsilon \downarrow 0$ in the sense of weak convergence of measures, where $\gamma_{\rm opt}$ is a symmetric optimizer of~\eqref{eq:MMOT}. In particular, if $w$ is strictly well-ordering, then $|\Psi_\varepsilon|^2 \rightharpoonup \gamma_\rho$, where $\gamma_\rho$ is the symmetrization of the Seidl optimizer~\eqref{eq:optimizer}.
\end{theorem}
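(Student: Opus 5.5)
The plan is to follow the standard $\Gamma$-convergence strategy for the semiclassical limit (as in \cite{CFK13,BD17,CFK18,Lew18}), adapted to the periodic box $I_n=[0,2\pi]^n$ with the periodic diagonal singularity \eqref{eq:periodic diagonal}. Throughout, I regard $I$ as the torus $\T$ and $I_n$ as $\T^n$. Then the cost $c_n$ is a lower semicontinuous function $\T^n\to[0,\infty]$, and $c_n=\infty$ exactly on the set where two particles coincide on the torus. The proof splits into a lower bound, an upper bound, and the consequences for minimizers.

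\textbf{Lower bound.} Fix $\varepsilon_k\downarrow 0$ and nearly optimal mixed states $\Gamma_k=\sum_j\lambda_j|\Psi_j\rangle\langle\Psi_j|$ for $F^{\varepsilon_k}_{\rm per}(\rho)$. Let $\gamma_k=\sum_j\lambda_j|\Psi_j|^2\,\mathrm dx$ be the associated symmetric $n$-point densities. These are probability measures on the compact space $\T^n$ with all marginals $\rho$, so up to a subsequence $\gamma_k\rightharpoonup\gamma\in\Pi(\rho)$, and $\gamma$ is symmetric. Since $c_n\ge 0$ is lower semicontinuous, Portmanteau gives
\[
\int c_n\,\mathrm d\gamma\le\liminf_k\int c_n\,\mathrm d\gamma_k .
\]
The kinetic term is nonnegative, so $F_{\rm OT}(\rho)\le\int c_n\,\mathrm d\gamma\le\liminf_k F^{\varepsilon_k}_{\rm per}(\rho)$. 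One caveat needs checking first: $F^{\varepsilon}_{\rm per}(\rho)<\infty$, i.e.\ there is at least one admissible state of finite energy. This follows from the upper bound construction below.

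\textbf{Upper bound (main obstacle).} Given an optimal plan $\gamma$ for $F_{\rm OT}(\rho)$, I need antisymmetric periodic $\mH^1$ wavefunctions $\Psi_\delta$ with density \emph{exactly} $\rho$ and $\int c_n|\Psi_\delta|^2\to\int c_n\,\mathrm d\gamma$. Then $\limsup_\varepsilon F^\varepsilon_{\rm per}(\rho)\le \varepsilon\|\nabla\Psi_\delta\|^2+\int c_n|\Psi_\delta|^2$, and letting $\varepsilon\to0$ and then $\delta\to0$ closes the argument. The construction proceeds in five steps.

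First, I reduce to a plan supported away from the periodic diagonal. Since $\rho$ is continuous (indeed $\sqrt\rho\in\mH^1$), it is non-atomic. The Seidl-type plan, or any plan of finite cost, charges the region $\{|x_i-x_j|_\T<\eta\}$ only a little. When $w$ is well-ordering one can directly use $\gamma_\rho$, which is supported at positive $\T$-distance from the diagonal wherever $\rho$ is bounded below. In general, I would use the standard truncation: restrict $\gamma$ to $\{\min_{i\neq j}|x_i-x_j|_\T\ge\eta\}$ and fill the missing mass with a product-type correction that still has marginal $\rho$ and finite cost.

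Second, I smooth by periodic convolution $\gamma\ast\phi_\delta$ on $\T^n$. This keeps the support off the diagonal and makes $c_n$ bounded and continuous there, so the cost converges.

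Third, I correct the marginals back to $\rho$ by the device of \cite{CFK13,BD17}: write the smoothed marginal as $\rho_\delta$ and compose with a smooth one-dimensional transport map $T_\delta$ pushing $\rho_\delta$ to $\rho$ (the monotone rearrangement on the circle). This map is bi-Lipschitz-controlled with $T_\delta\to\mathrm{id}$ when $\rho$ is bounded below. Where $\rho$ vanishes, one must instead mix with the free-fermion Slater determinant built from $\sqrt{\rho}$ in Harriman/Lazarev–Lieb form, which is periodic and lies in $\mH^1$ because $\sqrt\rho\in\mH^1_{\rm per}$.

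Fourth, I take the square root of the density and antisymmetrize. Off the diagonal, the standard trick (multiplying by a sign function that is smooth away from coinciding points, or using a finite-cost Slater part) yields fermionic states with the same $n$-point density. Mixed states are allowed in \eqref{eq:Lieb functional}, which makes convex combinations with exact marginal $\rho$ admissible.

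Fifth, I check periodic $\mH^1$ regularity. Care is needed that $c_n$ stays integrable near the periodic diagonal for the Slater correction. That piece carries only weight $\to0$, but its cost must be finite, and here the continuity of $w$ off $D_0$ together with antisymmetry must be used; I expect this regularity and finite-cost control to be the delicate step.

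\textbf{Minimizers.} Combining the two bounds gives \eqref{eq:SCE limit}, and the chain of inequalities in the lower bound then forces $\int c_n\,\mathrm d\gamma=F_{\rm OT}(\rho)$. So any weak limit of $|\Psi_\varepsilon|^2$ (or of the mixed densities) is a symmetric optimizer. If $w$ is strictly well-ordering, Theorem~\ref{thm:main} gives uniqueness of the symmetric optimizer as the symmetrized $\gamma_\rho$, and the whole family converges. Theorem~\ref{thm:main} applies after identifying $\T$ with $[0,2\pi]$, with the finiteness hypothesis verified by the construction above.
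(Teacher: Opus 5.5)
Your lower bound is fine, and so is the deduction about weak limits of minimizers once an upper bound is available. This is the paper's trivial $\liminf$ direction: the diagonal of any admissible state lies in $\Pi(\rho)$ and the kinetic term is non-negative.

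The genuine gap is the upper bound, already at your Step 1. Assumption~\ref{assump:interaction} lets $w$ blow up arbitrarily fast at $D_0$. Examples are $w(t)=1/t$, which is already non-integrable in one dimension, or $w(t)=e^{1/t}$. So any trial state whose $n$-point density has positive density near $D_0$ can have infinite energy. Your ``product-type correction'' is exactly such an object. A product of absolutely continuous marginals charges every neighbourhood of the diagonal, so its cost is $+\infty$ for $w(t)=1/t$. It also contradicts your Step~2 claim that the support stays off the diagonal.

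You do not need to truncate at all. The missing idea is that an \emph{optimal} plan is itself supported at positive torus distance from $D_0$. This is Lemma~\ref{lem:support}, the periodic version of \cite{CDS19}. It applies because $\rho$ is continuous, so $\kappa(\rho;r)<1/n$ for small $r$, and because $m(t)\to\infty$ as $t\downarrow 0$. Falling back on $\gamma_\rho$ is not an option, since well-ordering is not assumed in this theorem.

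Your Steps~3--5 also break down, in ways the paper's construction avoids.
\begin{itemize}
\item The transport-map correction needs $\rho$ bounded below.
\item Your remedy for vanishing $\rho$ does not restore the marginal. A convex combination $(1-t)\Gamma_1+t\Gamma_{\rm Sl}$ has density $n[(1-t)\rho_1+t\rho]\neq n\rho$ unless $\rho_1=\rho$.
\item The $n$-point density of the Harriman Slater determinant vanishes only quadratically at coincidence where $\rho>0$. Its interaction energy is therefore infinite for $w(t)\gtrsim t^{-3}$, e.g.\ $w(t)=e^{1/t}$.
\item In Step~4, multiplying $\sqrt{\mathbb P}$ by a locally constant sign is incompatible with periodicity when $n$ is even. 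Inside one connected component of the off-diagonal configuration space on the ring, you can continuously move $x$ to its cyclic relabelling. That relabelling is an $n$-cycle, of sign $(-1)^{n-1}$. This is the obstruction noted in the paper's remark on the Levy--Lieb functional. An extra symmetric phase $e^{i(x_1+\dots+x_n)/2}$ would repair it, but you do not include one.
\end{itemize}

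The paper instead uses the Lewin--Bindini--De Pascale mixed state~\eqref{eq:regularized DM}. It is built from Slater determinants of periodized bumps with disjoint supports, weighted by the mollified optimal plan. These are sandwiched by $\sqrt{\rho}^{\otimes n}$ and normalized by $\widetilde\rho\ast\chi_\eta^2$ at the centres. This state has the following properties:
\begin{itemize}
\item its density is exactly $n\rho$, with no lower bound on $\rho$ needed;
\item it is automatically periodic and antisymmetric;
\item its kinetic energy is $O(\eta^{-2})$ by~\eqref{eq:kinetic energy bound};
\item for $\eta<\alpha/4$ its diagonal lives in $I_n\setminus D_{\alpha-4\eta}$, where $c_n$ is bounded and continuous.
\end{itemize}
Sending $\varepsilon\downarrow0$ and then $\eta\downarrow0$ gives $\limsup_{\varepsilon\downarrow 0} F^\varepsilon_{\rm per}(\rho)\le F_{\rm OT}(\rho)$.
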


Let us now briefly comment on the connection of Theorem~\ref{thm:SCE limit} with previous results in the literature. First, we note that, in the whole space ($\R^d$ with arbitrary $d\in \N$) setting, this result was anticipated by Seidl and co-workers in the physics literature \cite{Sei99,SPL98,SGS07}, and first rigorously derived for $n=2$ electrons in the work by Cotar et al \cite{CFK13}. Later this result was extended to $n=3$ in \cite{BD17}, and finally to any $n\in \N$ by Lewin \cite{Lew18} in the mixed state setting, and by Cotar et al \cite{CFK18} in the pure-state setting. Nevertheless, to the best of the author's knowledge, none of the previous works deal with the periodic setting; in particular, Theorem~\ref{thm:SCE limit} appears to be new. Moreover, our proof relies on an adaptation of the Lewin construction \cite{Lew18} (which is based on the regularization procedure by Bindini and De Pascale \cite{BD17}) to the periodic setting. Consequently, this result can be extended to periodic systems in higher dimensions, as shown later in Theorem~\ref{thm:SCE limit higher dimensions}.

\begin{remark}[Neumann and Dirichlet cases] It is interesting to note that, by considering the periodic case, one can also establish the convergence in~\eqref{eq:SCE limit} for the functionals in the Dirichlet case and in the Neumann case with periodic densities. More precisely, if we define $F^\varepsilon_{\rm D/N}(\rho)$ as in~\eqref{eq:Lieb functional} but with wavefunctions in $\mH^1_0(I_n) \cap \mathcal{H}_N$, respectively, $\mH^1(I_n) \cap \mathcal{H}_N$ (instead of $\mH^1_{\rm per}(I_n) \cap \mathcal{H}_N$), then it is immediate to see that
\begin{align*}
    \begin{dcases} 
    F_{\rm OT}(\rho) \leq F^\varepsilon_{\rm D}(\rho) = F_{\rm per}^\varepsilon(\rho), \quad &\mbox{for $\sqrt{\rho} \in \mH^1_0(I)$, and} \\
    F_{\rm OT}(\rho) \leq F^\varepsilon_{\rm N}(\rho) \leq F_{\rm per}^\varepsilon(\rho) \quad &\mbox{for $\sqrt{\rho} \in \mH^1_{\rm per}(I)$. }\end{dcases}
\end{align*}
Indeed, the second line follows from the obvious fact that $\mH^1_{\rm per}(I_n)\cap \mathcal{H}_n \subset \mH^1(I_n) \cap \mathcal{H}_n$, while the first line follows from the fact that any (periodic) wave-function with density in $\mH^1_0(I)$ must vanish along the boundary $\partial I_n$. Hence, we have
\begin{align*}
    \lim_{\varepsilon \downarrow 0} F_{D/N}^\varepsilon(\rho) = F_{\rm OT}(\rho) \quad \mbox{(with the restriction to periodic densities in the Neumann case).}
\end{align*}
\end{remark}

\begin{remark}[Levy-Lieb functional] Let us also mention that, for an odd number of particles in one dimension, Theorem~\ref{thm:SCE limit} also holds for the Levy-Lieb (constrained search) functional
\begin{align*}
    F_{LL,\rm per}^\varepsilon(\rho) \coloneqq \inf \left\{ \varepsilon T(\Psi) + \int c_n(x) |\Psi(x)|^2 \mathrm{d} x : \Psi \in \mathcal{H}_n \cap \mH^1_{\rm per}(I_n) \quad \rho_\Psi = n \rho\right\}.
\end{align*}
The reason is that any symmetric probability density $\mathbb{P}$ with finite kinetic energy that is periodic and vanishes along the coalescence points, can be turned into the probability density of an anti-symmetric wave-function via the following Bose-Fermi map (see, e. g. \cite{Gir60,Cor25c}):
\begin{align*}
    \sqrt{\mathbb{P}} \in \mH^1_{\rm per}(I_N)\cap \otimes_{\rm sym}^n \mL^2(I)  \mapsto \prod_{1=i<j<n} \mathrm{sign}(x_i-x_j) \sqrt{\mathbb{P}(x_1,...,x_n)} \in \mH^1_{\rm per}(I_n) \cap \mathcal{H}_n .
\end{align*}
In particular $F_{\rm per}^{\varepsilon}(\rho) = F_{LL,\rm per}^\varepsilon(\rho)$ for $n \in 2\N +1$. Note that, for an even number of particles, the right-hand side is no longer periodic, so the same argument does not work. On the other hand, the corresponding equality holds for the Dirichlet and Neumann case for any number of particles $n\in \N$ via the same argument.
\end{remark}

\subsection{From Kohn-Sham to Kantorovich potentials} We now present our main result concerning the strongly interacting limit of the adiabatic potential. To this end, let us first recall that the set
\begin{align}
    \mathcal{D}_{\rm per} \coloneqq \left\{ \rho \in \mH^1_{\rm per}(I) : \rho(x)>0 \mbox{ for any $x \in I = [0,2\pi]$} \quad \int_I \rho(x) = 1 \right\} \label{eq:periodic v-rep densities}
\end{align}
is contained in the set of ensemble $v$-representable densities on the torus $\mathbb{T}$ for general interactions (see \cite{SPR+24}), and coincides with the set of non-interacting $v$-representable densities (see \cite{Cor25a}). More precisely, one can show (see Lemma~\ref{lem:existence of potential}) that, under Assumption~\ref{assump:interaction} on the interaction, for any $\rho \in \mathcal{D}_{\rm per}$, there exists $ v_\varepsilon(\rho) \in \mH^{-1}_{\rm per}(I)$ such that 
\begin{align}
    \min \left\{E_{v_\varepsilon(\rho)}(\zeta) \coloneqq F^\varepsilon_{\rm per}(\zeta) - n\inner{v_\varepsilon(\rho),\zeta} : \sqrt{\zeta} \in \mH^1_{\rm per}(I) ,\quad \int_I \zeta = 1 \right\} = F^\varepsilon_{\rm per}(\rho) - n\inner{v_\varepsilon(\rho),\rho} .\label{eq:potential duality}
\end{align}
In other words, $\rho$ is the minimizing (or ground-state) density for the energy of the Hamiltonian $H_n(v_\varepsilon(\rho)/\varepsilon,w/\varepsilon)$ formally introduced in~\eqref{eq:Hamiltonian}.

In the optimal transport case, the analogous potential is the so-called Kantorovich potential, whose precise definition we recall next.
\begin{definition}[Kantorovich potentials]\label{def:Kantorovich potentials} We say that a function $v_{\rm OT} = v_{\rm OT}(\rho):I \rightarrow \R$ is a Kantorovich potential for the optimal transport problem with marginal $\rho \in \mathcal{P}(I)$, if it is $\rho$-integrable and satisfies
\begin{align*}
    F_{\rm OT}(\rho) = \int_I \rho(x) v_{\rm OT}(x)\mathrm{d} x \quad \mbox{and}\quad c_n(x_1,...,x_n) - \sum_{j=1}^n v_{\rm OT}(x_j) \geq 0, \quad \mbox{for every $(x_1,...,x_n) \in I_n$.}
\end{align*}
Moreover, we shall say that $v_{\rm OT}(\rho)$ is a regular Kantorovich potential if it is continuous.
\end{definition}
We then show that, in the limit as $\varepsilon \downarrow 0$, the potential $v_\varepsilon(\rho)$ converges to a regular Kantorovich potential. To the best of our knowledge, this is the first rigorous justification of the asymptotic expansion of the adiabatic potential in the strongly interacting limit that often appears in the physics literature (see Remark~\ref{rem:asymptotics adiabatic potential} below).

\begin{theorem}[Asymptotics of the potential in the semiclassical limit] \label{thm:Kantorovich potentials} Let $\rho \in \mathcal{D}_{\rm per}$ and $v_\varepsilon(\rho) \in \mH^{-1}_{\rm per}(I)$ be any potential such that~\eqref{eq:potential duality} holds. Then, up to subsequences, there exists $\{C_\varepsilon \} \subset \R$ such that
\begin{align*}
    v_{\varepsilon}(\rho) + C_\varepsilon\rightharpoonup v_{\rm OT}(\rho) \quad \mbox{in $\mH^{-1}_{\rm per}(I)$, as $\varepsilon \downarrow 0$,}
\end{align*}
for some regular Kantorovich potential $v_{\rm OT}(\rho)$. Moreover, if $w$ is locally $C^1$ (away from the diagonal), then the potential $v_{\rm OT}(\rho)$ is unique and the convergence holds without appealing to subsequences.
\end{theorem}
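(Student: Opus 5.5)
The plan is to treat $v_\varepsilon(\rho)$ as a (normalized) element of the subdifferential of the convex functional $F^\varepsilon_{\rm per}$ at $\rho$. The steps are: obtain uniform $\mH^{-1}_{\rm per}$ bounds modulo constants, extract a weak limit, and identify that limit as a subgradient of $F_{\rm OT}$ at $\rho$ using Theorem~\ref{thm:SCE limit}. Then upgrade this subgradient to a continuous Kantorovich potential, and finally prove uniqueness via the Seidl structure from Theorem~\ref{thm:main}.

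\emph{Step 1: normalization and bounds.} Since~\eqref{eq:potential duality} is invariant under adding constants to $v_\varepsilon$ (because $\int\zeta=1$), I choose $C_\varepsilon$ so that $\tilde v_\varepsilon = v_\varepsilon(\rho)+C_\varepsilon$ satisfies $n\inner{\tilde v_\varepsilon,\rho}=F^\varepsilon_{\rm per}(\rho)$. Then~\eqref{eq:potential duality} becomes
\begin{align*}
F^\varepsilon_{\rm per}(\zeta)\geq n\inner{\tilde v_\varepsilon,\zeta} \quad\mbox{for all admissible $\zeta$, with equality at $\zeta=\rho$.}
\end{align*}
Because $\rho\in\mathcal{D}_{\rm per}$ is bounded below by some $c>0$, the densities $\zeta=\rho+t\phi$ with $\phi\in \mH^1_{\rm per}$, $\int\phi=0$, $\norm{\phi}_{\mH^1}\leq 1$, and $|t|$ small are admissible. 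Moreover, $F^\varepsilon_{\rm per}(\zeta)\leq F^1_{\rm per}(\zeta)$ for $\varepsilon\leq 1$. I would show $F^1_{\rm per}$ is bounded on such an $\mH^1$-neighbourhood of $\rho$, for instance with an explicit trial state built from a Slater-type or regularized Seidl construction whose energy is controlled by $\norm{\sqrt\zeta}_{\mH^1}$ and $\inf\zeta$. Using the two-sided choice $\pm\phi$ together with $0\le F_{\rm OT}(\rho)\le F^\varepsilon_{\rm per}(\rho)\le F^1_{\rm per}(\rho)$, this gives $|\inner{\tilde v_\varepsilon,\phi}|\leq C$. Hence the mean-zero part of $\tilde v_\varepsilon$ is bounded in $\mH^{-1}_{\rm per}$. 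The mean of $\tilde v_\varepsilon$ is fixed by the normalization and is bounded because $F^\varepsilon_{\rm per}(\rho)$ is. Banach--Alaoglu then yields a weak limit $v$ along a subsequence.

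\emph{Step 2: the limit is a subgradient of $F_{\rm OT}$.} By Theorem~\ref{thm:SCE limit}, $n\inner{v,\rho}=\lim F^\varepsilon_{\rm per}(\rho)=F_{\rm OT}(\rho)$, up to the $n$-normalization convention in Definition~\ref{def:Kantorovich potentials}, which I would fix consistently. For every $\zeta$ with $\sqrt\zeta\in\mH^1_{\rm per}$, the same theorem gives $F_{\rm OT}(\zeta)=\lim F^\varepsilon_{\rm per}(\zeta)\geq n\inner{v,\zeta}$. So $v$ is a subgradient of $F_{\rm OT}$ at $\rho$ tested against smooth densities.

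\emph{Step 3: regularity and admissibility (main obstacle).} The difficulty is passing from an $\mH^{-1}$ distribution satisfying an inequality on smooth densities to a continuous function with $c_n-\sum_j v(x_j)\geq0$ pointwise.

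\begin{itemize}
\item[(a)] \emph{Pointwise admissibility for a mollification.} Fix $x=(x_1,\dots,x_n)$ off the periodic diagonal and let $\mu_\delta=\frac1n\sum_j\eta_\delta(\cdot-x_j)$. Test with $\zeta_t=(1-t)\rho+t\mu_\delta+\delta'$, corrected to stay positive and in $\mH^1$. Convexity of $F_{\rm OT}$ gives $F_{\rm OT}(\zeta_t)\leq(1-t)F_{\rm OT}(\rho)+tF_{\rm OT}(\mu_\delta)$. The symmetrized plan supported near the permutations of $x$ shows $F_{\rm OT}(\mu_\delta)\leq c_n(x)+o_\delta(1)$, by continuity of $w$ off $D_0$. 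Subtracting the equality at $\rho$ and dividing by $t$ yields
\begin{align*}
\sum_j (v*\eta_\delta)(x_j)\leq c_n(x)+o_\delta(1).
\end{align*}
\item[(b)] \emph{Passing to a continuous potential.} Replace $v$ by its $c$-transform $\hat v(x_1)=\inf_{y_2,\dots,y_n}\big[c_n(x_1,y)-\sum_{j\ge2}(v*\eta_\delta)(y_j)\big]$ and let $\delta\to0$. Continuity of $w$ off the diagonal, together with $w=+\infty$ on $D_0$ (which keeps the infimum away from coalescence), should give equicontinuity. I expect that $\hat v\geq v$ as distributions, while $\inner{\hat v,\rho}\le F_{\rm OT}(\rho)/n=\inner{v,\rho}$ by weak duality. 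Since $\rho>0$ everywhere, this forces $v=\hat v$, which is continuous.
\end{itemize}

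This is the step I expect to need the most care, particularly uniformity near the diagonal.

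\emph{Step 4: uniqueness.} If $w$ is locally $C^1$, any regular Kantorovich potential satisfies $\sum_j v(x_j)=c_n(x)$ on the support of an optimal plan, with $c_n-\sum_j v(x_j)\ge0$ everywhere. By Theorem~\ref{thm:main}, the support contains the graph $(x,Tx,\dots,T^{(n-1)}x)$ for $x$ in a set of full $\rho$-measure, which is all of $I$ since $\rho>0$. First-order optimality in $x_1$ then gives $v'(x)=\sum_{k\ge1}\partial_1 w(x,T^{(k)}x)$ a.e., first in the sense of one-sided derivatives of the $c$-concave $v$ and then as a genuine derivative. This determines $v$ up to a constant, and the constant is fixed by $n\inner{v,\rho}=F_{\rm OT}(\rho)$. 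Uniqueness of the limit then upgrades subsequential convergence to convergence of the whole family.
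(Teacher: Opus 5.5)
Your Steps 1, 2 and 3(a) are sound. In Step 1, use the regularized diagonal-avoiding trial state rather than a Slater determinant: Assumption~\ref{assump:interaction} allows singularities on $D_0$ strong enough to make a Slater determinant's interaction energy infinite. Your Step 2 is slightly more direct than the paper. Because Theorem~\ref{thm:SCE limit} holds for every $\zeta$ with $\sqrt{\zeta}\in\mH^1_{\rm per}(I)$, you obtain $v\in\partial F_{\rm OT}(\rho)$ globally at once. The paper instead passes to the limit only on an $\mH^1$-ball around $\rho$, where it has identified $F_{\rm OT}$ with a truncated $F^h_{\rm OT}$, and then extends by convexity.

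The genuine gap is Step 3(b), which is where the content of the theorem lies. Inequality (a) only controls \emph{sums} $\sum_j (v*\eta_\delta)(x_j)$ over separated configurations, with an error that is not uniform near $D_0$. A priori $v$ is merely an element of $\mH^{-1}_{\rm per}(I)$, which in one dimension contains Dirac masses. Nothing you wrote gives $\sup_\delta\sup_I (v*\eta_\delta)<\infty$ or an upper bound on $\hat v_\delta$. Without these, the penalty $-\sum_{j\geq 2}(v*\eta_\delta)(y_j)$ is not bounded below. Consequently three of your claims are unsupported:
\begin{itemize}
\item that near-minimizers of the $c$-transform stay away from coalescence;
\item that the $\hat v_\delta$ are equicontinuous;
\item that $\hat v\geq v$, which needs $c_n*\eta_\delta^{\otimes n}$ to be replaced by $c_n$ uniformly over near-minimizers.
\end{itemize}
Your weak-duality step is fine if you read it as integrating the mixed inequality $c_n(x_1,y)\geq \hat v_\delta(x_1)+\sum_{j\geq 2}(v*\eta_\delta)(y_j)$ against an optimal plan. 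The inequality $c_n\geq\oplus^n\hat v$ itself is not automatic.

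The paper supplies the missing a priori bound by reducing to a bounded cost. By Lemma~\ref{lem:truncated cost}, $F_{\rm OT}=F^h_{\rm OT}$ on an $\mH^1$-ball around $\rho$ and $F^h_{\rm OT}\leq F_{\rm OT}$, so $v\in\partial F^h_{\rm OT}(\rho)$ by convexity. For the bounded cost $c^h_n$, testing the subgradient inequality on all probability densities gives $v\leq \sup c^h_n/n$ as distributions. Hence $v$ is an $\mL^1$ function minus a positive measure. The singular part is removed via Lemma~\ref{lem:simple}, and the lower bound comes from comparing $v$ with a truncation $\max\{v,-K\}$. The averaged transform $\frac{n-1}{n}v+\frac1n v_c$ together with Lemma~\ref{lem:simple} then gives $v=v_c\in C(I)$. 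This is a Kantorovich potential for the untruncated cost by Lemma~\ref{lem:truncated cost}\ref{it:truncated potentials}.

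Your own route can also be closed. Average (a) in $(x_2,\dots,x_n)$ against a product of fixed smooth bumps centred at $x_1+2\pi(j-1)/n$. The averaged terms are then bounded by $\norm{v}_{\mH^{-1}}$, which yields $v*\eta_\delta\leq C$ uniformly in $\delta$ and $x_1$. The same averaging bounds $\hat v_\delta$ from above. Together these confine near-minimizers to a fixed set away from $D_0$, which is exactly what your equicontinuity and $\hat v\geq v$ claims require.

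Finally, in Step 4 you cannot invoke Theorem~\ref{thm:main}, since $w$ is not assumed well-ordering. You do not need it: because $\rho>0$, the support of any optimal plan projects onto all of $I$. The Lipschitz regularity of $v=v_c$ used there again rests on the bounds from Step 3.
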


\begin{remark}[Normalization] The constants $C_\varepsilon$ in Theorem~\ref{thm:Kantorovich potentials} can be chosen as $C_\varepsilon := \frac{1}{n} F_\varepsilon(\rho) -\inner{v_\varepsilon(\rho),\rho}$.
\end{remark}

\begin{remark}[Strongly interacting asymptotics] \label{rem:asymptotics adiabatic potential} In the notation previously used for the adiabatic connection, the result of Theorem~\ref{thm:Kantorovich potentials} can be stated as
\begin{align*}
    \lim_{\lambda \rightarrow \infty} \frac{v_\lambda(\rho)}{\lambda} = v_{\rm OT}(\rho), \quad \mbox{or} \quad v_\lambda(\rho) = \lambda v_{\rm OT}(\rho) + o(\lambda),
\end{align*}
which is the asymptotic expansion often appearing in the physics literature \cite{GSG+19}. It would be interesting to rigorously obtain the next-order correction for the potential $v_\lambda(\rho)$, which is conjectured to come from the zero-point oscillations functional \cite{Sei99,GSG+19}. As remarked in \cite{GSG+19}, this seems to be a necessary step to go beyond the two-term energy asymptotics, which was rigorously established under different assumptions for 1D systems in \cite{CDS25}.
\end{remark}

\subsection{Outline of the proof and structure of the paper}

We now briefly outline the main steps in the proofs of our main theorems and how these steps are organized throughout the paper.

In Section~\ref{sec:well-ordering}, we present the proof of Theorem~\ref{thm:main}. As in \cite{CDD15}, the key step of the proof is a characterization of $c$-cyclically monotone sets for an arbitrary number of marginals $n$, see Proposition~\ref{prop:well-ordering}. However, in contrast to \cite{CDD15}, we do not assume the interaction potential to be the decreasing, which is crucial for their main estimate in \cite[Lemma 3.4]{CDD15}. In fact, it is not difficult to show that this lemma no longer holds in the general case considered here. Therefore, our strategy here is considerably different; it relies on an explicit algorithmic procedure to reduce the sum of the costs of a balanced bi-partition of $2n$ points, by swapping suitably chosen pairs of consecutive points (see Section~\ref{sec:c monotone}). The choice of the swapping pairs is made by carefully analyzing an auxiliary function introduced later (see Lemma~\ref{lem:elementary}). We can then show that this procedure only terminates when the balanced bi-partition is well-ordered (see Lemma~\ref{lem:swap}). This strategy is more general than the previous approach and one of the main novelties of the paper. Once the geometric characterization of $c$-cyclically monotonicity is established, the rest of the proof follows the same arguments as in \cite{CDD15}. 

In Section~\ref{sec:SCE limit}, we prove Theorem~\ref{thm:SCE limit}. This proof relies on two main steps. In the first step we show that, under Assumption~\ref{assump:interaction}, optimal plans are supported away from the periodic set of coalescence points (see Lemma~\ref{lem:support}). This result extends previous results from \cite{BCD18,CDS19} to the periodic setting and the proof closely follows their arguments. The second step is presented in Section~\ref{sec:Lewin construction} and consists in a simple adaptation of the regularization procedure from \cite{Lew18} to the periodic setting. We can then combine these two steps with standard arguments to complete the proof of Theorem~\ref{thm:SCE limit}. As previously emphasized, this strategy also extends to higher dimensions. Moreover, under additional regularity assumptions on $w$, it also allows us to obtain an estimate of order $\varepsilon^{\frac12}$ for the remainder, see Theorem~\ref{thm:SCE limit higher dimensions} below. 

The proof of Theorem~\ref{thm:Kantorovich potentials} is carried out in Section~\ref{sec:potentials} and consists in three main steps. In the first step, we show the existence of generalized Kantorovich potentials in the dual space of $\mH^1_{\rm per}$, see Lemma~\ref{lem:existence of potential}. This step is presented in Section~\ref{sec:existence} and relies on a well-known result in convex analysis and the simple but important observation that the set $\mathcal{D}_{\rm per}$ is the relative interior of the set of representable densities $\mathcal{R}_{\rm per}$ (whose definition is recalled later in~\eqref{eq:representable densities}) with respect to the $\mH^1$ norm. This observation was used in several recent works by the author and others \cite{SPR+24,Cor25c,SPR+25,CL25}, and is the main reason for the restriction to one-dimensional systems. In the second step of the proof, which is conducted in Section~\ref{sec:regularity}, we show that for bounded and continuous cost, any generalized Kantorovich potential is in fact a classical continuous Kantorovich potential. The main tool at this step is the Riesz representation theorem for non-negative functionals and a weak version of the multimarginal $c$-transform. The third and last step of the proof consists in showing that, locally in $\mathcal{D}_{\rm per}$, the optimal transport problem with unbounded cost is equivalent to the problem with truncated costs. This step is conducted in Section~\ref{sec:reduction to truncated cost} and also relies on the openess of $\mathcal{D}_{\rm per}$ and on extensions of previous results from \cite{BCD18,CDS19}. 

\section{Well-ordering interactions and the Seidl conjecture}
\label{sec:well-ordering}

Our goal in this section is to prove Theorem~\ref{thm:main}. The main step in the proof is the following geometric characterization of $c$-cyclically monotone sets.
\begin{proposition}[Well-ordering and $c$-monotonicity] \label{prop:well-ordering} 
Suppose that $w$ is well-ordering in an interval $I\subset \R$, $n\in \N$, and let $c_n$ be defined as
\begin{align*}
    c_n(x) = c_n(x_1,...,x_n) = \sum_{i\neq j} w(x_i,x_j), \quad \mbox{for $x = (x_1,...,x_n) \in I^n$.}
\end{align*}
Let $x_1 \leq x_2 ... \leq x_{2n} \in I$, then for any $A\subset \{1,...,2n\}$ such that $|A| = n$, we have
\begin{align}
    c_n(x_{A_o}) + c_n(x_{A_e}) \leq c_n(x_{A}) + c_n(x_{A^c}), \label{eq:n well-ordering}
\end{align}
where $x_A = (x_{a_1},...,x_{a_n})$, $A^c = \{1,...,2n\} \setminus A$ is the complementary set of $A$, and $A_o = \{1,3,...,2n-1\}$ and $A_e=\{2,4,6,...,2n\}$ are respectively the odd and even numbers in $\{1,2,...,2n\}$. Moreover, if $c(x_A)+c(x_{A^c}) < \infty$ and $w$ is strictly well-ordering, then
\begin{align*}
    c_n(x_A) + c_n(x_{A^c}) = \min \{ c_n(x_B) + c_n(x_{B^c}) : B \subset \{1,...,2n\},\quad |B| = n \}
\end{align*}
if and only if
\begin{align*}
    \sum_{k \in A} \delta_{x_k} = \sum_{k = 1}^n \delta_{x_{2k}} \quad \mbox{or} \quad \sum_{k\in A} \delta_{x_k} = \sum_{k =1}^n \delta_{x_{2k-1}}.
\end{align*}
\end{proposition}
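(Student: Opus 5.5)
Since $c_n(x_A)+c_n(x_{A^c}) = 2\sum_{\{i,j\}\subset A} w(x_i,x_j) + 2\sum_{\{i,j\}\subset A^c} w(x_i,x_j)$, the quantity to minimize is the total within-block interaction of a balanced bipartition $(A,A^c)$ of $\{1,\dots,2n\}$. It is natural to encode $A$ by a $\pm1$ sequence $s_k = \mathbb{1}_A(k)-\mathbb{1}_{A^c}(k)$ together with the ``height'' function $h(k)=\sum_{l\le k} s_l$. This is the auxiliary function of Lemma~\ref{lem:elementary}. Note that $h(0)=h(2n)=0$, and $A\in\{A_o,A_e\}$ exactly when $|h|\le 1$ everywhere, i.e. when the signs alternate.

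The plan is a swapping algorithm. If the bipartition is not alternating, I find consecutive indices $k<k+1$ with $s_k=s_{k+1}$ and a suitable partner. I then exchange labels of two points so that the total cost does not increase and a potential (for instance $\sum_k h(k)^2$, or the number of adjacent equal signs) strictly decreases. The elementary move is a transposition of a point $p\in A$ with a point $q\in A^c$. Its effect on the cost is
\[
\Delta=\sum_{j\in A\setminus\{p\}}\bigl(w(x_q,x_j)-w(x_p,x_j)\bigr)+\sum_{j\in A^c\setminus\{q\}}\bigl(w(x_p,x_j)-w(x_q,x_j)\bigr).
\]
To show $\Delta\le 0$ I want to pair the terms into quadruples $x_{i_1}\le x_{i_2}\le x_{i_3}\le x_{i_4}$ on which the well-ordering inequality applies. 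Well-ordering says $w(x_1,x_3)+w(x_2,x_4)$ is minimal among the three pairings of four ordered points. In particular, among the pairings, the crossing one beats both the nested one $(14)(23)$ and the adjacent one $(12)(34)$.

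The pairing comes from the height function. Take $p,q$ consecutive, so they are adjacent in the order, with $s_p=s_{p+1}$ and $q=p+1$. Swap $p+1$ into $A^c$ together with some later point $r\in A^c$ that moves into $A$, where $r$ is chosen as the first index after $p+1$ at which $h$ returns to the level $h(p)$. The points between $p+1$ and $r$ then form a balanced excursion, and the points outside also balance up to the local correction. This lets me match each $A$-neighbor with an $A^c$-neighbor on the same side of the swapped pair. Each matched pair forms, with the two swapped points, a configuration of four ordered points, and the well-ordering inequality gives a nonpositive contribution. Lemma~\ref{lem:swap} is then the statement that such a move exists whenever $A\notin\{A_o,A_e\}$. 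Since the potential strictly decreases and there are finitely many bipartitions, iterating from $A$ terminates at $A_o$ or $A_e$. By symmetry $c_n(x_{A_o})+c_n(x_{A_e})$ is the same for both, and this gives \eqref{eq:n well-ordering}.

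\textbf{Main obstacle.} The hard step is choosing the swap partner so that the difference $\Delta$ decomposes exactly into a sum of four-point well-ordering differences. Without monotonicity of $w$, a single transposition contributes terms that cannot all be controlled. The matching must therefore be built from the excursion structure of $h$, so that every unmatched term cancels. I would first try swapping the two blocks bordering a maximal run, and prove by induction on $n$ that the leftover terms telescope.

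\textbf{Strictness.} If $w$ is strictly well-ordering and $A$ is optimal with finite cost, then every move in the chain must have $\Delta=0$. Hence each of the four-point inequalities used must be an equality. Strictness then forces coincidence of Dirac masses, $\delta_{x_{i_1}}+\delta_{x_{i_3}}=\delta_{x_{\sigma(1)}}+\delta_{x_{\sigma(2)}}$ or the analogous identity with $\sigma(3),\sigma(4)$, i.e. equal positions. Therefore swapping points never changed the empirical measure $\sum_{k\in A}\delta_{x_k}$, which must already equal the odd or even one.
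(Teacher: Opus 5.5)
Your framework matches the paper's: the height function $f_A$, a matching that reduces the cost change to four-point well-ordering comparisons, and a decreasing potential. The gap is the elementary move. You use a single transposition of one point of $A$ with one point of $A^c$, and the claim that a cost-nonincreasing transposition exists whenever $A\notin\{A_o,A_e\}$ is false.

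\textbf{Counterexample.} Take $n=4$, $A=\{1,3,6,8\}$ and $w(x,y)=1/|x-y|$, which is strictly convex and decreasing in the distance, hence strictly well-ordering. Place the points in two tight clusters, $x=(0,\epsilon,2\epsilon,3\epsilon,L,L+\epsilon,L+2\epsilon,L+3\epsilon)$ with $\epsilon\ll L$. In each cluster $A$ realizes the crossing pairing ($\{1,3\},\{2,4\}$ and $\{6,8\},\{5,7\}$).
\begin{itemize}
\item A transposition inside a cluster replaces this by a nested or adjacent pairing. It raises the cost by at least $1/(3\epsilon)$, while the cross-cluster terms move only by $O(\epsilon/L^2)$.
\item A transposition between the clusters creates a $3+1$ split in both clusters. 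It raises the cost by at least $5/(3\epsilon)-O(1/L)$.
\end{itemize}
Yet $A$ is strictly worse than $A_o$. The within-cluster terms coincide. The cross-cluster distances are $L+t\epsilon$ with $t\in\{\pm1,\pm1,\pm1,\pm3\}$ for $A$ and $t\in\{0,0,0,0,\pm2,\pm2\}$ for $A_o$, so convexity of $t\mapsto 1/(L+t\epsilon)$ gives strict inequality. Every single transposition is therefore strictly uphill from a non-optimal $A$, and any descent by transpositions gets stuck.

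Your specific rule (any run, partner at the first return of $h$) also fails when a good transposition does exist. For the bipartition $\{1,2,5\}\cup\{3,4,6\}$ and the run $\{3,4\}$, it transposes $4$ and $5$. This puts $5$ and $6$ in the same block, and the cost blows up when $x_6-x_5$ is small.

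\textbf{Why your matching fails.} Consider a transposition of $L<R$, where $L$ and $R$ lie in different blocks. For it to split into crossing-versus-other four-point inequalities, the fixed points must be matched as follows:
\begin{itemize}
\item inside $(L,R)$, with the point of $L$'s block first;
\item outside, read cyclically from $R+1$, with the point of $R$'s block first.
\end{itemize}
Let $h=\pm f_A$ be the height function of $L$'s block. These conditions force the superlevel set $\{h>h(L-1)\}$ to be exactly the interval $[L,R-1]$, with $h(L-1)\ge 0$. Your first-return choice of $r$ only secures the inside condition. In the example $\{f_A>0\}=\{1,3\}$ and $\{-f_A>0\}=\{5,7\}$, and all higher superlevel sets are empty. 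So no transposition is admissible. The obstruction is several global maxima of $f_A$ (and of $-f_A$) lying in different excursions.

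\textbf{The missing idea.} The paper's move $B(A)$, which is what Lemma~\ref{lem:swap} is about, swaps \emph{all} global maximum points $\ell_1<\dots<\ell_m$ of $f_A$ with $\ell_j+1$ simultaneously.
\begin{itemize}
\item The pair terms among the swapped points then cancel in aggregate.
\item The excursions between consecutive maxima, together with the wrap-around piece, give a bijection $\sigma$ from $A^c\setminus\{\ell_j+1\}$ to $A\setminus\{\ell_j\}$ (Lemma~\ref{lem:maximum points}). It is cyclically increasing relative to every pair $(\ell_k,\ell_k+1)$.
\item Hence every remaining term is a crossing-versus-nested or crossing-versus-adjacent comparison.
\item $\mathrm{Osc}(f_A)$ drops by at least one.
\end{itemize}
In the example this is the double swap $1\leftrightarrow 2$, $3\leftrightarrow 4$, which keeps the crossing pairing inside the cluster. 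Your strictness argument rests on the same chain of moves, so it inherits the gap.
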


Proposition~\ref{prop:well-ordering} is a generalization of \cite[Proposition 2.4]{CDD15}. However, as previously noted, their proof crucially relies on an estimate for $\ell$ neighbors, which uses the decreasing property of the interaction potential in a critical way. Unfortunately, this property is no longer available in the general setting considered here; in fact, it is not difficult to see\footnote{For instance, a simple counterexample to the $\ell$ neighbors estimate is the following: if we consider $w$ as in \eqref{eq:ring cost}, $n=3$ and the points $0 \approx x_1 = x_2 < x_3 = x_1 + \delta < x_4 < x_5-\delta < x_5 = x_6 \approx 2\pi$ and set $A = \{1,2,3\}$ and $A^c = \{4,5,6\}$. Then the sum of the $\ell =2$  neighbors for this configuration is
\begin{align*}
    w(|x_1-x_3|) + w(|x_4-x_6|) \approx 2w(\delta)
\end{align*}
while the sum of the $2$ neighbors for the optimal configuration $A_e = \{1, 3, 5\}, A_o = \{2, 4, 6\}$ is
\begin{align*}
    w(x_1,x_5) + w(x_2,x_6)\approx 2w(2\pi) > 2 w(\delta)
\end{align*}} that the $\ell$ neighbors estimate in \cite[Lemma 3.4]{CDD15} can no longer hold true for general interactions. Therefore, we need to adopt a different strategy, which seems somewhat more fundamental than the previous approach.

\subsection{Geometric characterization of $c$-monotonicity} \label{sec:c monotone} Our new strategy relies on the following auxiliary function. Let $A\subset \{1,...,2n\}$ be a subset with $n$ elements, and define the measure
\begin{align*}
    \mu_A = \sum_{j\in A} \delta_j - \sum_{j A^c} \delta_j,
\end{align*}
where $A^c := \{1,...,2n\} \setminus A$ is the complementary set of $A$ in $\{1,...,2n\}$. We then define the function $f_A(t)$ as the cumulative function of $\mu_A$, i.e.,
\begin{align}
    f_A(t) = \int_0^t \mu_A(ds). \label{eq:f def}
\end{align}
The next lemma summarizes a few elementary properties of the function $f_A$.

\begin{lemma}[Elementary properties of $f$] \label{lem:elementary} Let $A \subset \{1,..,2n\}$ with $|A| = n$ and let $f_A$ be defined as in~\eqref{eq:f def}. Then the following holds:
\begin{enumerate}[label=(\arabic*)] 
\item \label{it:jump property} The function $f_A$ is integer valued, constant on intervals of the form $[n,n+1)$, and has jumps of size $1$ at the points $\{1,...,2n\}$. Moreover, $f_A(t) = 0$ for any $t< 1$ or $t\geq 2n$.
\item \label{it:exchange property} We have $f_{A^c} = - f_A$.
\item \label{it:oscillation property} We have $A = A_e$ or $A= A_o$ if and only if the function $f$ satisfies
\begin{align*}
    \rm{Osc}(f) = \max f_A - \min f_A = 1.
\end{align*}
\end{enumerate}
\end{lemma}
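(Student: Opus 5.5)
The plan is to read all three items directly off the explicit formula for $f_A$. Since $\mu_A$ is a signed sum of Dirac masses at the integers $1,\dots,2n$, we have, for every real $t$,
\begin{align*}
f_A(t) = \#\{j\in A : j\le t\} - \#\{j \in A^c : j \le t\}.
\end{align*}
Here I interpret $\int_0^t$ as integration over $(0,t]$, so that $f_A$ is right-continuous. This formula is the only ingredient needed.

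For \ref{it:jump property}, the formula immediately shows that $f_A$ is integer valued. Both counts are constant on each interval $[k,k+1)$, so $f_A$ is constant there as well. At each integer $k\in\{1,\dots,2n\}$ exactly one of the two counts increases by one, so $f_A$ jumps by $+1$ if $k\in A$ and by $-1$ if $k\in A^c$. For $t<1$ both counts are zero. For $t\ge 2n$ the two counts are $|A|=n$ and $|A^c|=n$, so their difference is $0$.

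For \ref{it:exchange property}, note that $(A^c)^c = A$, hence $\mu_{A^c} = -\mu_A$. Integrating gives $f_{A^c}=-f_A$.

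Item \ref{it:oscillation property} is the only step that requires an argument, and the key is a parity observation. On $[k,k+1)$ the value $f_A(k)$ is a sum of $k$ terms, each equal to $\pm1$, so it has the parity of $k$. Moreover $f_A(0)=0$, and $f_A$ takes both even and odd values, since $f_A(1)=\pm1$. It follows that $\mathrm{Osc}(f_A)\ge 1$ always.

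For the forward direction, suppose $A=A_o$. Then $f_A$ alternates between the values $0$ and $1$, so its oscillation equals $1$. If $A=A_e$, the same holds with values $0$ and $-1$, by \ref{it:exchange property}, because $A_e=A_o^c$.

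For the converse, suppose $\mathrm{Osc}(f_A)=1$. Since $f_A$ takes the value $0$, its range is either $\{0,1\}$ or $\{-1,0\}$. Consider the case $\{0,1\}$. By the parity observation, $f_A(k)$ is odd for odd $k$ and even for even $k$, so $f_A(k)=1$ for odd $k$ and $f_A(k)=0$ for even $k$. The jump at $k$ is then $+1$ at every odd $k$ and $-1$ at every even $k$, which means $A=A_o$. The case $\{-1,0\}$ gives $A=A_e$ in the same way, or follows by applying \ref{it:exchange property}.

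I expect no real obstacle here. The only subtle point is the parity argument in \ref{it:oscillation property}, which is what pins down the values of $f_A$ at every integer once the range is known.
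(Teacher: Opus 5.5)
Your proposal is correct and takes the same route as the paper, which only asserts that the lemma is ``straightforward from the definition of $f_A$.'' Your counting formula $f_A(t)=\#\{j\in A: j\le t\}-\#\{j\in A^c: j\le t\}$ (with the right-continuous convention forced by constancy on $[k,k+1)$) and the parity observation $f_A(k)\equiv k \pmod 2$ supply exactly the details the paper omits, including the identification $A=A_o$ or $A=A_e$ once the range is $\{0,1\}$ or $\{-1,0\}$.
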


\begin{proof} The proof is straightforward from the definition of $f_A$. \end{proof}

The main idea of the proof now is to show that, starting with any set of indices $A\subset \{1,...,2n\}$ with $|A| = n$ we can exchange points between $A$ and $A^c$ to construct a new set of indices $B$ such that the cost decreases (or at least does not increase) and the oscillation of $f_{B}$ is strictly smaller than $f_A$. For this, we shall use the following key observation.

\begin{lemma}[Partition via maximum points] \label{lem:maximum points} Let $\{\ell_j\}_{j\leq m} \in \{1,2,...2n\}$ be the integer maximum points of $f_A$ ordered increasingly. Without loss of generality, we assume $\ell_m \leq 2n-1$, as otherwise, we can work with $A^c$ instead. Then $\ell_j+1 \in A^c$ for any $j\leq m$ and there exists a bijective map $\sigma:A^c \setminus \{\ell_j+1\}_{j\leq m}\rightarrow A \setminus \{\ell_{j}\}_{j\leq m}$ such that 
\begin{enumerate}[label=(\roman*)]
\item \label{it:increasing v1} For any $1\leq j \leq m-1$, the restriction of $\sigma$ to $(\ell_j+1,\ell_{j+1})$ maps $A^c \cap(\ell_{j}+1,\ell_j)$ to $A\cap (\ell_j +1, \ell_{j+1})$ and satisfies $\sigma(k) > k$ for any $k\in A^c (\ell_j+1,\ell_j)$.
\item \label{it:increasing v2} The restriction of $\sigma$ to $M:= [1,\ell_1) \cup (\ell_m,2n]$ maps $A^c \cap M$ to $A\cap M$ and satisfies $\sigma(k) > k$ if we identify $(\ell_m+1,2n]$ with $(\ell_{m}+1-2n,0]$, i.e., the function $\tau^{-1} \circ \sigma \circ \tau :\tau^{-1}(A^c) \cap (\ell_m+1-2n,\ell_1)\rightarrow \tau(A) \cap (\ell_m-2n,\ell_1)$, where
\begin{align*}
    \tau : (\ell_m+1-2n, \ell_m] \rightarrow (0,2n],\quad \tau(k) = \begin{dcases} 
    k+2n, \quad &\mbox{if  $\ell_m+1 - 2n < k\leq 0$,} \\
    k, \quad &\mbox{if $0< k\leq \ell_m+1$,}  \end{dcases}
\end{align*}
is bijective and satisfy $\tau^{-1}(\sigma(\tau(k))) > k$ for $k\in \tau^{-1}(A^c) \cap (\ell_m+1-2n,\ell_1)$.
\end{enumerate}
\end{lemma}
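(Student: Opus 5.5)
The proof is a bracket-matching (Dyck path) argument applied to the integer walk $f_A$. Let $M=\max f_A$. I read an integer maximum point as an index $\ell\in\{1,\dots,2n\}$ with $f_A(\ell)=M$ and $f_A(\ell-1)<M$, so that $\ell\in A$. This is the reading under which $\sigma$ has the stated domain and range; I have not checked how the paper uses the lemma later. Each index $k$ changes $f_A$ by $+1$ if $k\in A$ and by $-1$ if $k\in A^c$, by Lemma~\ref{lem:elementary}\ref{it:jump property}. I treat the elements of $A^c$ as ``opening brackets'' and the elements of $A$ as ``closing brackets''. With this convention, $\sigma$ will send an opening bracket to its matching closing bracket.

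\emph{Step 1: $\ell_j+1\in A^c$.} Since $\ell_j\le \ell_m\le 2n-1$, the index $\ell_j+1$ exists. Because $f_A(\ell_j)=M$ is maximal, the step at $\ell_j+1$ cannot be $+1$. Hence $\ell_j+1\in A^c$ and $f_A(\ell_j+1)=M-1$.

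\emph{Step 2: matching between consecutive maxima.} Fix $j\le m-1$ and consider the indices $k$ with $\ell_j+1<k<\ell_{j+1}$.
\begin{itemize}
\item The walk satisfies $f_A(\ell_j+1)=M-1$ and $f_A(\ell_{j+1}-1)=M-1$, the latter because $\ell_{j+1}$ is an up-step to $M$.
\item For $\ell_j+1\le k\le \ell_{j+1}-1$ we have $f_A(k)\le M-1$. A value $M$ here would produce a maximum point strictly between $\ell_j$ and $\ell_{j+1}$ (take the first index where $M$ is attained after $\ell_j+1$), contradicting consecutiveness.
\item Therefore $\#A=\#A^c$ on the open interval, and the walk restricted to it is an excursion starting and ending at level $M-1$ and never going above it.
\end{itemize}
For $k\in A^c$ in this interval, define $\sigma(k)$ as the first index $k'>k$ with $f_A(k')=f_A(k-1)$, i.e.\ the first return to the level before the down-step. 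Since the walk must climb back to $M-1$ by $\ell_{j+1}-1$, such a $k'$ exists inside the interval, and $k'\in A$ because it is an up-step. The standard parenthesis-matching argument shows $\sigma$ is injective: nested excursions return at distinct times. Equal cardinalities then give a bijection onto $A\cap(\ell_j+1,\ell_{j+1})$, with $\sigma(k)>k$ by construction. This proves \ref{it:increasing v1}.

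\emph{Step 3: the wrap-around part.} Since $f_A(0)=f_A(2n)=0$, I concatenate the steps on $(\ell_m+1,2n]$ with those on $[1,\ell_1)$. Under the shift $\tau$, this is a walk on $(\ell_m+1-2n,\ell_1)$ that is continuous across $0$: the value at $2n$ agrees with the value at $0$.
\begin{itemize}
\item It starts at $f_A(\ell_m+1)=M-1$ and ends at $f_A(\ell_1-1)=M-1$.
\item It stays $\le M-1$. A value $M$ in $(\ell_m+1,2n]$ would give a maximum point after $\ell_m$, and a value $M$ in $[0,\ell_1-1]$ would give one before $\ell_1$.
\end{itemize}
The same first-return matching then yields the bijection $A^c\cap M\to A\cap M$ with $\tau^{-1}\sigma\tau(k)>k$, which is \ref{it:increasing v2}. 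Gluing the pieces from Steps 2 and 3 defines $\sigma$ on all of $A^c\setminus\{\ell_j+1\}$, with image $A\setminus\{\ell_j\}$.

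\emph{Main obstacle.} The delicate point is the bookkeeping in Step 3: checking that the cyclic concatenation really is an excursion below $M$. This rests on $f_A(2n)=f_A(0)=0$ from Lemma~\ref{lem:elementary}\ref{it:jump property}. A second point is settling the convention for ``integer maximum point'' so that $\ell_j\in A$. The hypothesis $\ell_m\le 2n-1$ matters only for Step 1; if $\ell_m=2n$ one uses Lemma~\ref{lem:elementary}\ref{it:exchange property} to pass to $A^c$.
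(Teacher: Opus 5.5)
Your proof is correct. It shares the paper's setup but builds a different matching $\sigma$.

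\textbf{Common setup.} Both arguments rest on the same fact: on each block $(\ell_j+1,\ell_{j+1})$, and on the cyclically glued block $(\ell_m+1,2n]\cup[1,\ell_1)$, the walk $f_A$ starts and ends at $M-1$ and never exceeds it. The paper phrases this as the ballot-type prefix inequality $|(\ell_k+1,t]\cap A^c|\ge|(\ell_k+1,t]\cap A|$. It handles the wrap-around exactly as you do, by extending $f_A$ periodically.

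\textbf{The paper's matching.} The paper uses the order-preserving matching. It lists $A^c\cap(\ell_k+1,\ell_{k+1})$ as $b_1<\dots<b_p$ and $A\cap(\ell_k+1,\ell_{k+1})$ as $a_1<\dots<a_p$, and sets $\sigma(b_j)=a_j$. Then $a_j>b_j$ follows in one line from the prefix inequality: if $a_j<b_j$, a cut $t$ between them would leave more elements of $A$ than of $A^c$ to its left. This $\sigma$ is a bijection by definition, so no injectivity argument is needed.

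\textbf{Your matching.} Your first-return (parenthesis) matching pairs each down-step with the up-step that closes its sub-excursion. It costs you the extra nesting argument for injectivity. In return it gives finer structure: each pair $(k,\sigma(k))$ encloses a balanced sub-block. That structure is never used downstream. Lemma~\ref{lem:swap} only needs that $\sigma$ preserves the blocks and satisfies $\sigma(k)>k$ (cyclically on the wrap-around block), so either matching works.

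\textbf{Two small remarks.}
\begin{enumerate}
\item The convention you flag as delicate is automatic. Every index is a $\pm1$ jump, so $f_A(\ell)=M$ forces $f_A(\ell-1)=M-1$, hence $\ell\in A$.
\item In Step 3, the junction value $f_A(0)=f_A(2n)=0$ lies below $M$ precisely because $\ell_m\le 2n-1$ forces $M\ge1$. Your argument covers this through the point $2n$.
\end{enumerate}
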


\begin{proof} We first claim that for any $t\in (\ell_{k}+1,\ell_{k+1})$ and $1\leq k \leq m-1$, we have
\begin{align}
    |(\ell_{k}+1,t]\cap A^c| - |(\ell_{k}+1,t] \cap A| \geq 0 . \label{eq:claimed ineq}
\end{align}
To prove this claim, first note that, since $\ell_{k}$ and $\ell_{k+1}$ are two consecutive global maxima of $f_A$, $f_A$ is integer valued, and $f_A$ is constant on intervals of the form $[n,n+1)$, we must have
\begin{align}
    f_A(t) = f_A(\ell_k+1) - |(\ell_{k}+1,t]\cap A^c| + |(\ell_{k}+1,t] \cap A| \leq f(\ell_k)-1, \quad \mbox{for any $t\in (\ell_k+1,\ell_{k+1})$.}\label{eq:prev obst}
\end{align}
Moreover, as $f$ has jumps of size $1$ at each integer, we must also have $f(\ell_k+1) = f(\ell_k)-1$ (as otherwise we would have $f(\ell_k+1) = f(\ell_k)+1$ contradicting the maximality of $\ell_k$). This observation together with~\eqref{eq:prev obst} then implies~\eqref{eq:claimed ineq}. 

A similar argument shows that $|(\ell_{k}+1,\ell_{k+1})\cap A^c| = |(\ell_{k}+1,\ell_{k+1}) \cap A|$. In particular we have an even number of points in the interval $(\ell_k+1,\ell_{k+1})$. One can now construct an increasing bijection $\sigma : A^c \cap (\ell_k+1,\ell_{k+1}) \rightarrow A\cap (\ell_{k}+1,\ell_{k+1})$ as follows. Denoting by $a_1 < a_2 ... < a_p$ and $b_1 < b_2 ... < b_p$ respectively the elements of $A \cap (\ell_{k}+1,\ell_{k+1})$ and $A^c \cap (\ell_k+1,\ell_{k+1})$, we set $\sigma(b_j) \coloneqq a_j$. Then clearly $\sigma : A^c \cap (\ell_k+1,\ell_{k+1}) \rightarrow A\cap (\ell_{k}+1,\ell_{k+1})$ is a bijection. Moreover, we also have $a_j > b_j$. Indeed, suppose this is not the case, i.e., $a_j < b_j$ (as $a_j \neq b_j$), then there exists $a_1 <... a_j < t < b_j ... < b_p$. In particular, $|A^c \cap (\ell_k+1,t]| - |A\cap (\ell_k+1,t]| \leq (j-1) - j = -1$ contradicting~\eqref{eq:claimed ineq}. 

To prove the second statement, we can simply extend $f_A$ by setting $\tilde{f}_A(t) = f_A(t-2n)$ for $t\geq 2n$ (note that $f_A(2n) = f_A(0) = 0$) and argue as before. 
\end{proof}

We can now modify $A$ by exchanging each maximum point of $f_A$ (which belong to $A$) with its consecutive point (which belong to $A^c$). Precisely, let $A \subset \{1,...,2n\}$ with $|A| = n$, and let $\ell_1<\ell_2 ... < \ell_m \in \{1,...,2n\}$ be the ordered integer maximum points of $f_A$. Without loss of generality, we can assume that $\max f_A \geq 1$, as otherwise we can work with $f_{A^c}$ instead. In particular, we can assume $\ell_m < 2n$. We now define $B(A)$ as
\begin{align}
    B(A) := A\cup \{\ell_j+1\}_{1\leq j \leq m} \setminus \{\ell_j\}_{1\leq j\leq m}. \label{eq:B def}
\end{align}
The next lemma then shows that the cost of $B(A)$ is not larger than the cost of $A$. A visual example of $A$, $B(A)$, $f_A$ and $f_{B(A)}$ is provided in Figure~\ref{fig:swap}.

\begin{lemma}[Swapping maximum points]\label{lem:swap} Let $A$ be as before and $B(A)$ defined via~\eqref{eq:B def}. Suppose that $\mathrm{Osc}(f_A) \geq 2$. Then we have $\mathrm{Osc}(f_{B(A)}) < \mathrm{Osc}(f_A)$. Moreover,  for any $x_1\leq x_2 ... \leq x_{2n} \in I$, we have 
\begin{align}
    c_n(x_{B(A)}) + c_n(x_{B(A)^c}) \leq c_n(x_A) + c_n(x_{A^c}). \label{eq:reducing cost}
\end{align}
Moreover, if $w$ is strictly well-ordering and $c_n(x_{B(A)}) + c_n(x_{B(A)^c}) < \infty$, then equality holds if and only if 
\begin{align*}
    \sum_{k\in A} \delta_{x_k} = \sum_{k\in B(A)} \delta_{x_k} \quad\mbox{or}\quad \sum_{k\in A} \delta_{x_k} = \sum_{k \in B(A)^c} \delta_{x_k}.
\end{align*}
\end{lemma}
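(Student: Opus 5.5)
The plan is to prove the two assertions separately: first the oscillation decrease, by reading off the effect of the swap on $f_A$ directly, and then the cost inequality~\eqref{eq:reducing cost}, by splitting the cost difference into four-point comparisons organized by the bijection $\sigma$ of Lemma~\ref{lem:maximum points}.

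\textbf{Oscillation.} Let $M=\max f_A\geq 1$. Swapping $\ell_j\in A$ with $\ell_j+1\in A^c$ changes $\mu_A$ only at these two integers. By Lemma~\ref{lem:elementary}, $f_{B(A)}$ therefore agrees with $f_A$ except on $[\ell_j,\ell_j+1)$, where its value drops from $M$ to $M-2$. The neighbours satisfy $f_A(\ell_j-1)=f_A(\ell_j+1)=M-1$, so the jump structure is preserved. All global maxima are removed, so $\max f_{B(A)}=M-1$. Since $\mathrm{Osc}(f_A)\geq 2$, we have $\min f_A\leq M-2$, so the new value $M-2$ does not create a lower minimum and $\min f_{B(A)}=\min f_A$. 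Hence the oscillation drops by exactly $1$.

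\textbf{Cost inequality.} Write $L=\{\ell_j\}_{j\le m}$ and $L+1=\{\ell_j+1\}_{j\le m}$. I expand $c_n(x_A)+c_n(x_{A^c})$ as twice the sum of $w$ over unordered pairs lying in the same block, and compare with the corresponding sum for $B(A)$.
\begin{itemize}
\item Pairs inside $A\setminus L$ or inside $A^c\setminus(L+1)$ are unchanged.
\item Pairs inside $L$ and pairs inside $L+1$ merely trade blocks. Their total $\sum_{i\neq j}[w(x_{\ell_i},x_{\ell_j})+w(x_{\ell_i+1},x_{\ell_j+1})]$ is unchanged.
\item The remaining cross terms involve one swapped index and one $k\in A^c\setminus(L+1)$ or $k'\in A\setminus L$. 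I pair each $k$ with $k'=\sigma(k)$ using the bijection of Lemma~\ref{lem:maximum points}.
\end{itemize}
With this pairing, the difference (new minus old) becomes
\begin{align*}
\sum_{j=1}^m\sum_{k\in A^c\setminus(L+1)}\Big[w(x_{\ell_j},x_k)+w(x_{\ell_j+1},x_{\sigma(k)})-w(x_{\ell_j},x_{\sigma(k)})-w(x_{\ell_j+1},x_k)\Big].
\end{align*}
Each bracket should then be nonpositive by Definition~\ref{def:well-ordering}, using the ordering information on $\sigma$ from Lemma~\ref{lem:maximum points}. Since $\ell_j,\ell_j+1$ are consecutive and neither $k$ nor $\sigma(k)$ lies between them, there are three cases.
\begin{itemize}
\item If $k<\sigma(k)<\ell_j$, or $\ell_j+1<k<\sigma(k)$, the new pairs are exactly the pairs $(1,3),(2,4)$ of the four ordered points, hence minimal.
\item In the wrap-around region $M$ of Lemma~\ref{lem:maximum points}\ref{it:increasing v2} one can have $k>\ell_m+1$ and $\sigma(k)<\ell_1$. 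Then $\sigma(k)<\ell_j<\ell_j+1<k$, and the new pairs $\{\ell_j,k\},\{\ell_j+1,\sigma(k)\}$ are again $(2,4),(1,3)$.
\end{itemize}
Summing the brackets gives~\eqref{eq:reducing cost}. Infinite values need care: the rearrangement of sums is only used when both sides are finite, and otherwise the inequality is trivial or follows termwise in $\R\cup\{+\infty\}$.

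\textbf{Equality case (main obstacle).} Assume $w$ is strictly well-ordering and the new cost is finite. Equality in~\eqref{eq:reducing cost} forces equality in every bracket, and strict well-ordering then forces coincidences such as $x_{\ell_j}=x_{\ell_j+1}$, or $\{x_k,x_{\ell_j}\}=\{x_{\sigma(k)},x_{\ell_j}\}$ as multisets, etc. The delicate part is turning these coincidences into the global statement that $\sum_{k\in A}\delta_{x_k}$ equals $\sum_{k\in B(A)}\delta_{x_k}$ or $\sum_{k\in B(A)^c}\delta_{x_k}$. I would first try to show that either $x_{\ell_j}=x_{\ell_j+1}$ for all $j$, which gives the first alternative, or $x_k=x_{\sigma(k)}$ for every $k$, which gives the second. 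To get this dichotomy, I would pick one $j$ with $x_{\ell_j}\neq x_{\ell_j+1}$ and propagate its consequences through all pairs $(k,\sigma(k))$ and then back to all other $j'$. The empty case $A^c\setminus(L+1)=\emptyset$ is handled separately and trivially.
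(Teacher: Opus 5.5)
Your proposal is correct and follows the paper's proof essentially step by step: the identity $f_{B(A)}=f_A-2\sum_j\mathbb{1}_{[\ell_j,\ell_j+1)}$ for the oscillation, the reduction of the cost difference via the bijection $\sigma$ of Lemma~\ref{lem:maximum points} to four-point brackets indexed by $(j,k)$, and the same three orderings. The equality case is easier than you expect: in each of the three orderings, strict well-ordering makes a bracket vanish if and only if $x_k=x_{\sigma(k)}$ or $x_{\ell_j}=x_{\ell_j+1}$, so a single $j$ with $x_{\ell_j}\neq x_{\ell_j+1}$ already forces $x_k=x_{\sigma(k)}$ for every $k$, and you do not need to propagate back to the other $j'$.
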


\begin{proof} Let $C \coloneqq A \cap B(A) = A \setminus \{\ell_j\}_{j\leq m}$ and $D = A^c \cap B(A)^c = A^c \setminus \{\ell_j+1\}_{j\leq m}$. Then we have
\begin{align*}
    \mu_{B(A)} = \sum_{j \in C} \delta_j - \sum_{j \in D} \delta_j + \sum_{k=1}^m \delta_{\ell_{k}+1} - \delta_{\ell_k} = \mu_A + 2\sum_{k=1}^m  \delta_{\ell_k+1} -\delta_{\ell_k},
\end{align*}
and therefore
\begin{align*}
    f_{B(A)}(t) = f_A(t) - 2\sum_{k=1}^m \mathbb{1}_{[\ell_j,\ell_{j}+1)}(t).
\end{align*}
Consequently $f_{B(A)}(t) = \max f_A - 2$ for $t\in [\ell_k,\ell_k+1)$, which implies that $\max f_{B(A)} = \max\{ f_A(x) : x \in  \R \setminus \cup_{k=1}^m [\ell_k, \ell_k+1)\} \leq \max f_A -1$. On the other hand, as $f_{B(A)}(t) = f_A(t)$ for $t\in \R\setminus \cup_{k=1}^m [\ell_k \ell_k+1)$, we must have $\min f_{B(A)} = \min \{\min f_A, \max f_A -2\}$. Together, these two observations imply that
\begin{align*}
    \mathrm{Osc}(f_{B(A)}) \leq \max f_A -1 - \min \{ \min f_A, \max f_A-2\} \leq \mathrm{Osc}(f_A),
\end{align*}
with equality with and only if $\mathrm{Osc}(f_A) = 1$ (as $\min f_A \leq \max f_A -1$). This proves the first statement.

For the second statement, we shall use Lemma~\ref{lem:maximum points}. To this end, first notice that
\begin{multline*}
    c_n(x_{A}) + c_n(x_{A^c}) = \sum_{j\neq k \in C} w(x_j,x_k) + \sum_{k\neq j}^{m} w(x_{\ell_j},x_{\ell_k}) + \sum_{j\neq k \in D} w(x_j,x_k) + \sum_{j\neq k}^m w(x_{\ell_j+1},x_{\ell_k+1}) \\
    + 2 \sum_{k=1}^m \left(\sum_{j\in C} w(x_j,x_{\ell_k}) +  \sum_{j\in D} w(x_j,x_{\ell_k+1})\right)
\end{multline*}
and
\begin{multline*}
    c_n(x_{B(A)}) + c_n(x_{B(A)^c}) = \sum_{j\neq k \in C} w(x_j,x_k) + \sum_{k\neq j}^{m} w(x_{\ell_j},x_{\ell_k}) + \sum_{j\neq k \in D} w(x_j,x_k) + \sum_{j\neq k}^m w(x_{\ell_j+1},x_{\ell_k+1})\\
    + 2 \sum_{k=1}^m \left(\sum_{j\in C} w(x_j,x_{\ell_k+1}) +  \sum_{j\in D} w(x_j,x_{\ell_k})\right).
\end{multline*}
Therefore, comparing the two expressions, it suffices to show that
\begin{align*}
    \sum_{j\in C} w(x_j,x_{\ell_k+1}) +  \sum_{j\in D} w(x_j,x_{\ell_k}) \leq \sum_{j\in C} w(x_j,x_{\ell_k}) +  \sum_{j\in D} w(x_j,x_{\ell_k+1}), \quad \mbox{for any $1\leq k \leq m$.}
\end{align*}
For this, we note that the function $\sigma$ from Lemma~\ref{lem:maximum points} is a bijective map from $D = A^c\setminus \{\ell_j+1\}_{j\leq m}$ to $C = A\setminus \{\ell_j\}_{j\leq m}$. We now claim that
\begin{align}
    w(x_{\sigma(j)},x_{\ell_k+1}) + w(x_j,x_{\ell_k}) \leq w(x_{\sigma(j)},x_{\ell_k}) + w(x_j,x_{\ell_k+1}), \quad \mbox{for any $j\in D$ and $1\leq k \leq m$.} \label{eq:goal ineq}
\end{align}
Indeed, if $j> \ell_k+1$, then either $j\in (\ell_{p}+1,\ell_{p+1})$ for some $k\leq p \leq m-1$ or $j\in (\ell_m+1,2n]$. In the first case, we must have $\sigma(j)> j$ by property~\ref{it:increasing v1}. In the second case, property~\ref{it:increasing v2} implies that either $\sigma(j)>j$ or $\sigma(j) < \ell_k$. Either way, since $x_1\leq... \leq x_{2n}$, we must have that either 
\begin{align}
    x_{\ell_k} \leq x_{\ell_k+1} \leq x_j \leq x_{\sigma(j)}\quad \mbox{or} \quad x_{\sigma(j)}\leq x_{\ell_k} \leq x_{\ell_k+1} \leq x_j. \label{eq:correct ordering}
\end{align} 
Therefore, inequality~\eqref{eq:goal ineq} follows because $w$ is well-ordering. Similarly, if $j<\ell_k$, we must have $j<\sigma(j)<\ell_k$, and therefore 
\begin{align}
    x_j\leq x_{\sigma(j)} \leq x_{\ell_k} \leq x_{\ell_{k}+1}. \label{eq:correct ordering 2}
\end{align} Thus inequality~\eqref{eq:goal ineq} follows again from the well-ordering property of $w$.

Let us now carefully look into the equality case in~\eqref{eq:reducing cost} for strictly well-ordering $w$. First, note that equality in~\eqref{eq:reducing cost} holds if and only if equality in~\eqref{eq:goal ineq} holds for every $j$ and $k$. In turn, by inspecting~\eqref{eq:correct ordering} and~\eqref{eq:correct ordering 2} and recalling the definition of strictly well-ordering (see Def.~\ref{def:well-ordering}), we see that equality in~\eqref{eq:goal ineq} holds if and only if $x_j = x_{\sigma(j)}$ or $x_{\ell_k} = x_{\ell_k+1}$ for every $k$ and every $j$. Hence, for equality to hold in~\eqref{eq:reducing cost} we must have either $x_{\ell_k} = x_{\ell_k+1}$ for every $k$, or $x_j = x_{\sigma(j)}$ for every $j$. In the first case we have
\begin{align*}
    \sum_{k\in A} \delta_{x_k} = \sum_{k\in C} \delta_{x_k} + \sum_{k=1}^m \delta_{x_{\ell_k}} = \sum_{k\in C} \delta_{x_k} + \sum_{k=1}^m \delta_{x_{\ell_k+1}} = \sum_{k\in B(A)} \delta_{x_k}.
\end{align*}
In the second case, we have
\begin{align*}
    \sum_{k\in A} \delta_{x_k} = \sum_{k \in C} \delta_{x_k} + \sum_{k=1}^m \delta_{x_{\ell_k}} = \sum_{k\in D} \delta_{x_{\sigma(k)}} + \sum_{k=1}^m \delta_{x_k} = \sum_{k\in D} \delta_{x_k} + \sum_{k=1}^m \delta_{x_{\ell_k}} = \sum_{k\in B(A)^c} \delta_{x_k}.
\end{align*}
This completes the proof.
\end{proof}

We can now complete the proof of Proposition~\ref{prop:well-ordering}. 

\begin{proof}[Proof of Proposition~\ref{prop:well-ordering}] Let $A \subset \{1,...,2n\}$ and $|A| = n$. Let $f_A$ be defined as before. Suppose that $\mathrm{Osc}(f_A) > 1$, or equivalently (see property~\ref{it:oscillation property}), $A_e \neq A\neq A_o$. Then by Lemma~\ref{lem:swap}, we can construct $A_1 \coloneqq B(A)$ such that $\mathrm{Osc}(f_{B(A)}) < \mathrm{Osc}(f_A)$. Moreover, by~\eqref{eq:reducing cost}, the cost associated to the configuration of $A_1$ is smaller or equal than the cost of $A$. If $\mathrm{Osc}(f_{A_1}) = 1$, then by~\ref{it:oscillation property} we must have $A_1 = A_o$ or $B(A) = A_e$. Otherwise we can keep repeating the previous step, i.e., setting $A_k = B(A_{k-1})$, until we obtain $\mathrm{Osc}(f_{A_k}) = 1$, and therefore $A_k = A_e$ or $A_k = A_o$. As the cost does not increase at each iteration, we conclude that
\begin{align*}
    c_n(x_{A_e}) + c_n(x_{A_o}) \leq c_n(x_A) + c_n(x_{A^c}).
\end{align*}
This iterative procedure to reduce the cost is illustrated in Figure~\ref{fig:swap}. As $A$ was arbitrary, we conclude that~\eqref{eq:n well-ordering} holds. The statement about the equality case for strictly well-ordering $w$ follows from the corresponding equality statement in Lemma~\ref{lem:swap}. 
\end{proof}

\begin{figure}[ht!]
    \centering
    \includegraphics[scale=0.37]{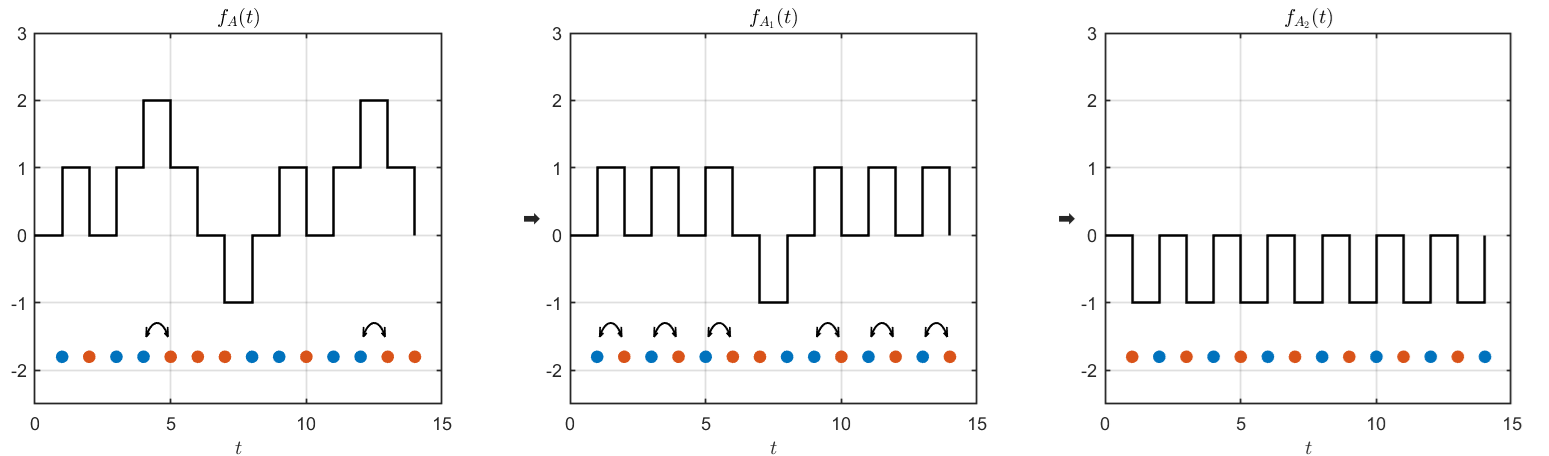}
    \caption{Visual illustration of the swapping procedure to reduce the cost in the case $n=7$ and $A = \{1, 3, 4, 8, 9, 11, 12\}$. Here $A_1 = B(A)$ and $A_2 = B(A_1)$. Moreover, the blue markers represent the elements in $A$, $A_1$ and $A_2$, while the red markers represent the elements in the complementary sets.}
    
 \label{fig:swap}
\end{figure}
\subsection{Sufficient conditions} We now show how the Seidl conjecture follows from Proposition~\ref{prop:well-ordering}. For this, we shall use the following lemma, which is a restatement of classical results in optimal transport. As the assumptions are rather different from previous works \cite{Amb00,Vil03}, we briefly sketch the proof below. 
\begin{lemma}[Two marginal case]\label{lem:strict 2 marginal}
Let $w: I \times I\rightarrow \R \cup{+\infty}$ be well-ordering, and $\rho_1, \rho_2 \in \mathcal{P}(I)$ be non-atomic measures with $\mathrm{supp}(\rho_j) = I_j = [a_j,b_j]$ satisfying $b_1 \leq a_2$. Then there exists a unique non-decreasing map (up to $\rho_1$-null sets) $T:I_1\rightarrow I_2$ such that $T^\# \rho_1 = \rho_2$. Moreover, the plan $\gamma = (\mathrm{id}, T)^\# \rho_1$ is an optimizer of
\begin{align*}
    \min_{\gamma \in \Pi(\rho_1,\rho_2)} \int_{I_2} w(x,y) \mathrm{d} \gamma(x,y).
\end{align*}
In addition, if $w$ is strictly well-ordering, then this plan is the unique optimizer.
\end{lemma}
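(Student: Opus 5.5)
Existence and uniqueness of the monotone map is classical. Set $F_j(t)=\rho_j([a_j,t])$, which is continuous because $\rho_j$ is non-atomic, and let $T = F_2^{-1}\circ F_1$, where $F_2^{-1}$ is the generalized (quantile) inverse. Then $T$ is non-decreasing and satisfies $T^\#\rho_1=\rho_2$, since $F_1^\#\rho_1$ is Lebesgue measure on $[0,1]$. Uniqueness up to $\rho_1$-null sets follows from the standard argument. If $S$ is non-decreasing with $S^\#\rho_1=\rho_2$, then $\rho_2([a_2,S(x)))\le F_1(x)\le \rho_2([a_2,S(x)])$ for every $x$. This pins down $S(x)=T(x)$ except at points where $F_2^{-1}$ jumps, and the set of such points is $\rho_1$-negligible.

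For optimality I would use $c$-cyclical monotonicity together with a stability argument, rather than duality, because $w$ may take the value $+\infty$. The key observation is that $\mathrm{supp}(\gamma)$ with $\gamma=(\mathrm{id},T)^\#\rho_1$ is a monotone set: for $(x,y),(x',y')\in\mathrm{supp}\,\gamma$ with $x<x'$ we have $y\le y'$. I would first prove the claim when both marginals are finite sums of equal-mass Diracs (more precisely, uniform measures on $N$ points each). In that setting the problem reduces to an assignment problem, and I would show the sorted matching is optimal by repeatedly uncrossing pairs. Take two pairs $(x,y),(x',y')$ with $x<x'$, $y>y'$. Since $b_1\le a_2$, all four points are ordered as $x\le x'\le y'\le y$. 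In the notation of Definition~\ref{def:well-ordering}, with $x_1=x$, $x_2=x'$, $x_3=y'$, $x_4=y$, the crossing pairs are $(x_1,x_4),(x_2,x_3)$ and the sorted pairs are $(x_1,x_3),(x_2,x_4)$. Well-ordering gives $w(x_1,x_3)+w(x_2,x_4)\le w(x_1,x_4)+w(x_2,x_3)$, so uncrossing does not increase the cost. Since the number of inversions strictly decreases, the process terminates at the sorted matching.

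For general $\rho_1,\rho_2$ I would approximate by quantile discretizations $\rho_j^N=\frac1N\sum_k\delta_{F_j^{-1}(k/N)}$ and let $\gamma^N$ be the sorted matching, which converges weakly to $\gamma$. Any competitor $\eta\in\Pi(\rho_1,\rho_2)$ with finite cost can be approximated by discrete plans $\eta^N\in\Pi(\rho_1^N,\rho_2^N)$ whose cost converges to $\int w\,d\eta$. One way to build these is to use block approximations and then a rounding (Birkhoff) step, relying on the continuity of $w$ on the compact set $I\times I$ in the extended topology. Passing to the limit with lower semicontinuity of $\gamma\mapsto\int w\,d\gamma$ gives $\int w\,d\gamma\le\liminf\int w\,d\gamma^N\le\lim\int w\,d\eta^N=\int w\,d\eta$. \textbf{This approximation step is the main obstacle}, because $w$ can be infinite: the point $x=y=b_1=a_2$ is the only place where the two supports meet, and $w$ may blow up there. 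I would handle it by truncating $w$ to $\min(w,M)$. The truncated cost is continuous, bounded, and still well-ordering; the last point follows because a pointwise minimum preserves the inequality whenever the relevant sums stay finite, though this needs to be checked. Applying the argument to the truncated cost and letting $M\to\infty$ by monotone convergence completes the proof.

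For uniqueness under strict well-ordering, let $\eta$ be any optimal plan with finite cost. Its support is $c$-cyclically monotone, a fact I take from classical theory adapted to lower semicontinuous costs with values in $[0,\infty]$. If the support contained a crossing pair $(x,y),(x',y')$ with $x<x'$ and $y>y'$ where the costs are finite, strict well-ordering would make the uncrossed configuration strictly cheaper unless points coincide, and coincidence is excluded by $x<x'$ and $y'<y$. This contradicts $c$-monotonicity. Hence $\mathrm{supp}\,\eta$ is a monotone set. A monotone set is the graph of a non-decreasing map outside countably many vertical segments, and these carry no $\rho_1$-mass because $\rho_1$ is non-atomic. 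So $\eta=(\mathrm{id},S)^\#\rho_1$ for some non-decreasing $S$ with $S^\#\rho_1=\rho_2$, and by the uniqueness part $S=T$.
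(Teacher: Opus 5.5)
\textbf{The gap is the truncation step.} You claim $w^M=\min\{w,M\}$ is still well-ordering, but this is false. Truncation destroys the Monge-type inequality your uncrossing argument relies on. For a translation-invariant convex decreasing profile, $\min\{w,M\}$ even acquires a concave kink at the truncation level.

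Here is a concrete counterexample. Take $w(x,y)=|x-y|^{-1}$ on $I=[-1,1]$, which is well-ordering, with $I_1=[-1,0]$, $I_2=[0,1]$ and $M=2$. Use the points $x=-\tfrac12$, $x'=-\tfrac1{200}$, $y'=\tfrac1{200}$, $y=\tfrac12$. Then
\[
w^M(x,y')+w^M(x',y)=\tfrac{400}{101}>3=w^M(x,y)+w^M(x',y').
\]
Now let $\rho_1,\rho_2$ be non-atomic with full support, putting almost all their mass in equal halves near $\{x,x'\}$ and near $\{y',y\}$ respectively. The monotone plan then has $w^M$-cost close to $\tfrac{200}{101}\approx 1.98$, while the crossing plan has cost close to $\tfrac32$.

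So the intermediate inequality $\int w^M\,\mathrm d\gamma\le\int w^M\,\mathrm d\eta$ is not merely unproven but false. Your optimality argument therefore breaks exactly when $w$ takes infinite values on $I_1\times I_2$, for instance when $b_1=a_2$ and $w$ has a diagonal singularity. This is precisely the situation in which the lemma is used in the proof of Theorem~\ref{thm:main}, where consecutive intervals $[d_i^-,d_i^+]$ may touch. Your discretization/uncrossing/Birkhoff argument is correct when $w$ is finite, hence bounded, on $I_1\times I_2$.

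\textbf{The fix uses what you already have.} Your uniqueness argument is essentially the paper's, and on its own it proves optimality in the strictly well-ordering case. A minimizer exists by lower semicontinuity and weak compactness of $\Pi(\rho_1,\rho_2)$. If its cost is infinite, $\gamma$ is trivially optimal; otherwise $c$-cyclical monotonicity forces it to equal $(\mathrm{id},T)^\#\rho_1$.

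For a general well-ordering $w$, perturb the cost instead of truncating it. Take a bounded continuous strictly well-ordering $w_0$, for example $w_0(x,y)=e^{-|x-y|}$. For $x_1\le x_2\le x_3\le x_4$ with consecutive gaps $p,q,r$, the two defects are $(1-e^{-q})(e^{-p}+e^{-r})$ and $e^{-q}(1-e^{-p})(1-e^{-r})$. These vanish only in the coincidence cases allowed by Definition~\ref{def:well-ordering}. Hence $w+\epsilon w_0$ is strictly well-ordering, and $\gamma$ does not depend on $\epsilon$. Letting $\epsilon\downarrow 0$ in $\int(w+\epsilon w_0)\,\mathrm d\gamma\le\int(w+\epsilon w_0)\,\mathrm d\eta$ gives optimality for $w$. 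This is the approximation the paper invokes, and it makes the discretization unnecessary.

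A smaller point concerns your uniqueness step, which only excludes crossing pairs ``where the costs are finite''. Support points with infinite cost, such as $(b_1,a_2)$, are limits of support points with finite cost, because $\eta(\{w=\infty\})=0$. Since crossing is an open condition, such pairs are excluded as well.
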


\begin{proof} The fact that there exists a unique (up to $\rho_1$-null sets) non-decreasing transport map $T$ is rather classical, see e.g. \cite[Theorem 3.1]{Amb00}, \cite[Remarks 2.19]{Vil03}, \cite[Remark 1.23]{San15}. Hence, to show optimality and uniqueness in the strictly well-ordering case, it suffices to show that any optimal plan $\gamma$ is concentrated on the graph of a increasing transport map. The optimality in the (not necessarily strictly) well-ordering case then follows by approximation. 

So let $\gamma$ be an optimizer and let $(x,y), (x',y') \in \mathrm{supp}(\gamma)$ with $x\leq x'$. Since $\mathrm{supp}(\gamma) \subset [a_1,b_1]\times [a_2,b_2]$ and $b_1\leq a_2$, we have $x<y$ and $x'< y'$ (except when $x=x' = b_1$ in the special case $a_2 = b_1$, which is however irrelevant as $\rho_1$, and consequently $\gamma$, is non-atomic). Since $\gamma$ is an optimizer, its support must be $w$-cyclically monotone, which implies that $x\leq x'\leq y \leq y'$ by the definition of strictly well-ordering. In particular, if $x=x'$ we have $y=y'$, and therefore $\mathrm{supp}(\gamma) \subset \{(x,\phi(x)): x\in [a_1,b_1] \mbox{ for some function $\phi$} \}$. Moreover, the previous inequality also shows that $y' =\phi(x') \geq y =\phi(x)$ for $x'\geq x$, which implies that $\phi$ is non-decreasing. As $\pi_1^\# \gamma =\rho_1$, one can then use the desintegration theorem \cite[Theorem 2.28]{AFP00} to prove that $\phi = T$ $\rho_1$-a.e. (see \cite[Proposition 2.1]{Amb00}), which completes the proof.
\end{proof}

\begin{proof}[Proof of first direction in Theorem~\ref{thm:main}] The proof here is the same as in \cite[Theorem 1.1]{CDD15}. For the sake of completeness, we briefly outline the main steps here. 

Let $\gamma$ be a symmetric optimizer of $F_{\rm OT}(\rho)$. As any optimizer is supported away from the coalescence set $\{x\in I_n: x_i = x_j \quad \mbox{for some $i\neq j$}\}$ (see \cite[Corollary 2.6]{CDD15}\footnote{or Lemma~\ref{lem:support} under Assumption~\ref{assump:interaction} on the interaction} for a proof of this statement), it follows that the restriction $\gamma_\ast \coloneqq \gamma \rvert_{O}$ of $\gamma$ to the simplex
\begin{align*}
    O = \{x \in I_n: x_1<x_2<... x_n \in I\}
\end{align*}
has measure $1/n!$ and $\gamma = \sum_{\sigma \in \mathcal{S}_N} \sigma^\# \gamma_\ast$, where $\sigma^\#$ denotes the push-forward via a coordinate permutation $\sigma\in \mathcal{S}_n$. As the support of $\gamma_\ast$ is $c$-monotone (by optimality), we can use Proposition~\ref{prop:well-ordering} to show that 
\begin{align*}
     \quad d_i^- \coloneqq \min \{ x_i : x=(x_1,...,x_n) \in \mathrm{supp}(\gamma_\ast) \}\quad \mbox{and}\quad d_i^+ \coloneqq \max \{ x_i : x= (x_1,...,x_n) \in \mathrm{supp}(\gamma_\ast)\}
\end{align*}
satisfy $d_i^+ \leq d_{i+1}^-$ for $i=1,..., n-1$. Consequently, $\mathrm{supp}(\gamma_\ast) \subset \prod_{j=1}^n [d_j^-, d_j^+]$. One can then use that the sets $[d_i^-,d_i^+]$ are essentially disjoint and $\rho$ is non-atomic to prove that the restriction $\rho_i \coloneqq \rho\rvert_{[d_i^-,d_i^+]}$ satisfies 
\begin{align*}
    \int_{d_i^-}^{d_i^+} \rho = \frac{1}{n}\quad \mbox{and}\quad (n-1)! \pi_i^\#\gamma_\ast = \rho_i \mathrm{d} x,
\end{align*}
where $\pi_i : I_n \rightarrow I$ denotes the projection on the $i^{th}$ coordinate. To complete the proof, the idea now is to look at the reduced pair densities $\gamma_{i,j} = (n-1)!(\pi_i, \pi_j)^\# \gamma_\ast$ and use Lemma~\ref{lem:strict 2 marginal} to prove that the cost of $\gamma$ is at least as large as the cost of the Seidl plan. More precisely, we note that
\begin{align*}
    \frac{1}{n} \int_{I_n} c_n(x) \mathrm{d} \gamma(x) = (n-1)! \int_O c_n(x) \mathrm{d} \gamma_\ast =  \sum_{i\neq j} \int_{I\times I} w(x,y) \mathrm{d} \gamma_{i,j}(x,y).
\end{align*}
On the other hand, similar considerations show that the cost of the Seidl plan in~\eqref{eq:optimizer} is given by
\begin{align*}
    \frac{1}{n} \sum_{i\neq j} \int_{I_2} w(x,y) \mathrm{d} (T^{(i)}, T^{(j)})^\# \rho &=\frac{1}{n} \sum_{i \neq j} \int_{I_2} w(x,y) \mathrm{d} (T^{(i)}, T^{(j)})^\# \left(\sum_{k=1}^n \rho_k\right) \\
    &= 2 \sum_{i< j} \int_{I\times I} w(x,y) \mathrm{d} (\mathrm{id}, T^{(j-i)})^\# \rho_i
\end{align*}
where $T^{(j-i)}$ is the $j-i$ self-composition of the Seidl transport map. Note that $\rho = 0$ a.e. in $[d_i^+,d_{i+1}^-]$; therefore, $\rho_i = \rho\rvert_{[d_i,d_{i+1}]}$ as measures and the Seidl map $T$ satisfies $T^\# \rho_i = \rho_{i+1}$. In particular, $(T^{(j-1)})^\#\rho_i = \rho_j$, and, since $T^{(j-1)}$ is non-decreasing in $[d_i^-,d_i^+]$, Lemma~\ref{lem:strict 2 marginal} implies that $\int c_n \mathrm{d} \gamma_\rho \leq \int c_n \mathrm{d} \gamma$ and therefore the Seidl plan is an optimizer. To prove uniqueness in the strictly well-ordering case, one can use the uniqueness in Lemma~\ref{lem:strict 2 marginal} to show that $\gamma_{i,j} =(\mathrm{Id}, T^{(j-i)})^\# \rho_i$, and therefore, $\gamma_\ast$ is supported on the graph of $(\mathrm{id}, T, T^{(2)},..., T^{(n-1)})$. 
\end{proof}
\subsection{Necessary conditions} We now prove the only if part in Theorem~\ref{thm:main}. For this, it suffices to consider the two marginal case. More precisely, recalling that under the assumption that $F_{\rm OT}(\rho)<\infty$ any optimizer must have $c$-cyclically monotone support, the necessity of the well-ordering condition follows from the following simple proposition.

\begin{proposition} Let $x_1<x_2<x_3<x_4 \in I$ for some interval $I \subset \R$, then there exists $\rho \in \mL^1(I)$ such that $(x_1,x_3)$ and $(x_2,x_4) \in \mathrm{supp}(\gamma_\rho)$, where $\gamma_\rho$ is the Seidl plan~\eqref{eq:optimizer}.
\end{proposition}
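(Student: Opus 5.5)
We take $n=2$, so the Seidl plan is $\gamma_\rho = (\mathrm{id},T)^\#\rho$, and we build $\rho$ explicitly. With two marginals there is one cut point $d_1$ satisfying $\rho([a,d_1]) = \rho([d_1,b]) = 1/2$. The map $T$ sends $[a,d_1]$ increasingly onto $[d_1,b]$ and $[d_1,b]$ increasingly onto $[a,d_1]$. So we need $\rho$ for which the cut point separates $\{x_1,x_2\}$ from $\{x_3,x_4\}$ and $T(x_1)=x_3$, $T(x_2)=x_4$, with these points lying in the support.

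\textbf{Construction.} Fix a small $\eta>0$ with $\eta < \min\{x_2-x_1,\, x_3-x_2,\, x_4-x_3\}/2$, and set
\begin{align*}
    \rho = \frac{1}{4\eta}\left(\mathbb{1}_{[x_1,x_1+\eta]} + \mathbb{1}_{[x_2-\eta,x_2]} + \mathbb{1}_{[x_3,x_3+\eta]} + \mathbb{1}_{[x_4-\eta,x_4]}\right)\cdot \tfrac{1}{2} \cdot 2 ,
\end{align*}
that is, mass $1/4$ spread uniformly on each of the four small intervals. This $\rho$ is in $\mL^1(I)$ and non-atomic. If $\rho$ must have full support or be strictly positive, add a tiny uniform background and adjust as described below.

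\textbf{Computing $T$.} The cumulative distribution reaches $1/2$ exactly on the gap $[x_2, x_3]$. We may take $d_1\in[x_2,x_3]$, for instance $d_1=x_2$; the choice is irrelevant up to null sets. The monotone rearrangement matches quantiles: the quantile at level $0$ of $\rho\rvert_{[a,d_1]}$, namely $x_1$, goes to the quantile at level $0$ of $\rho\rvert_{[d_1,b]}$, namely $x_3$. The top quantile $x_2$ goes to the top quantile $x_4$. More precisely, $T(x_1+s)=x_3+s$ and $T(x_2-s)=x_4-s$ for $s\in[0,\eta]$. Hence the support of $\gamma_\rho$, which is the closure of the graph of $T$ over $\mathrm{supp}\rho$, contains the segments $\{(x_1+s,x_3+s)\}$ and $\{(x_2-s,x_4-s)\}$. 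In particular it contains their endpoints $(x_1,x_3)$ and $(x_2,x_4)$. The support is closed, so including these endpoints is legitimate even though $T$ is defined only up to null sets.

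\textbf{Main obstacle and variants.} The only delicate point is handling supports and null-set ambiguity, and the closed support resolves both. If the class of $\rho$ considered requires positivity on all of $I$, replace $\rho$ by $(1-\kappa)\rho + \kappa/|I|$ and symmetrize the background so that the median still lies between $x_2$ and $x_3$. Then check by continuity of the quantile functions that $T(x_1)\to x_3$ and $T(x_2)\to x_4$. To hit these values exactly, shift the blocks slightly so that the quantile levels of $x_1,x_2$ in the first half equal those of $x_3,x_4$ in the second half. This is a one-parameter adjustment for each point, solvable by the intermediate value theorem.

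\textbf{Conclusion.} Given this proposition, necessity follows directly. If $\gamma_\rho$ is optimal and the cost is finite, its support is $c$-cyclically monotone. Testing cyclic monotonicity on the pairs $(x_1,x_3)$ and $(x_2,x_4)$ against all rematchings gives exactly the minimality required in Definition~\ref{def:well-ordering}. If the rematching is the one swapping within each pair, the inequality is trivial because $w$ is symmetric.
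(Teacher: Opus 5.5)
Your proof is correct and is essentially the paper's argument: with $n=2$ you choose $\rho$ so that the monotone matching between the two halves sends $x_1\mapsto x_3$ and $x_2\mapsto x_4$, and your four-block density in fact satisfies exactly the paper's conditions $\int_{x_1}^{x_3}\rho=\int_{x_2}^{x_4}\rho=\tfrac12$. The paper instead takes $\rho$ continuous and strictly positive, so that $T(x)$ is characterized pointwise by $\int_x^{T(x)}\rho=\tfrac12$, whereas you read off the quantiles explicitly and capture the endpoints through closedness of the support, which makes your positivity-adjustment paragraph unnecessary.

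One caution on your closing remark, which goes beyond the statement. Cyclical monotonicity applied to $(x_1,x_3)$ and $(x_2,x_4)$ alone only yields the comparison with $w(x_1,x_4)+w(x_2,x_3)$. For the comparison with $w(x_1,x_2)+w(x_3,x_4)$ you also need $(x_4,x_2)\in\mathrm{supp}(\gamma_\rho)$, which your construction does provide since $T(x_4-s)=x_2-s$ on the second half.
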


\begin{proof} Let $I=[a,b]$, then by construction of $\gamma_\rho$, we have $\mathrm{supp}(\gamma) = \{(x,T(x)): x \in \mathrm{supp}(\rho) \}$, where $T$ is the unique (up to $\rho$ null sets) transport map such that $T^\# \rho = \rho$ and $T$ is monotone on the intervals $[a,d_1]$ and $[d_1,b]$, where $d_1$ satisfies $\rho([a,d_1]) = 1/2$. So let us choose $\rho$ such that 
\begin{align}
    \rho \in C(I;\R_+),\quad \rho(x) >0 \quad \mbox{for any $x\in I$, and}\quad \int_{x_1}^{x_3}\rho(y) \mathrm{d} y = \int_{x_2}^{x_4} \rho(y) \mathrm{d} y = \frac12 . \label{eq:some properties}
\end{align}
Such a $\rho$ is easy to construct, e.g., by gluing piecewise linear functions together.
Since $\rho$ is strictly positive, for any $x\in I$ there exists at most one point $x' >x$ such that $\int_x^{x'} \rho(y) \mathrm{d} y = 1/2$. In particular, it follows from~\eqref{eq:some properties} that the unique point $x_m \in I$ such that $\int_a^{x_m} \rho = 1/2$ satisfies $x_2 \leq x_m \leq x_3$. Hence, for any $x\in [a,x_m)$, $T(x)$ is the unique point such that $\int_x^{T(x)} \rho = 1/2$. Therefore $T(x_1) = x_3$ and $T(x_2) = x_4$, which implies that $(x_1,x_3),(x_2,x_4) \in \mathrm{supp} \gamma_\rho$.
\end{proof}

\section{Strictly correlated electrons in a quantum ring}
\label{sec:SCE limit}

In this section we shall prove Theorem~\ref{thm:SCE limit}. The proof relies on two main steps. In the first step, we show that any optimal plan is supported away from the periodic diagonal set $D_0$ introduced in~\eqref{eq:periodic diagonal}. The second step in the proof of Theorem~\ref{thm:SCE limit} consists on extending the regularization procedure introduced in \cite{Lew18} to the periodic setting.

\subsection{Off-diagonal support} \label{sec:support} We now show that any optimal plan for $F_{\rm OT}(\rho)$ is supported away from the diagonal. For this, we closely follow the strategy from the previous works \cite{BCD18,CDS19}. To keep the paper concise, we refer rather freely to these works throughout our proofs.

Let $w$ be a function satisfying Assumption~\ref{assump:interaction}, we start by introducing the auxiliary functions
\begin{align}
    m(t) \coloneqq \inf \{ w(x,y) : |x-y|_{\mathbb{T}} \leq t \} \quad \mbox{and}\quad M(t) \coloneqq \sup\{w(x,y) : |x-y|_{\mathbb{T}} \geq t \}. \label{eq:m and M functions}
\end{align}
Note that by the continuity of $w$ and compactness of $I\times I$, the inf respectively sup above are actually attained. Moreover, as $w$ blows-up only on the diagonal set $D_0$ (by assumption), the functions $m:(0,\pi] \rightarrow \R_+$ and $M:(0,\pi]\rightarrow \R_+$ are both non-increasing and satisfy the strong repulsion condition
\begin{align}
    \lim_{t \downarrow 0} M(t) \geq \lim_{t\downarrow 0} m(t) = \infty . \label{eq:strong repulsion}
\end{align}
Furthermore, by construction, we have
\begin{align}
    m(|x-y|_{\mathbb{T}}) \leq w(x,y) \leq M(|x-y|_{\mathbb{T}}),\quad \mbox{for $x,y\in [0,2\pi]$.} \label{eq:interaction bounds}
\end{align}
Thus, $m$ and $M$ are precisely the periodic analogues of the functions $m$ and $M$ appearing in \cite{CDS19}. Moreover, let us also define the (periodic) concentration of $\rho$ as
\begin{align}
    \kappa(\rho,r) = \sup_{x\in I} \int_{|x-y|_{\mathbb{T}}<r} \rho(y) \mathrm{d}y, \label{eq:concentration} 
\end{align}
and introduce the following intrinsic thickening of the set of coalescence points
\begin{align}
    D_w^h \coloneqq \{ x =(x_1,...,x_n) \in I_n: w(x_j,x_k) > h \quad \mbox{for some $j\neq k$} \}. \label{eq:intrisic diagonal}
\end{align}
Thus, following the steps in the proof of \cite[Theorem 1.3]{CDS19}, we obtain the following result.

\begin{lemma}[Support away from diagonal] \label{lem:support} Let $\rho \in \mathcal{P}(I)$ and $r>0$. Let $\gamma \in \Pi(\rho)$ be an optimal plan for~\eqref{eq:MMOT}. Then the following holds. If $\kappa(\rho;r)<\frac{1}{n}$, then whenever we have
\begin{align}
    h > 2(n-1) M(\beta/2), \quad m(\beta) > \frac{F_{\rm OT}(\rho)}{1- n \kappa(\rho;r)},\quad \beta/2 \leq r, \label{eq:conditions for support}
\end{align}
it follows that $\gamma(D_w^h) = 0$, where $D^h_w$ is defined in~\eqref{eq:intrisic diagonal}.
\end{lemma}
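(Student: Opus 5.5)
We follow the argument of \cite[Theorem 1.3]{CDS19}, replacing the Euclidean distance by the torus distance $|\cdot|_{\T}$ throughout. The argument is by contradiction, via $c$-cyclical monotonicity of $\mathrm{supp}(\gamma)$. Since $\gamma$ is optimal and $F_{\rm OT}(\rho)<\infty$, its support is $c_n$-cyclically monotone; in particular, for any $x,y\in\mathrm{supp}(\gamma)$ and any exchange of a single coordinate,
\begin{align*}
c_n(x)+c_n(y)\le c_n(x_1,\dots,y_k,\dots,x_n)+c_n(y_1,\dots,x_k,\dots,y_n).
\end{align*}
Assume $\gamma(D^h_w)>0$ and pick $x\in\mathrm{supp}(\gamma)\cap D^h_w$ (the set is open by continuity of $w$, so the intersection is nonempty). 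Then $w(x_j,x_k)>h$ for some $j\neq k$. Since $h>2(n-1)M(\beta/2)\geq M(\beta/2)$ and $w\le M(|\cdot|_{\T})$ with $M$ non-increasing, we get $|x_j-x_k|_{\T}<\beta/2$.

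The next step is to find a partner configuration $y\in\mathrm{supp}(\gamma)$ that is far from $x_j$ in every coordinate. Using the marginal condition,
\begin{align*}
\gamma\bigl(\{y: |y_i-x_l|_{\T}<\beta \text{ for some } i,l\}\bigr)\le \sum_{i}\sum_{l\neq j,k\text{ or } l=j}\rho(B_\beta(x_l))\,,
\end{align*}
which is too crude. Following \cite{CDS19}, we instead control only the coordinate $x_j$ and use the support of $\gamma$ near $x$. We first show that the set
\begin{align*}
G\coloneqq\{y\in \mathrm{supp}(\gamma): |y_i-x_l|_{\T}\ge \beta/2\ \text{for all } i \text{ and all } l\neq k\}
\end{align*}
has positive measure. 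The condition $\beta/2\le r$ gives $\rho(\{|\cdot-x_l|_{\T}<\beta/2\})\le\kappa(\rho;r)$, and this is where $\kappa(\rho;r)<1/n$ enters. To make this work we also need the observation that $\gamma$-a.e.\ $y$ satisfies $c_n(y)<\infty$. More precisely, Markov's inequality applied to the set $\{y:\ \min_{i\neq i'} w(y_i,y_{i'})\ge m(\beta)\}$, together with $m(\beta)(1-n\kappa)>F_{\rm OT}(\rho)$, yields a $y\in\mathrm{supp}(\gamma)$ with all coordinates $\beta$-separated from one another (in $\T$) and avoiding the $\beta/2$-neighbourhood of $x_j$. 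This counting step reproduces \cite[Lemma 3.3 / Thm 1.3]{CDS19} essentially verbatim, with balls taken in $\T$. Some care is needed because the periodic ball of radius $\beta/2\le r\le\pi$ is still an arc of length $\beta$.

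Having found $y$, swap $x_j$ with a coordinate $y_i$ of $y$ chosen so that $|y_i-x_l|_{\T}\ge\beta/2$ for all $l$. Such an $i$ exists by pigeonhole: the $n$ coordinates of $y$ are pairwise $\beta$-separated, so each $\beta/2$-arc around one of the $n-1$ points $x_l$, $l\ne j$, contains at most one $y_i$. After the swap, the new configuration $x'$ has all interactions involving $y_i$ bounded by $M(\beta/2)$, so
\begin{align*}
c_n(x')\le c_n(x)-2\sum_{l\neq j}w(x_j,x_l)+2(n-1)M(\beta/2)< c_n(x)-2h+2(n-1)M(\beta/2)+\cdots,
\end{align*}
and the gain $2h-2(n-1)M(\beta/2)>0$ is what must beat the possible increase in $c_n(y')$. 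The interactions of $x_j$ with the $y_l$, $l\neq i$, are bounded by $M(\beta/2)$ by the choice of $y$. Summing the two exchanged configurations, we obtain $c_n(x')+c_n(y')<c_n(x)+c_n(y)$, which contradicts $c$-monotonicity.

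The main obstacle is the bookkeeping in this last estimate. We must check that the constants in \eqref{eq:conditions for support} (the factor $2(n-1)$ and the threshold $m(\beta)$) are exactly sufficient. If the single-coordinate bound is off by a factor, the fallback is to use that $w\ge0$: dropping the nonnegative removed terms, we only need $2(n-1)M(\beta/2)<h$ for the new terms on both sides.
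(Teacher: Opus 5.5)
Your approach is the paper's. The paper's proof simply invokes \cite[Theorem 1.3]{CDS19} with Euclidean balls replaced by arcs in $\T$, and your final chain is that argument: Markov on the cost, a union bound over the $n$ coordinates for the $\beta/2$-arc around $x_j$ (using $\beta/2\le r$), pigeonhole for the other $n-1$ points, and one swap.

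The bookkeeping you flag as the main obstacle closes with exactly the stated constants, so the fallback is unnecessary. The swap removes at least $2w(x_j,x_k)>2h$ from $c_n(x)$. It adds at most $2(n-1)M(\beta/2)$ to $c_n(x')$, because $|y_i-x_l|_{\T}\ge\beta/2$ for all $l$. It adds at most $2(n-1)M(\beta/2)$ to $c_n(y')$, because $|y_l-x_j|_{\T}\ge\beta/2$ for all $l$ and the terms removed from $y$ are $\ge 0$. The net change is therefore $<4(n-1)M(\beta/2)-2h<0$, which is precisely the hypothesis $h>2(n-1)M(\beta/2)$.

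Four points in the write-up need fixing.
\begin{enumerate}
\item Delete the set $G$. Its positivity does not follow from $n\kappa(\rho;r)<1$: a union bound over $n$ coordinates and $n-1$ centres only gives $n(n-1)\kappa(\rho;r)$. You never use it anyway, since the pigeonhole step handles the points $x_l$, $l\neq j$.
\item Apply Markov's inequality to $\{c_n\ge m(\beta)\}$, which contains $\{\max_{i\neq i'}w(y_i,y_{i'})\ge m(\beta)\}$. With the minimum, as you wrote it, you do not get pairwise $\beta$-separation of the coordinates of $y$.
\item Swapping $x_j$ with $y_i$ for $i\neq j$ is not a consequence of $c_n$-cyclical monotonicity when $\gamma$ is not symmetric. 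First replace $\gamma$ by its symmetrization $\gamma^{\rm sym}$, which is still optimal and satisfies $\gamma\le n!\,\gamma^{\rm sym}$. It then suffices to prove $\gamma^{\rm sym}(D^h_w)=0$.
\item Pick $x\in\mathrm{supp}(\gamma^{\rm sym})\cap D^h_w$ with $c_n(x)<\infty$. This is possible because $\gamma^{\rm sym}(\{c_n=\infty\})=0$, as $F_{\rm OT}(\rho)<m(\beta)<\infty$. Otherwise the subtraction in your estimate is undefined, and $c_n(x')$ could also be infinite through another colliding pair.
\end{enumerate}
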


\begin{proof} The proof is the same as in \cite[Theorem 1.3, pp. 15-16]{CDS19}. The only difference is that we replace the standard euclidean balls $B(x,\gamma)$ used there by their periodic counterparts $B_{\rm per}(x,\gamma) = \{y : |x-y|_\T < \gamma\}$. We also note that the same proof works in arbitrary dimensions.
\end{proof}

\subsection{The Lewin-Bindini-De Pascale regularization in the periodic setting} \label{sec:Lewin construction}
We now adapt the construction introduced by Lewin \cite{Lew18} to the periodic case. For this, let $\chi \in C^\infty_c(\R;\R_+)$ be a radial function with support on the unit ball and such that $\int \chi^2 = 1$. Set 
\begin{align*}
    \chi_{\eta}(x) \coloneqq \eta^{-\frac12} \chi\left(\frac{x}{\eta}\right)\quad \mbox{and}\quad \chi_{\eta, z}(x) \coloneqq \chi_\eta(x-z) \quad (\eta > 0 , z \in \R).
\end{align*}
For $\rho \in \mH^1_{\rm per}(I)$, we let $\widetilde{\rho} \in \mH^1_{\rm loc}(\R)$ denote the periodization of $\rho$, i.e.,
\begin{align*}
    \widetilde{\rho}(x) =\rho(x\, \mathrm{mod} \,2\pi \Z) = \sum_{\ell\in 2\pi \Z} \rho(x+\ell),
\end{align*}
where in the last expression we set $\rho(x) = 0$ for $x\not \in I$. Then, for $\gamma \in \mathcal{P}(I_n)$ we can define $\Gamma_\eta$ as 
\begin{align}
    \Gamma_\eta = \sum_{k,\ell \in (2\pi\Z)^n}  \iint_{I_n \times \R^n} \sqrt{\rho}^{\otimes n} |\chi_{\eta,z_1+k_1}... \chi_{\eta,z_n+k_n}\rangle \langle\chi_{\eta,z_1+\ell_1}.... \chi_{\eta,z_n+\ell_n}|\sqrt{\rho}^{\otimes n}  \prod_{j=1}^n \frac{\chi_\eta(y_k-z_k)^2}{(\widetilde{\rho} \ast \chi_\eta^2)(z_k)} \mathrm{d} \gamma(y) \mathrm{d} z, \label{eq:regularized DM}
\end{align}
where $\sqrt{\rho}^{\otimes n}$ is the operator of multiplication by $\sqrt{\rho}(x_1)\sqrt{\rho}(x_2)... \sqrt{\rho}(x_n)$, $\mathcal{X}_{\eta,z} \coloneqq |\chi_{\eta,z_1}... \chi_{\eta,z_n}\rangle$ is the Slater determinant
\begin{align*}
\mathcal{X}_{\eta,z}(x_1,...,x_n) = \frac{1}{\sqrt{n!}} \mathrm{det} \begin{pmatrix} \chi_{\eta,z_1}(x_1) &... & \chi_{\eta,z_n}(x_n) \\
\vdots &  \ddots & \vdots\\
 \chi_{\eta,z_1}(x_n) & ... & \chi_{\eta,z_n}(x_n)\end{pmatrix},
\end{align*}
and $P = |\mathcal{X}_{\eta,z}\rangle \langle \mathcal{X}_{\eta,w}|$ is the usual rank one operator defined as $P \Psi = \mathcal{X}_{\eta,z} \inner{\mathcal{X}_{\eta,w},\Psi}$. Note that, since the regularized function
\begin{align*}
    \mathbb{Q}_\eta(z) = \int_{I_n} \prod_{j=1}^n \chi_\eta(y_k-z_k)^2 \mathrm{d} \gamma(y) \in C^\infty_c(\R^n;\R_+)  
\end{align*}
has support in an $\eta$ neighborhood of $I_n$, only finitely many terms in the sum over $(2\pi\Z)^n\times (2\pi\Z)^n$ are non-zero. In particular, the operator $\Gamma_\eta$ in~\eqref{eq:regularized DM} is well-defined. 

A direct calculation then yields the following result.
\begin{proposition}[Density of $\Gamma_\eta$] \label{prop:regularized} Suppose that $\mathrm{supp}(\gamma) \subset I_n \setminus D_\alpha \coloneqq \{x\in I_n: |x_i-x_j|_{\mathbb{T}} \geq \alpha, \quad \mbox{for any } i\neq j \}$ for some $\alpha >0$. Then for any $\eta< \alpha/4$, the regularized density matrix $\Gamma_\eta$ satisfies
\begin{align*}
    \rho_{\Gamma_\eta}(x) = n \rho(x), \quad \mbox{for $x\in I$ and } \quad \mathrm{supp}(\Gamma_\eta(x,x)) \cap I_n \subset I_n \setminus D_{\alpha-4\eta}.
\end{align*}
Moreover, if $\sqrt{\rho} \in \mH^1_{\rm per}(I)$, then $\Gamma_\eta$ is periodic and satisfies the following kinetic energy bound
\begin{align}
    \mathrm{Tr} (-\Delta) \Gamma_\eta = n \left(\int_{I} |\nabla \sqrt{\rho}(x)|^2 \mathrm{d} x + \frac{1}{\eta^2} \int_{\R} |\nabla \chi(x)|^2 \mathrm{d} x\right) .  \label{eq:kinetic energy bound}
\end{align}
Furthermore, the $n$-particle density $\Gamma_\eta(x,x)$ satisfies $\Gamma_\eta(x,x) \rightharpoonup \gamma$ in the weak sense of probability measures.
\end{proposition}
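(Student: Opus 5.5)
The plan is to follow Lewin's direct computation in \cite{Lew18}, with the periodicity handled by first resumming over the lattice translations. For fixed $z\in\R^n$, set $\phi_{z_j}\coloneqq\sum_{k\in2\pi\Z}\chi_{\eta,z_j+k}$. This is a $2\pi$-periodic function, and for $\eta<\pi$ the translates in the sum have pairwise disjoint supports, so $\phi_{z_j}^2=\sum_k\chi_{\eta,z_j+k}^2$. By multilinearity of the determinant, the double sum over $k,\ell\in(2\pi\Z)^n$ in~\eqref{eq:regularized DM} collapses into the single rank-one operator $|\sqrt{\rho}^{\otimes n}\Phi_z\rangle\langle\sqrt{\rho}^{\otimes n}\Phi_z|$. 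Here $\Phi_z$ is the Slater determinant of $\phi_{z_1},\dots,\phi_{z_n}$, restricted to $I_n$. Thus
\begin{align*}
    \Gamma_\eta=\int_{I_n\times\R^n}|\sqrt{\rho}^{\otimes n}\Phi_z\rangle\langle\sqrt{\rho}^{\otimes n}\Phi_z|\prod_{j=1}^n\frac{\chi_\eta(y_j-z_j)^2}{(\widetilde\rho\ast\chi_\eta^2)(z_j)}\,\mathrm{d}\gamma(y)\,\mathrm{d}z .
\end{align*}
Periodicity of $\Gamma_\eta$ is then immediate, since both $\phi_{z_j}$ and $\sqrt{\rho}\in\mH^1_{\rm per}(I)$ are periodic.

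\textbf{Separation.} The weight forces $|y_j-z_j|<\eta$. On $\mathrm{supp}(\gamma)$ we have $|y_i-y_j|_{\T}\ge\alpha$, hence $|z_i-z_j|_{\T}>\alpha-2\eta>2\eta$ whenever $\eta<\alpha/4$. Consequently the periodized functions $\phi_{z_1},\dots,\phi_{z_n}$ have pairwise disjoint supports on the torus. This gives two things.
\begin{itemize}
    \item All cross terms in $|\Phi_z|^2$ vanish, so $|\Phi_z(x)|^2=\frac1{n!}\sum_{\sigma}\prod_j\phi_{z_{\sigma(j)}}(x_j)^2$.
    \item On the support, each $x_j$ lies within $\eta$ of some $z_{\sigma(j)}$ modulo $2\pi$. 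Hence $|x_i-x_j|_{\T}\ge\alpha-4\eta$, which is the claimed support property of $\Gamma_\eta(x,x)$.
\end{itemize}

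\textbf{Density identity.} By disjointness, the one-body density is
\begin{align*}
    \rho_{\Gamma_\eta}(x)=\rho(x)\sum_j\int \phi_{z_j}(x)^2\prod_i(\cdots)\,\mathrm{d}z\,\mathrm{d}\gamma .
\end{align*}
For $i\neq j$, integrate out $z_i$ and $y_i$. The $y_i$-marginal of $\gamma$ is $\rho$, and substituting $z\mapsto z+2\pi k$ and summing over $k$ turns $\int_I\rho(y)\chi_\eta(y-z)^2\,\mathrm{d}y$ into $(\widetilde\rho\ast\chi_\eta^2)(z)$. This cancels the denominator, so each such factor equals $1$. For the index $j$ itself, the same unfolding, using that $\widetilde\rho\ast\chi_\eta^2$ is $2\pi$-periodic, gives $\int_\R\chi_\eta(x-z)^2\,\mathrm{d}z=1$. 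Summing over $j$ yields $\rho_{\Gamma_\eta}=n\rho$.

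\textbf{Kinetic energy.} Disjointness reduces $\mathrm{Tr}(-\Delta)\Gamma_\eta$ to a sum over $j$ of one-body terms of the form $\int_I|\nabla(\sqrt\rho\,\phi_{z_j})|^2$, averaged against the same weights. Expanding the square gives three pieces.
\begin{itemize}
    \item The term $|\nabla\sqrt\rho|^2\phi^2$ averages, exactly as in the density computation, to $\int|\nabla\sqrt\rho|^2$.
    \item The term $\rho|\nabla\phi|^2$ averages to $\eta^{-2}\int|\nabla\chi|^2$ after the same unfolding, using the scaling $\int_\R|\nabla\chi_\eta|^2=\eta^{-2}\int|\nabla\chi|^2$.
    \item The cross term $\sqrt\rho\,\nabla\sqrt\rho\,\nabla(\phi^2)$ averages to zero, because $\int_\R\nabla(\chi_\eta^2)(x-z)\,\mathrm{d}z=0$.
\end{itemize}
This gives~\eqref{eq:kinetic energy bound}.

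\textbf{Weak convergence.} Test $\Gamma_\eta(x,x)$ against a continuous, symmetric, periodic test function $f$ on $I_n$ and integrate in $x$ first. The map
\begin{align*}
    z\mapsto\frac{\int\rho(x)\chi_\eta(x-z)^2g(x)\,\mathrm{d}x}{(\widetilde\rho\ast\chi_\eta^2)(z)}
\end{align*}
is a weighted average of $g$ over the set $\{|x-z|_{\T}<\eta\}$. It therefore lies within the modulus of continuity $\omega_g(\eta)$ of $g(z)$, uniformly in $z$. Applying this coordinate by coordinate (by disjointness the integrand factorises), we obtain
\begin{align*}
    \int f\,\Gamma_\eta(x,x)\,\mathrm{d}x=\int f(y)\,\mathrm{d}\gamma(y)+O(\omega_f(2\eta)).
\end{align*}
Strictly, this gives convergence to the symmetrisation of $\gamma$, which coincides with $\gamma$ in the symmetric case relevant here.

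\textbf{Main obstacle.} I expect the main difficulty to be the bookkeeping in the resummation step. Showing that the double lattice sum really collapses to the periodized determinant requires care with which translates meet $I$. Likewise, the unfolding argument, in which integrating over $z\in\R$ against the non-periodic marginal $\rho$ on $I$ reproduces $\widetilde\rho\ast\chi_\eta^2$, must be checked carefully for the factors to cancel exactly.
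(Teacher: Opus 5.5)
Your route is the paper's. You collapse the lattice double sum into periodized Slater determinants; the paper uses the same identity, though only to prove periodicity. You then use $|z_i-z_j|_{\T}>\alpha-2\eta>2\eta$ to get disjoint supports and the $D_{\alpha-4\eta}$ bound, and unfold to cancel $\widetilde\rho\ast\chi_\eta^2$. Your kinetic computation, including the vanishing cross term, is correct; the paper simply cites Lewin for it. Your remark that the limit is really the symmetrization of $\gamma$ is also right.

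There is one bookkeeping slip in the density identity. For $i\neq j$ you invoke the $y_i$-marginal of $\gamma$. Doing this for several indices at once is not legitimate, since $\gamma$ is not a product measure. It is also unnecessary. The identity you quote, $\sum_{k\in 2\pi\Z}\int_I\rho(x)\chi_\eta(x-z-k)^2\,\mathrm{d}x=(\widetilde\rho\ast\chi_\eta^2)(z)$, comes from the $x_i$-integral of $\rho\,\phi_{z_i}^2$. It cancels the denominator and leaves $\int_\R\chi_\eta(y_i-z_i)^2\,\mathrm{d}z_i=1$ for every $y_i$. The integrand then depends on $y$ only through $y_j$, and the marginal of $\gamma$ is used exactly once. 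This is the order of integration the paper uses: $x_2,\dots,x_n$ first.

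The genuine gap is in the weak convergence, which the paper's proof does not address at all. Testing only against periodic $f$ gives convergence of the images of $\Gamma_\eta(x,x)$ and $\gamma$ on $\T^n$, not in $\mathcal{P}(I_n)$. The difference matters because Assumption~\ref{assump:interaction} does not make $w$, hence $c_n$, periodic. For non-periodic test functions your averaging estimate breaks down near the endpoints, where $\phi_z$ moves mass from near $0$ to near $2\pi$.

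The repair is short. Every one-body marginal of both $\Gamma_\eta(x,x)$ and $\gamma$ equals $\rho$. Hence both measures give mass at most $n\int_{[0,\delta)\cup(2\pi-\delta,2\pi]}\rho$ to the set where some $x_j$ lies within $\delta$ of $\{0,2\pi\}$, uniformly in $\eta$. Multiplying $f\in C(I_n)$ by a product cutoff vanishing on $\partial I_n$ therefore reduces to a periodic test function, with an error that tends to zero as $\delta\downarrow 0$.

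Also, no factorisation of $f$ is needed in the periodic case. The normalized weight $\prod_j\rho(x_j)\phi_{z_{\sigma(j)}}(x_j)^2/(\widetilde\rho\ast\chi_\eta^2)(z_{\sigma(j)})$ is a product probability measure supported in an $\eta$-box around $(z_{\sigma(1)},\dots,z_{\sigma(n)})$ modulo $2\pi$. So the corresponding average of $f$ lies within $\omega_f(\eta)$ of $f$ at that point directly.
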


\begin{proof}
    Under the assumption that $\mathrm{supp}(\gamma) \subset I_n \setminus D_\alpha$, we have 
    \begin{align*}
        \mathrm{supp}(\mathbb{Q}_\eta) \subset \{z = (z_1,...,z_n) \in \R^n: |z_i-z_j|_{\T} \geq \alpha-2\eta \quad \mbox{for any $i\neq j$} \}.
    \end{align*}
    Since $\mathrm{supp}(\sum_{\ell \in (2\pi\Z)^n} \mathcal{X}_{z+\ell,\eta}) \subset \{x \in \R^n: |x_i|_{\T} < \eta \mbox{ for any $i$} \}$, we have $\mathrm{supp}(\Gamma_\eta(x,x))\cap I_n \subset I_n\setminus D_{\alpha-4\eta}$ as desired. Moreover, note that $\chi_{z_i+k_i}$ and $\chi_{z_j+\ell_j}$ have disjoint support for $z\in \mathrm{supp}(\mathbb{Q}_\eta)$ and $i\neq j$ or $k_i\neq \ell_j$. Consequently, using that 
    \begin{align*}
        \sum_{\ell_j \in (2\pi\Z)} (\chi_\eta^2 \ast \rho)(z_j+\ell_j) = (\chi_\eta^2 \ast \widetilde{\rho})(z_j),
    \end{align*} 
   we can first integrate over $x_2,...,x_n$ and then perform the other integrals in the appropriate order to obtain
    \begin{align*}
        \rho_{\Gamma_\eta}(x_1) &= n \sum_{\ell_1 \in (2\pi\Z)}  \int_{I_n\times \R}\rho(x_1) \chi_\eta(x_1-z_1-\ell_1)^2 \frac{\chi_\eta(y_1-z_1)^2}{(\widetilde{\rho} \ast \chi_\eta^2)(z_1)} \\ 
        & \qquad \qquad \qquad \qquad \qquad \underbrace{\left( \sum_{\ell_2,...\ell_n \in (2\pi\Z)} \int_{\R^{n-1}} \prod_{k=2}^n (\chi_{\eta}^2\ast \rho)(z_k+\ell_k)\prod_{k=2}^n \frac{\chi_\eta(y_k-z_k)^2}{(\widetilde{\rho} \ast \chi_\eta^2)(z_k)}\mathrm{d} z_2...\mathrm{d}z_n\right)}_{=1} \mathrm{d} \gamma(y) \mathrm{d} z_1 \\
        &=  \sum_{\ell_1 \in (2\pi\Z)} \int_{I \times \R} \rho(x_1) \chi_\eta(x_1-z_1-\ell_1)^2 \frac{\chi_\eta(y_1-z_1)^2}{(\widetilde{\rho} \ast \chi_\eta^2)(z_1)} \rho(y_1) \mathrm{d} y_1 \mathrm{d} z_1 \\
        &= \rho(x_1) \int_{\R} \frac{\chi_\eta(x_1-z_1)^2}{(\widetilde{\rho} \ast \chi_\eta^2)(z_1)} \left(\sum_{\ell_1 \in (2\pi\Z)} \int_I \chi_{\eta}(y_1-z_1+\ell_1)^2 \rho(y_1) \mathrm{d} y_1\right) \mathrm{d} z_1 = \rho(x_1).
    \end{align*}

    To see that $\Gamma_\eta$ is periodic, note that
    \begin{align*}
        \sum_{\ell,k \in (2\pi\Z)^n} |\chi_{\eta,z_1+k_1}... \chi_{\eta,z_n+k_n}\rangle \langle \chi_{\eta,z_1+\ell_1},..., \chi_{\eta,z_n+\ell_n}| = |\widetilde{\chi}_{\eta,z_1} .. \widetilde{\chi}_{\eta,z_n} \rangle \langle \widetilde{\chi}_{\eta,z_1}... \widetilde{\chi}_{\eta,z_n} |,
    \end{align*}
    where $\widetilde{\chi}_{\eta,z_j}$ denotes the periodization of $\chi_{\eta,z_j}$. Consequently, if $\sqrt{\rho}$ is periodic, then so are the functions $\widetilde{\chi}_{\eta,z_j} \sqrt{\rho}$, and therefore, also $\Gamma_\eta$.

    The calculation for the kinetic energy is identical to the one in \cite{Lew18}.
\end{proof}

\subsection{The strongly interacting limit} We can now combine Lemma~\ref{lem:support} and Proposition~\ref{prop:regularized} to complete the proof of Theorem~\ref{thm:SCE limit}.

\begin{proof}[Proof of Theorem~\ref{thm:SCE limit}]
Let $\gamma_{\rm opt}$ be an optimal plan for $F_{\rm OT}(\rho)$. By Lemma~\ref{lem:support}, its support is contained outside the set $D^h_w$ defined in~\eqref{eq:intrisic diagonal} for some $h>0$. As $\{w=\infty\} = \{(x,y) : |x-y|_{\mathbb{T}} = 0\}$ by Assumption~\ref{assump:interaction}, we have $\{c = \infty \} = \{ (x_1,...,x_n) \in I_n: |x_i-x_j|_{\mathbb{T}} = 0 \quad \mbox{for some $i\neq j$}\} = D_0 \subset D_w^h$. Thus, $\gamma_{\rm opt}$ is supported away from the periodic coalescence set $D_0$, and there exists $\alpha>0$ such that $\mathrm{supp}(\gamma_{\rm opt}) \cap D_\alpha = 0$. Thus, choosing $\eta< \alpha/4$ and defining $\Gamma_\eta$ via~\eqref{eq:regularized DM}, we obtain a trial density matrix for $F_{\rm per}^\varepsilon(\rho)$. In particular,
\begin{align*}
    \limsup_{\varepsilon\downarrow 0} F_{\rm per}^\varepsilon(\rho) \leq \limsup_{\varepsilon \downarrow 0} \varepsilon T(\Gamma_\eta) + \int_{I_n} c_n(x) \Gamma_\eta(x,x) \mathrm{d} x = \int_{I_n} c_n(x) \Gamma_\eta(x,x) \mathrm{d} x, \quad \mbox{for $\eta < \alpha/4$.}
\end{align*}
Now, since the support of $\Gamma_\eta$ is contained in $I_n\setminus D_{\alpha-4\eta}$ for any $\eta < \alpha/4$ and since $w$ is uniformly bounded on this set (by continuity), the convergence $\Gamma_\eta(x,x) \rightharpoonup \gamma_{\rm opt}$ implies that $\limsup_{\varepsilon \downarrow 0} F_{\rm per}^\varepsilon(\rho) \leq \int c_n \mathrm{d} \gamma_{\rm opt} = F_{\rm OT}(\rho)$. The opposite ($\liminf$-)inequality is trivial because the kinetic energy is always non-negative.
\end{proof}

Let us end this section by remarking that Theorem~\ref{thm:SCE limit} can be extended to periodic systems in arbitrary dimensions $\T^d = \R^d/(2\pi \Z)^d$. Precisely, if we define
\begin{align*}
    |x|_{\mathbb{T}^d} := \min\{|x-k| : k \in (2\pi \Z)^d\}
\end{align*}
and let $D_0(\T^d)$ be the periodic diagonal set on $\T^d \times \T^d$. Then the following holds.

\begin{theorem}[Periodic SCE limit in higher dimensions] \label{thm:SCE limit higher dimensions} Let $w:I_d\times  I_d \rightarrow \R_+\cup \{+\infty\}$ be symmetric, continuous and satisfy $\{w = \infty\} = D_0(\T^d)$. Then, for any $\rho \in \mathcal{P}(I_d)$ such that $\sqrt{\rho} \in \mH^1_{\rm per}(I_d)$ we have
\begin{align*}
    \lim_{\varepsilon \downarrow 0} F_{\rm per}^\varepsilon(\rho) = F_{\rm OT}(\rho).
\end{align*}
Moreover, if $w$ is locally $W^{2,\infty}$ in a neighborhood $U$ of $\mathrm{supp}(\gamma_{\rm opt})$ for some optimal plan $\gamma_{\rm opt}$, then
\begin{align}
    F_{\rm per}^\varepsilon(\rho) = F_{\rm OT}(\rho) + \mathcal{O}(\varepsilon^{\frac12}), \label{eq:next-order error}
\end{align}
with a remainder depending on $\rho$ and $\norm{w}_{\mW^{2,\infty}(U)}$.
\end{theorem}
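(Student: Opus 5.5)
The argument has two parts. The limit follows the one-dimensional proof of Theorem~\ref{thm:SCE limit} verbatim. The rate comes from a quantitative version of the same regularization, with $\eta$ optimized against $\varepsilon$.

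\textbf{Step 1 (lower bound).} For any admissible mixed state with density $n\rho$, the diagonal of the $n$-body density matrix is a symmetric probability measure in $\Pi(\rho)$. The kinetic energy is non-negative. Hence $F^\varepsilon_{\rm per}(\rho)\geq F_{\rm OT}(\rho)$ for every $\varepsilon>0$.

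\textbf{Step 2 (off-diagonal support).} Take an optimal plan $\gamma_{\rm opt}$. Since $\sqrt\rho\in\mH^1_{\rm per}(I_d)$, we have $\rho\in \mL^{p}$ for some $p>1$ by Sobolev embedding on the torus ($p=3$ suffices for $d\le 3$, with a corresponding exponent in general). By H\"older, $\kappa(\rho,r)\lesssim r^{d(1-1/p)}\to0$, so $\kappa(\rho,r)<1/n$ for small $r$. The finiteness of $F_{\rm OT}(\rho)$ can be seen by gluing, or from the Step 3 construction applied to any off-diagonal plan, e.g. an approximately optimal one.

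Lemma~\ref{lem:support} works verbatim in dimension $d$, with periodic balls. Together with \eqref{eq:strong repulsion}, it gives $h$ with $\gamma_{\rm opt}(D^h_w)=0$. Because $\{w=\infty\}=D_0(\T^d)$ and $w$ is continuous on a compact set, $D^h_w$ contains a neighborhood of $D_0$. So $\mathrm{supp}\,\gamma_{\rm opt}\subset I_d^n\setminus D_\alpha$ for some $\alpha>0$.

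\textbf{Step 3 (upper bound).} Build $\Gamma_\eta$ as in \eqref{eq:regularized DM} with $\chi\in C^\infty_c(\R^d)$, $\chi_\eta=\eta^{-d/2}\chi(\cdot/\eta)$ and lattice $(2\pi\Z)^{d}$. Proposition~\ref{prop:regularized} carries over unchanged for $\eta<\alpha/4$. The computation uses only disjointness of the supports of the translated bumps and the identity $\sum_\ell(\chi_\eta^2*\rho)(z+\ell)=(\chi_\eta^2*\widetilde\rho)(z)$. It gives density $n\rho$, periodicity, support in $I_d^n\setminus D_{\alpha-4\eta}$, and
\[
\mathrm{Tr}(-\Delta)\Gamma_\eta=n\Big(\|\nabla\sqrt\rho\|^2_{\mL^2}+\eta^{-2}\|\nabla\chi\|^2_{\mL^2}\Big).
\]
Thus $F^\varepsilon_{\rm per}(\rho)\le \varepsilon C(1+\eta^{-2})+\int c_n\,\Gamma_\eta(x,x)$. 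As $\varepsilon\to0$ and then $\eta\to0$, the weak convergence $\Gamma_\eta(x,x)\rightharpoonup\gamma_{\rm opt}$ gives the limit. Indeed, $c_n$ is bounded and continuous on $I_d^n\setminus D_{\alpha/2}$, which contains all the supports.

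\textbf{Step 4 (rate).} This is the main point. We need $\int c_n\,\Gamma_\eta(x,x)-\int c_n\,d\gamma_{\rm opt}=O(\eta^2)$. Then choosing $\eta=\varepsilon^{1/4}$ balances $\varepsilon\eta^{-2}$ against $\eta^2$ and yields the $O(\varepsilon^{1/2})$ remainder.

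Write $\Gamma_\eta(x,x)$ as $\sqrt\rho^{\otimes n}$ times a superposition over $z$ of products of bumps. The bumps have disjoint supports, so the determinant's diagonal is the symmetrized product $\prod|\chi_\eta(x_j-z_j)|^2$. So $\Gamma_\eta(x,x)=\int K_\eta(x,y)\,d\gamma_{\rm opt}(y)$ with a Markov kernel $K_\eta$ concentrated within distance $2\eta$ of $y$ (up to symmetrization).

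Its first moment does not vanish, since $K_\eta$ is weighted by $\rho(x)/(\chi_\eta^2*\rho)(z)$. The key is to Taylor expand $c_n(x)$ around $y$ to second order, using $w\in W^{2,\infty}(U)$ on the $2\eta$-neighborhood $U$ of $\mathrm{supp}\,\gamma_{\rm opt}$. The second-order term is $O(\eta^2)$. The first-order term is $\nabla c_n(y)\cdot\int(x-y)K_\eta(x,y)\,dx$, whose drift is not obviously $O(\eta^2)$; this is the step I expect to be the real obstacle.

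To handle it, I would follow Lewin's argument. Integrating over $z$ first, the marginal structure gives
\[
\int(x_j-y_j)K_\eta\,dx=O(\eta)\times\big(\text{an averaged }\nabla\log\rho\text{ term}\big).
\]
Its pairing against $\nabla c_n$ is then controlled by Cauchy–Schwarz using $\|\nabla\sqrt\rho\|_{\mL^2}$, giving $O(\eta^2)$.

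Failing that, I would fall back on a first-order bound $O(\eta)$, which gives only $O(\varepsilon^{1/3})$. I would then refine it by symmetrizing $\chi$ and using the evenness of the bump to cancel the linear term in the $z$-integration.
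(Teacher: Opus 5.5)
Your proposal is correct and follows the paper's route. The paper also gets the limit by running Lemma~\ref{lem:support} and Proposition~\ref{prop:regularized} verbatim on $\T^d$, and it defers the $\mathcal{O}(\varepsilon^{\frac12})$ rate to Lewin's $\mathcal{O}(\eta^2)$ estimate, which becomes the stated rate once $\eta=\varepsilon^{1/4}$. Your main line for the drift already closes, so the fallback is unnecessary, provided you use the evenness of $\chi$ there: it gives $\int(x_j-y_j)K_\eta\,\mathrm{d}x=(\chi_\eta^2*b_\eta)(y_j)$ with $b_\eta(z)=\int u\,\big(\widetilde\rho(z+u)-\widetilde\rho(z)\big)\chi_\eta^2(u)\,\mathrm{d}u\,/\,(\widetilde\rho*\chi_\eta^2)(z)$, whose $\rho$-weighted $\mL^1$ norm is at most $\eta^2\norm{\nabla\rho}_{\mL^1}\leq 2\eta^2\norm{\nabla\sqrt{\rho}}_{\mL^2}$ (so the drift is $\eta^2$ times an averaged $\nabla\log\rho$, not $\eta$ times it as you wrote).
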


\begin{proof} Since neither the proof of Lemma~\ref{lem:support} nor Proposition~\ref{prop:regularized} are particular to the one-dimensional case, the proof is exactly the same as the proof of Theorem~\ref{thm:SCE limit}. The remainder estimate~\eqref{eq:next-order error} follows as in the proof of \cite[Theorem 2]{Lew18} by using an estimate analogous to eq. (1.7) there.
\end{proof}

\section{From Kohn-Sham to Kantorovich potentials}
\label{sec:potentials}

The goal of this section is to prove Theorem~\ref{thm:Kantorovich potentials}. 

\subsection{Existence of generalized Kantorovich potentials} \label{sec:existence}

We start by studying a generalized notion of Kantorovich potentials. More precisely, these are distributional subgradients of the optimal transport functional in the space $\mH^1_{\rm per}(I)$. Here, we shall establish their existence; later, we investigate their regularity in the case of a bounded (or truncated) cost function. 

Let us start with the following lemma, which is well-known in the literature \cite{Lie83,San15}. As the proof is rather short, we briefly sketch it below. 

\begin{lemma}[Lower semi-continuity of Lieb functional] \label{lem:lower semi-continuity} Let $F_{\rm per}^\varepsilon: \mH^1_{\rm per}(I)\cap \mathcal{P}(I) \rightarrow \R_+ \cup \{+\infty\}$ be the Lieb functional~\eqref{eq:Lieb functional} with domain given by the set of periodic ($n$-)representable densities 
\begin{align}
    \mathrm{dom}\, F_{\rm per}^\varepsilon = \mathcal{R}_{\rm per} = \{\rho \in \mathcal{P}(I) : \sqrt{\rho} \in \mH^1_{\rm per}(I) \}. \label{eq:representable densities}
\end{align}
For $\varepsilon = 0$, we set $F^0_{\rm per}(\rho) \coloneqq F_{\rm OT}(\rho)$ for $\rho \in \mathcal{R}_{\rm per}$. Suppose that $w$ satisfy Assumption~\ref{assump:interaction}. Then, for any $\varepsilon \geq 0$, $F_{\rm per}^\varepsilon$ is convex and lower semi-continuous in $\mH^1_{\rm per}(I)$.
\end{lemma}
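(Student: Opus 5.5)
Convexity comes first and is essentially built into the definitions. For $\varepsilon>0$, I take two representable densities $\rho_0,\rho_1\in\mathcal{R}_{\rm per}$ and near-optimal mixed states $\Gamma_0=\sum_k\lambda_k|\Psi_k\rangle\langle\Psi_k|$ and $\Gamma_1$ in~\eqref{eq:Lieb functional}. The convex combination $t\Gamma_0+(1-t)\Gamma_1$ is again an admissible mixed state. Its density is $n(t\rho_0+(1-t)\rho_1)$, and its energy is the corresponding convex combination of the two energies, because the functional is linear in $\Gamma$. For $\varepsilon=0$ the same argument applies to $F_{\rm OT}$, since $\Pi(\rho)$ depends affinely on $\rho$ and the cost is linear in $\gamma$. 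Convexity of $\mathcal{R}_{\rm per}$ itself follows from convexity of $\rho\mapsto\int|\nabla\sqrt\rho|^2$, that is, from the joint convexity of $|\nabla\rho|^2/\rho$.

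Lower semicontinuity is the real content. Let $\rho_k\to\rho$ in $\mH^1_{\rm per}(I)$ with $\liminf F^\varepsilon_{\rm per}(\rho_k)=L<\infty$, and pass to a subsequence realising the liminf. I will choose minimizing (or $1/k$-optimal) states $\Gamma_k$. The uniform bound on $\varepsilon\,\mathrm{Tr}(-\Delta)\Gamma_k$ and the positivity $w\ge 0$ give boundedness of $\Gamma_k$ in the trace class with kinetic weight. Since $I_n$ is compact and the operator is periodic, $(1-\Delta_{\rm per})^{-1}$ is compact, so the family is relatively compact. After extracting a subsequence, $\Gamma_k\to\Gamma$ in trace norm, and there is no loss of mass in the periodic setting.

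The limit $\Gamma$ has density $n\rho$: trace-norm convergence gives $L^1$ convergence of the densities, and $\rho_k\to\rho$ identifies the limit. The kinetic term is weakly lower semicontinuous. The interaction term $\int c_n\,\Gamma_k(x,x)\,dx$ is lower semicontinuous because $c_n$ is a nonnegative, continuous, $[0,\infty]$-valued function: I truncate it as $\min(c_n,M)$, pass to the limit using weak convergence of the diagonal densities, and then let $M\to\infty$ by monotone convergence. Together these give $F^\varepsilon_{\rm per}(\rho)\le L$.

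For $\varepsilon=0$ I argue the same way with plans $\gamma_k\in\Pi(\rho_k)$. Tightness is automatic on the compact set $I_n$. By Prokhorov, $\gamma_k\rightharpoonup\gamma$ along a subsequence, and $\gamma\in\Pi(\rho)$ because the marginals converge. Lower semicontinuity of $\gamma\mapsto\int c_n\,d\gamma$ again follows from $c_n$ being lower semicontinuous and bounded below (the Portmanteau theorem, or truncation). The weak convergence of marginals is inherited from $\rho_k\to\rho$ in $\mH^1$, since $\mH^1_{\rm per}(I)\hookrightarrow C(I)$ in one dimension.

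The main obstacle is the compactness step for $\varepsilon>0$: making sure that a limit of admissible mixed states, which are antisymmetric and periodic, is still admissible and that no energy is created. I would handle this by working with the density matrices $\Gamma_k$ directly, which sit in a weak-$*$ compact set of positive trace-class operators. I would then use the Rellich compactness of $\mH^1_{\rm per}(I_n)$ in $\mL^2$ to upgrade weak-$*$ convergence to trace-norm convergence. Finally, I would recover the representation $\sum\lambda_k|\Psi_k\rangle\langle\Psi_k|$ of the limit $\Gamma$ from its spectral decomposition; antisymmetry and periodicity pass to the limit because they are closed conditions.
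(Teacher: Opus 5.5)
Your proposal is correct and follows essentially the same route as the paper. Convexity comes from convex combinations of near-optimal mixed states (resp.\ plans). Lower semicontinuity comes from extracting a limit $\Gamma$ of near-optimal density matrices with uniformly bounded kinetic energy, identifying $\rho_\Gamma = n\rho$, and using lower semicontinuity of the kinetic and interaction terms, with $\varepsilon=0$ handled by the standard Prokhorov argument. The differences are minor: you make the compactness explicit (Rellich on the torus, giving trace-norm convergence with no loss of mass) where the paper cites a standard trace-class compactness lemma, and the paper works with weak convergence of measures $\rho_k\rightharpoonup\rho$, a weaker topology than $\mH^1_{\rm per}$ in which your argument also goes through.
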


\begin{proof} The convexity is immediate since $F_{\rm per}^\varepsilon(\rho)$ is given by a minimization of a linear functional on a convex space, and $\mathcal{R}_{\rm per}$ is convex (by convexity of the gradient \cite[Theorem 7.8]{LL01}).

For the lower semi-continuity, we shall prove it with respect to the weak topology in $\mathcal{P}(I)$. Suppose $\rho_k \rightharpoonup \rho$ in $\mathcal{P}(I)$. If $\rho_k \not \in \mathcal{R}_{\rm per}$, then the statement is trivial. Otherwise, we let $\Gamma_k$ be a density matrix such that $\rho_{\Gamma_k} = n \rho$ and $F_{\rm per}^\varepsilon(\rho_k) +\delta  \geq \varepsilon T(\Gamma_k) + \int c_n |\Gamma_k|^2$. As $c_n \geq 0$, the sequence $\{\Gamma_k\}$ has uniformly bounded kinetic energy. As $\rho_{\Gamma_k} = n \rho_k \rightarrow n \rho$, it is well-known (see, e.g. \cite{DFM08} or \cite[Lemma 4.4]{Cor25c}) that one can extract a subsequence weakly converging (in the space of trace-class operators) to some $\Gamma\geq 0$ with finite kinetic energy and satisfying $\rho_\Gamma = n \rho$ and $T(\Gamma) \leq \liminf_{k} T(\Gamma_k)$. Consequently, $\Gamma_n(x,x)\rightharpoonup \Gamma(x,x)$ in the sense of measures. As $w$ is lower semi-continuous, it follows that $\int c_n \Gamma(x,x) \leq \liminf_k \int c_n \Gamma_k(x,x)$. Combining this with the fact that $T(\Gamma) \leq \liminf_k T(\Gamma_k)$, we conclude that $F_{\rm per}^\varepsilon(\rho) \leq \liminf F_{\rm per}^\varepsilon(\rho_k) + \delta$. As $\delta$ is arbitrary, we are done.

The proof in the optimal tranport case $\varepsilon = 0$ follows from similar arguments, see, e.g. \cite[pp. 5-7]{San15} or \cite[pp. 93-94]{Fri25}.
\end{proof}

As an immediate consequence of the preceding lemma and a standard result in convex analysis, we can establish the existence of representing potentials for the Lieb functional and generalized Kantorovich potentials for the optimal transport problem. As a side remark, we note that the same argument was used to derive the existence of the representing potential in \cite{SPR+24,Cor25b,SPR+25} under the assumption that $w$ is Laplace bounded\footnote{This assumption was used to properly define $H_n(v_\varepsilon(\rho)/\varepsilon ,w/\varepsilon)$ as a self-adjoint operator with form domain $\mH^1_{\rm per}(I_n) \cap \mathcal{H}_n$. However, we remark that, as long as $w$ is non-negative and Lebesgue measurable, one can still define the self-adjoint operator $H_n(v_\varepsilon(\rho)/\varepsilon,w/\varepsilon)$, but this operator might have a different (smaller) form domain. We shall not go into further details here.}. Here we show that this assumption can be replaced by the positivity and (lower semi-)continuity of $w$, and that the same argument applies to the optimal transport problem.

\begin{lemma}[Existence of generalized Kantorovich potentials] \label{lem:existence of potential} For any $\varepsilon \geq 0$, let $F_{\rm per}^\varepsilon(\rho)$ be as in Lemma~\ref{lem:lower semi-continuity} and let $\mathcal{D}_{\rm per}$ be the set introduced in~\eqref{eq:periodic v-rep densities}. Then, for any $\rho \in \mathcal{D}_{\rm per}$, there exists $v_\varepsilon(\rho)\in \mH^{-1}_{\rm per}(I)$ such that
\begin{align*}
    F_{\rm per}^\varepsilon(\xi) - n\inner{v_{\varepsilon}(\rho),\xi} \geq F_{\rm per}^\varepsilon(\rho) - n\inner{v_{\varepsilon}(\rho),\rho}, \quad \mbox{for any $\xi \in \mathcal{P}(I)\cap \mH^1_{\rm per}(I)$,}
\end{align*}
where $\inner{v,\rho}$ denotes the dual pairing in $\mH^{-1}_{\rm per} \times \mH^{1}_{\rm per}$.
\end{lemma}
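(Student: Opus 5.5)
The plan is to obtain $v_\varepsilon(\rho)$ as a subgradient of the convex functional $F_{\rm per}^\varepsilon$, extended by $+\infty$ outside $\mathcal{R}_{\rm per}$, at the point $\rho$. We view it as a functional on the closed affine hyperplane
\begin{align*}
X \coloneqq \left\{\xi \in \mH^1_{\rm per}(I) : \int_I \xi = 1\right\}.
\end{align*}
By Lemma~\ref{lem:lower semi-continuity}, $F_{\rm per}^\varepsilon$ is convex and lower semi-continuous on $\mH^1_{\rm per}(I)$, hence also on $X$. The standard convex-analysis fact we invoke is the following: a proper convex lower semi-continuous functional on a Banach space is continuous at every interior point of its domain, where it is locally bounded above. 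Consequently it has a nonempty subdifferential at such points, by Hahn--Banach, or equivalently by the separation of the epigraph from a point below it. Applied on the hyperplane, this yields an element $\ell$ of the dual of $X_0=\{\xi\in\mH^1_{\rm per}:\int\xi=0\}$ with
\begin{align*}
F_{\rm per}^\varepsilon(\xi)\ge F_{\rm per}^\varepsilon(\rho)+\ell(\xi-\rho) \quad \mbox{for all } \xi\in X.
\end{align*}
We extend $\ell$ to $\mH^1_{\rm per}(I)$ (Hahn--Banach, or by setting it to $0$ on constants) and write it as $n v_\varepsilon(\rho)$ with $v_\varepsilon(\rho)\in\mH^{-1}_{\rm per}(I)$. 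Rearranging gives the claimed inequality for every $\xi\in\mathcal{P}(I)\cap\mH^1_{\rm per}(I)$; for $\xi\notin\mathcal{R}_{\rm per}$ the left side is $+\infty$ and there is nothing to prove.

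The key steps, in order, are as follows.

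\textbf{(i) $\rho$ lies in the relative interior of $\mathcal{R}_{\rm per}$ within $X$.} Since $\rho\in\mH^1_{\rm per}\subset C(I)$ is strictly positive on the compact set $I$, we have $\min\rho=c>0$. If $\xi\in X$ with $\|\xi-\rho\|_{\mH^1}<\delta$, the Sobolev embedding gives $\|\xi-\rho\|_\infty\le C\delta<c/2$, so $\xi\ge c/2$. Then $\sqrt{\xi}$ is periodic, and $\nabla\sqrt\xi=\nabla\xi/(2\sqrt\xi)\in\mL^2$, so $\xi\in\mathcal{R}_{\rm per}$.

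\textbf{(ii) $F_{\rm per}^\varepsilon$ is bounded above on such an $\mH^1$-ball around $\rho$.} This is the step I expect to be the main obstacle, since it requires a quantitative upper bound rather than mere finiteness. The approach is to use the Lewin-type construction of Proposition~\ref{prop:regularized}. Fix once and for all a symmetric plan $\gamma_0$ supported in $I_n\setminus D_\alpha$ whose marginals are the uniform density; for instance, take the symmetrization of equally spaced configurations $(t,t+2\pi/n,\dots)$ averaged over $t$. Then apply the construction~\eqref{eq:regularized DM} with $\xi$ in place of $\rho$ and a fixed $\eta<\alpha/4$. Here the construction only needs $\gamma$ to have marginal related to $\xi$, so I would instead use the plan $\gamma_\xi=(\Phi_\xi^{-1})^{\otimes n}{}^\#\gamma_0$, where $\Phi_\xi$ is the periodic cumulative distribution map. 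Because $\xi\ge c/2$, $\Phi_\xi$ is bi-Lipschitz with uniform constants, so $\gamma_\xi$ stays supported off a fixed $D_{\alpha'}$. The kinetic bound~\eqref{eq:kinetic energy bound} gives
\begin{align*}
T\le n\left(\|\nabla\sqrt\xi\|_2^2+C\eta^{-2}\right),
\end{align*}
which is uniformly bounded on the ball, and the interaction energy is bounded by $\sup_{I_n\setminus D_{\alpha'-4\eta}}c_n<\infty$. Hence $F_{\rm per}^\varepsilon\le C$ on the ball. For $\varepsilon=0$ the same plan gives $F_{\rm OT}(\xi)\le\sup c_n$ directly.

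\textbf{(iii) Conclude.} Steps (i) and (ii) together with convexity give continuity at $\rho$ on $X$. Taking the supporting functional and extending it as described above finishes the proof. Care is only needed in translating between the affine hyperplane and the full space: the constant ambiguity in $v_\varepsilon$ is harmless because $\int\xi=\int\rho=1$.
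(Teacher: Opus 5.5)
Your argument is correct, and its skeleton matches the paper's. You show that $\rho$ lies in the relative interior of $\mathcal{R}_{\rm per}$, as the paper does via the GNS inequality, exactly as in your step (i). You then invoke the convex-analysis fact that a convex function which is finite and continuous at a point has a nonempty subdifferential there.

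The difference lies in how continuity at $\rho$ is obtained. The paper gets it for free from Lemma~\ref{lem:lower semi-continuity}. A convex lower semi-continuous function on a Banach space (here the closed hyperplane $\{\int\xi=1\}$) is automatically continuous on the interior of its effective domain, by a Baire-category argument. So mere finiteness on the relative interior suffices, and your remark that step (ii) ``requires a quantitative upper bound rather than mere finiteness'' is not accurate once lower semi-continuity is available.

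Conversely, your explicit bound makes lower semi-continuity superfluous, since convexity plus boundedness above on a neighbourhood already yields continuity. Your proof therefore bypasses the compactness argument behind Lemma~\ref{lem:lower semi-continuity}, at the cost of a trial-state construction. That construction is correct: you reparametrize the equally spaced plan by the cumulative distribution map, use the uniform bi-Lipschitz control coming from $\xi\ge c/2$, and apply the Lewin regularization. It gives a bound that is uniform in $\xi$ on the $\mH^1$-ball and in $\varepsilon\le 1$.

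This is essentially the uniform estimate $F^\varepsilon_{\rm per}(\xi)\le C$ that the paper proves later in the proof of Theorem~\ref{thm:Kantorovich potentials}, using the concentration $\kappa$ and Lemma~\ref{lem:uniform bound}. Your version is somewhat more elementary, because it exploits the strict positivity of $\xi$ instead of a general plan-construction result.
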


\begin{proof} Using the 1D GNS inequality,
\begin{align}
    \norm{f}_{\mL^\infty(I)} \lesssim \norm{f}_{\mL^2(I)}^{\frac12} \norm{f}_{\mH^1(I)}^{\frac12}, \label{eq:GNS}
\end{align}
it is not difficult to show that $\mathcal{D}_{\rm per}$ is the relative interior of the set of representable densities $\mathcal{R}_{\rm per} = \{ \rho \in \mathcal{P}(I): \sqrt{\rho} \in \mH^1_{\rm per}(I) \}$ with respect to the $\mH^1_{\rm per}$ topology, see e.g., \cite[Lemma 4.6]{Cor25c}. Since $F^\varepsilon_{\rm per}$ is lower semi-continuous and convex by Lemma~\ref{lem:lower semi-continuity}, we can apply a standard result from convex analysis (see e.g. \cite[Proposition 5.2]{ET99}) to conclude that the subdifferential of $\partial F^\varepsilon_{\rm per}(\rho) \subset \mH^{-1}_{\rm per}(I)$ is non-empty at any density $\rho \in \mathcal{D}_{\rm per}$. Picking any $v_\varepsilon(\rho)\in \mH^{-1}_{\rm per}(I)$ such that $n v_{\varepsilon}(\rho) \in \partial F^\varepsilon_{\rm per}(\rho)$ completes the proof.
\end{proof}

\begin{remark}[Normalized potentials] \label{rem:normalized potential} We shall say that the potential $v_\varepsilon(\rho)$ is normalized if 
\begin{align*}
    F_{\rm per}^\varepsilon(\rho) = n \inner{v_\varepsilon(\rho),\rho}.
\end{align*}
Notice that this can always be achieved by adding a suitable constant to the potential. In the quantum case ($\varepsilon > 0$), this corresponds to setting the ground-state energy of $H_n(v_\varepsilon(\rho)/\varepsilon,w/\varepsilon)$ to zero.
\end{remark}

\subsection{Regularity of generalized Kantorovich potentials} \label{sec:regularity} We now investigate the regularity of the generalized Kantorovich potentials. For this, we shall use the following lemma.

\begin{lemma} \label{lem:simple} (Subgradient from pointwise inequality) Let $c: I_n \rightarrow \R\cup \{+\infty\}$ be a continuous function and suppose that $v\in \mL^1(I)$ satisfy $c-\oplus^n v \geq 0$ (Lebesgue) almost everywhere in $I_n$, where $(\oplus^n v)(x) \coloneqq \sum_{j=1}^n v(x_j)$. Then for any $\gamma \in \mathcal{P}(I_n)$ whose sum of 1D marginals $\rho_\gamma = \sum_{j=1}^n (\pi_j^\#\gamma)$ is a continuous function in $I$, we have
\begin{align}
    \int_{I_n} c(x)\mathrm{d} \gamma(x) - \int_I \rho_\gamma(x) v(x) \mathrm{d}x \geq 0. \label{eq:subgradient continuous}
\end{align}
In addition, if $v \geq v_0(\rho)$ in $\mH^{-1}_{\rm per}(I)$ for some normalized (in the sense of Remark~\ref{rem:normalized potential}) generalized Kantorovich potential and some $\rho \in \mathcal{D}_{\rm per}$, then $v_0(\rho) = v$ in $\mH^{-1}_{\rm per}(I)$.
\end{lemma}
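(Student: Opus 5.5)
The plan is to prove \eqref{eq:subgradient continuous} by approximating $\gamma$ with absolutely continuous plans, and then to derive the uniqueness statement from the normalization together with a positivity argument.

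\textbf{Step 1: the inequality \eqref{eq:subgradient continuous}.} The main difficulty is that $c-\oplus^n v\geq 0$ holds only Lebesgue almost everywhere, while $\gamma$ may be singular, for instance concentrated on a graph. I will therefore regularize $\gamma$ by convolution. Let $\phi_\eta$ be a standard mollifier on the torus, identifying $I$ with $\T$ so that no boundary effects occur. Set $\gamma_\eta \coloneqq \gamma \ast \phi_\eta^{\otimes n}$. Then $\gamma_\eta$ is absolutely continuous, so
\begin{align*}
    \int c\, \mathrm{d}\gamma_\eta \geq \int (\oplus^n v)\,\mathrm{d}\gamma_\eta = \int_I v\,(\rho_\gamma\ast\phi_\eta)\,\mathrm{d}x .
\end{align*}
The sum of the marginals of $\gamma_\eta$ is $\rho_\gamma \ast \phi_\eta$.

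Now let $\eta\to 0$. Since $\rho_\gamma$ is continuous, $\rho_\gamma\ast\phi_\eta\to\rho_\gamma$ uniformly, and because $v\in \mL^1$ the right-hand side converges to $\int v\rho_\gamma$.

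On the left-hand side, I may assume $\int c\,\mathrm{d}\gamma<\infty$, since otherwise there is nothing to prove. For a continuous $c$ with values in $\R\cup\{+\infty\}$, weak convergence $\gamma_\eta\rightharpoonup\gamma$ only yields lower semicontinuity, which is the wrong direction for this argument. I will handle this by truncation. For a bounded continuous minorant such as $c\wedge K$, the inequality does not transfer directly, because $c\wedge K\geq \oplus^n v$ may fail where $v$ is large. I will instead use $v_K\coloneqq \min(v,K/n)$. This gives $c\wedge K \geq \oplus^n v_K$ almost everywhere, and $v_K\in \mL^1$. Since $c\wedge K$ is bounded and continuous, the argument above yields
\begin{align*}
    \int (c\wedge K)\,\mathrm{d}\gamma \geq \int v_K\rho_\gamma .
\end{align*}
Letting $K\to\infty$, monotone convergence applied to both sides gives \eqref{eq:subgradient continuous}; the right-hand side is controlled because $v^-$ is integrable and $\rho_\gamma$ is bounded.

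\textbf{Step 2: uniqueness.} Suppose $v\geq v_0(\rho)$ in $\mH^{-1}_{\rm per}$. This means $\mu\coloneqq v-v_0(\rho)$ is a nonnegative distribution, and by the Riesz representation theorem it is a nonnegative Radon measure. Pairing with the constant function $\rho$ is legitimate because $\rho\in\mH^1_{\rm per}$, and the normalization gives
\begin{align*}
    n\inner{v,\rho} = n\inner{v_0,\rho} + n\inner{\mu,\rho} = F_{\rm OT}(\rho) + n\int\rho\,\mathrm{d}\mu .
\end{align*}
Next take an optimal plan $\gamma$ for $\rho$, so that $\rho_\gamma=n\rho$, which is continuous by the one-dimensional Sobolev embedding. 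Step 1 gives $F_{\rm OT}(\rho)\geq n\int v\rho$. Combining the two relations yields $\int\rho\,\mathrm{d}\mu\leq 0$. Since $\rho>0$ on $I$, it is bounded below by a positive constant on the compact interval, and therefore $\mu=0$. This is exactly $v=v_0(\rho)$ in $\mH^{-1}_{\rm per}$.

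I expect Step 1 to be the main obstacle, specifically passing from an inequality that holds almost everywhere to one that holds for singular plans. The mollification argument above addresses this, relying on the continuity of $c$ and of $\rho_\gamma$.
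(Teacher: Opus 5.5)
Your argument is correct and essentially the paper's. Both proofs mollify $\gamma$ so the almost-everywhere inequality can be integrated, pass to the limit using the continuity of $\rho_\gamma$, and then combine the normalization $F_{\rm OT}(\rho)=n\inner{v_0(\rho),\rho}$ with the positivity of $v-v_0(\rho)$ and the bound $\rho\geq c>0$. You argue the last step through the Riesz measure $\mu$, while the paper uses $\inner{v-v_0(\rho),\xi}\leq\norm{\xi/\rho}_{\mL^\infty}\inner{v-v_0(\rho),\rho}$ for $\xi\geq 0$; this is the same idea.

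You genuinely add something with the truncation $c\wedge K\geq\oplus^n\min\{v,K/n\}$ a.e. The paper simply ``passes to the limit'' in $\int c\,\mathrm{d}\gamma_\eta$. That is only justified for bounded $c$, which is the only case in which the lemma is later applied. For $c$ with values in $\R\cup\{+\infty\}$, weak convergence gives the inequality in the wrong direction, as you observe. So your version proves the lemma as actually stated. Mollifying on $\T^n$ rather than $\R^n$ also keeps $\gamma_\eta$ on $I_n$, where $c$ and $v$ are defined.

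You leave out one observation that the paper does make. The lemma assumes only that $c$ is continuous on $I_n=[0,2\pi]^n$, not that it is periodic. As a function on $\T^n$, the truncation $c\wedge K$ may therefore jump across the image of $\partial I_n$. In that case $\gamma_\eta\rightharpoonup\gamma$ alone does not give $\int(c\wedge K)\,\mathrm{d}\gamma_\eta\to\int(c\wedge K)\,\mathrm{d}\gamma$.

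The fix is short. Each marginal satisfies $\pi_j^\#\gamma\leq\rho_\gamma\,\mathrm{d}x$ with $\rho_\gamma$ bounded, so $\gamma(\partial I_n)=0$. This is what the paper's estimate $\gamma(I_n\setminus(\delta,2\pi-\delta)^n)\leq 2\norm{\rho_\gamma}_{\mL^\infty}\delta$ is for. Hence the discontinuity set of $c\wedge K$ is $\gamma$-null, and the portmanteau theorem gives the limit.

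Similarly, if $\rho_\gamma(0)\neq\rho_\gamma(2\pi)$, the torus convolution $\rho_\gamma\ast\phi_\eta$ does not converge uniformly, contrary to what you claim. It converges only boundedly almost everywhere, which still suffices by dominated convergence because $v_K\in\mL^1$. With these two remarks added, your proof is complete.
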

\begin{proof}
    Let $\gamma\in \mathcal{P}(I_n)$ be such that $\rho_\gamma \in C(I)$. Let $\phi \in C^\infty_c(I;\R_+)$ with $\int \phi =1$ and define
    \begin{align*}
            \gamma_\eta(x) \coloneqq \int_{I_n} (\otimes^n \phi_\eta)(x-y) \mathrm{d} \gamma(y) \in C^\infty_c(\R^n;\R_+),
    \end{align*}
    where $\phi_\eta(x) = \eta^{-1} \phi(x/\eta)$ and $(\otimes^n \phi)(x) = \phi(x_1)\phi(x_2)...\phi(x_n)$. Then note that, since $\rho_\gamma  \in C(I)$, we have 
    \begin{align*}
        \gamma(I_n\setminus (\delta,2\pi-\delta)^n) \leq  \left(\int_0^\delta \rho_\gamma + \int_{2\pi-\delta}^{2\pi} \rho_\gamma \right) \leq 2 \norm{\rho_\gamma}_{\mL^\infty} \delta\quad \mbox{for any $0<\delta<\pi$.}
    \end{align*}
    Hence, $\gamma(I_n\setminus (\delta,2\pi-\delta)^n) = 0$. Consequently, standard approximation arguments show that $\gamma_\eta \in C^\infty_c(\R^n)$ satisfies $\gamma_\eta \rightharpoonup \gamma$ in $\mathcal{P}(I_n)$. As $\rho_{\gamma_\eta} = \rho_\gamma \ast \phi_\eta$ and $\norm{\rho_\gamma}_{\mL^\infty}<\infty$, we also have $\rho_{\gamma_\eta} \overset{\ast}{\rightharpoonup} \rho_\gamma$ in $\mL^\infty(\R)$; hence, inequality~\eqref{eq:subgradient continuous} follows by integrating $c-\oplus^n v$ against $\gamma_\eta$ and passing to the limit $\eta \downarrow 0$.

    Next, by fixing $\rho_\gamma = n \xi$ for some $\xi \in \mH^1_{\rm per}(I) \cap \mathcal{P}(I) \subset C(I)$  in~\eqref{eq:subgradient continuous} and minimizing over $\gamma$ we get
    \begin{align*}
        F_{\rm OT}(\xi) -n \int_I v(x) \xi(x) \mathrm{d} x \geq 0.
    \end{align*}
    Since $v_0(\rho)$ is normalized, the preceding inequality with $\xi = \rho$ implies that
    \begin{align*}
        F_{\rm OT}(\rho) - n\inner{v_0(\rho),\rho} + n\inner{v_0(\rho),\rho} - n\int_I v(x) \rho(x) \mathrm{d} x = n\inner{v_0(\rho)-v,\rho} \geq 0.
    \end{align*}
    On the other hand, since the distribution $v-v_0(\rho)$ is non-negative and $\rho(x) \geq c > 0$ for any $x\in I$, we have 
    \begin{align*}
        0\leq \inner{v-v_0(\rho), \xi} \leq \bignorm{\xi/\rho}_{\mL^\infty} \inner{v-v_0(\rho), \rho} \leq \frac{1}{c} \norm{\xi}_{\mL^\infty} \inner{v-v_0(\rho), \rho} \leq 0, \quad \mbox{for any $0 \leq \xi \in \mH^1_{\rm per}(I)$.}
    \end{align*} 
    As any $\xi \in \mH^1_{\rm per}(I)$ is given by the difference of two non-negative functions, this shows that $v_0(\rho) = v$.
\end{proof}

We can now prove the following regularity result for the generalized Kantorovich potentials of the optimal transport problem with truncated cost.

\begin{lemma}[Regularity of generalized Kantorovich potential] \label{lem:regularity} Suppose that the cost function is bounded by $0 \leq c\leq h$ for some $h>0$. Then, every generalized Kantorovich potential $v_0(\rho) \in \mH^{-1}_{\rm per}(I)$ for some $\rho \in \mathcal{D}_{\rm per}$ satisfies $v_0(\rho) \in \mL^\infty(I)$ with the bound
\begin{align}
\mathrm{ess}\sup v_0(\rho) - \mathrm{ess}\inf v_0(\rho) \leq h .\label{eq:potential bound}
\end{align}
\end{lemma}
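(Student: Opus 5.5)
We show that $v_0=v_0(\rho)$ is a bounded distribution by testing the subgradient inequality of Lemma~\ref{lem:existence of potential} (with $\varepsilon=0$) against small perturbations of $\rho$; we may normalize $v_0$ in the sense of Remark~\ref{rem:normalized potential}. Fix a non-negative $\phi\in \mH^1_{\rm per}(I)$ with $\int_I\phi=1$. Since $\rho\geq c>0$, for $t>0$ small the density $\xi_t=(1-t)\rho+t\phi$ lies in $\mathcal{D}_{\rm per}$.

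\textbf{Upper bound on $F_{\rm OT}$.} Let $\gamma$ be optimal for $F_{\rm OT}(\rho)$ and define the plan
\begin{align*}
\gamma_t=(1-t)\gamma+t\,\gamma\otimes_1\phi ,
\end{align*}
where $\gamma\otimes_1\phi$ replaces the first variable by an independent $\phi$-distributed variable. We then symmetrize; symmetrizing preserves the cost bound. The result has marginals of the form $\rho+\tfrac{t}{n}(\phi-\rho)$, which is not quite $\xi_t$. To fix this, either rescale $t$, or more directly use the mixture $\gamma_t=(1-t)\gamma+t\gamma'$ where $\gamma'$ is any plan with marginal $\phi$. Since $0\le c\le h$, every plan $\gamma'$ has cost at most $h$, so
\begin{align*}
F_{\rm OT}(\xi_t)\le (1-t)F_{\rm OT}(\rho)+t\,h .
\end{align*}
Inserting this into the subgradient inequality $F_{\rm OT}(\xi_t)-F_{\rm OT}(\rho)\ge n\langle v_0,\xi_t-\rho\rangle=nt\langle v_0,\phi-\rho\rangle$ and dividing by $t$ gives
\begin{align*}
n\langle v_0,\phi\rangle\le h-F_{\rm OT}(\rho)+n\langle v_0,\rho\rangle = h ,
\end{align*}
using the normalization. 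Hence $\langle v_0,\phi\rangle\le h/n$ for every probability density $\phi\in\mH^1_{\rm per}$.

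\textbf{Lower bound on $F_{\rm OT}$.} Conversely, $F_{\rm OT}(\xi_t)\ge 0$, which only gives the weak lower bound $\langle v_0,\phi\rangle\ge \langle v_0,\rho\rangle - F_{\rm OT}(\rho)/(nt)$ and is useless in this form. So we take the opposite perturbation $\xi_{-t}=(1+t)\rho-t\phi$. This is admissible for small $t$ if $\phi$ is bounded, because then $\xi_{-t}$ stays positive. Write $\rho=\tfrac{1}{1+t}\xi_{-t}+\tfrac{t}{1+t}\phi$. By the convexity bound above,
\begin{align*}
F_{\rm OT}(\rho)\le \tfrac{1}{1+t}F_{\rm OT}(\xi_{-t})+\tfrac{t}{1+t}h ,
\end{align*}
so $F_{\rm OT}(\xi_{-t})-F_{\rm OT}(\rho)\ge t(F_{\rm OT}(\rho)-h)$. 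This is again an upper-type estimate and does not close the argument.

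\textbf{Main obstacle and resolution.} The genuine difficulty is the lower bound $\langle v_0,\phi\rangle\ge (F_{\rm OT}(\rho)-h)/n$, i.e.\ that $v_0$ is not too negative anywhere. The subgradient inequality with $\xi_{-t}$ gives $F_{\rm OT}(\xi_{-t})-F_{\rm OT}(\rho)\ge -nt\langle v_0,\phi-\rho\rangle$, which requires an upper bound on $F_{\rm OT}(\xi_{-t})$. The useful fact is $\xi_{-t}\ge 0$ when $t\|\phi\|_\infty\le c$, so any plan with marginal $\xi_{-t}$ costs at most $h$. This gives
\begin{align*}
h-F_{\rm OT}(\rho)\ge -nt\langle v_0,\phi\rangle+nt\langle v_0,\rho\rangle .
\end{align*}
Because the factor $t$ on the right cannot be removed, this yields only a bound depending on $c$ and $\|\phi\|_\infty$.

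Since that route degenerates, we instead use the upper bound alone. It shows that $h/n-v_0$ is a non-negative distribution, hence a Radon measure by the Riesz representation theorem, so $v_0$ is a measure bounded above by $h/n$. To get the lower bound, we combine this with the normalization and a weak $c$-transform. Set
\begin{align*}
u(x_1)=\operatorname{ess\,inf}_{x_2,\dots,x_n}\Big[c(x)-\sum_{j\ge2}v_0(x_j)\Big],
\end{align*}
which is well defined once $v_0$ is known to be bounded above. We then show $c-\oplus^n v_0\ge0$ almost everywhere, so that $u\ge v_0$ and $u\ge -(n-1)h/n$. Lemma~\ref{lem:simple} then forces $v_0=u$, which gives $v_0\ge -(n-1)h/n$. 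Together with $v_0\le h/n$, this gives an oscillation of at most $h$, which is \eqref{eq:potential bound}.

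The step I expect to be hardest is establishing the pointwise dual inequality $c-\oplus^n v_0\ge0$ almost everywhere from the distributional subgradient property. The approach is to test the subgradient inequality against plans $\gamma$ that are concentrated near a point $(x_1,\dots,x_n)$ and use smooth densities. Since such $\gamma$ need not have a marginal in $\mathcal{D}_{\rm per}$, we mix them with $\rho$ in a small proportion $t$ and divide by $t$, exactly as in the first step.
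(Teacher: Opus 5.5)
Your outline matches the paper's proof up to the last step: the upper bound $n\langle v_0,\phi\rangle\le h$ from $F_{\rm OT}\le h$ and the normalized subgradient inequality, Riesz representation of $h/n-v_0$ as a non-negative measure, an a.e.\ dual inequality from localized test plans, and Lemma~\ref{lem:simple} applied to a competitor $u\ge v_0$. The final appeal to Lemma~\ref{lem:simple} is where the argument breaks.

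\textbf{First gap: admissibility of the competitor.} Lemma~\ref{lem:simple} requires $u$ to be dual admissible in all $n$ slots, $c-\oplus^n u\ge 0$ a.e., and you never verify this. For the $c$-transform it does not follow from the definition, which only gives $c(x)-u(x_1)-\sum_{j\ge2}v_0(x_j)\ge0$ a.e. Replacing the other $v_0(x_j)$ by the larger $u(x_j)$ can destroy the inequality. For instance, with $n=2$, $c\equiv0$ and $v_0\equiv-1$ one gets $u\equiv1$ and $c-u\oplus u\equiv-2$. In your setting $u=v_0$ a.e.\ is exactly what must be proved, so you cannot invoke it. The standard repair is to average, $\bar u=\frac{n-1}{n}v_0+\frac1n u$. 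By symmetry of $c$, $c-\oplus^n\bar u=\frac1n\sum_{i}\bigl(c-\sum_{j\neq i}v_0(x_j)-u(x_i)\bigr)\ge0$ a.e. Lemma~\ref{lem:simple} then gives $\bar u=v_0$, hence $u=v_0$ a.e.\ and $v_0\ge-\frac{n-1}{n}h$. This is the device the paper uses in the subsequent lemma. In the present lemma the paper uses the simpler competitor $u=\max\{v_0,-\frac{n-1}{n}h\}$, whose admissibility follows directly from $c\ge0$ and $v_0\le h/n$. Wherever $c-\oplus^n u<0$, each $u(x_i)$ must exceed $-\frac{n-1}{n}h$, so $u=v_0$ in every coordinate. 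That point therefore lies in the null set $\{c-\oplus^n v_0<0\}$.

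\textbf{Second gap: the singular part.} After the Riesz step you only know $v_0=v_0^{\rm ac}\,\mathrm{d}x-\mu_s$ with $\mu_s\ge0$ singular. So neither ``$c-\oplus^n v_0\ge0$ a.e.'' nor your essential infimum makes sense until you pass to the density $v_0^{\rm ac}$. Moreover, testing the subgradient inequality with a localized smooth plan $\gamma'$ gives $\int(c-\oplus^n v_0^{\rm ac})\,\mathrm{d}\gamma'+\int\rho_{\gamma'}\,\mathrm{d}\mu_s\ge0$. The extra non-negative term is on the wrong side and must be shown to vanish as $\gamma'$ concentrates. Your sketch of the step you call hardest does not address this.

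The paper handles it as follows. It tests only on compact $K\subset(I\setminus\widetilde A)^n$, where $A$ is a Lebesgue-null set carrying $\mu_s$. It uses Lipschitz $f_k\downarrow\mathbb{1}_{\widetilde K}$, so that $\rho_{f_k}\to0$ on $\widetilde A$, and dominated convergence removes the $\mu_s$-term. Alternatively, your point-concentration idea works if you add two facts: $\mu_s([y-\eta,y+\eta])/\eta\to0$ for Lebesgue-a.e.\ $y$, and Lebesgue differentiation of $c-\oplus^n v_0^{\rm ac}$ holds at a.e.\ point of $I_n$. Once the a.e.\ inequality holds for $v_0^{\rm ac}$, run the corrected final step with $v_0^{\rm ac}$ in place of $v_0$. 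It also yields $\mu_s=0$, because the competitor dominates $v_0^{\rm ac}\,\mathrm{d}x-\mu_s$ as a distribution.
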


\begin{proof} \textbf{Step 1: (From distributions to measures)} Without loss of generality, we assume $v_0(\rho)$ is normalized, i.e., $F_{\rm OT}(\rho) - n \inner{v_0(\rho),\rho} = 0$. Thus, from the definition of the generalized Kantorovich potential we have
\begin{align}
    -n \inner{v_0(\rho),\xi} \geq -F_{\rm OT}(\xi) \geq -h, \quad \mbox{for any $\xi \in \mH^1_{\rm per}(I)\cap \mathcal{P}(I)$.}   \label{eq:subgradient inequality}
\end{align}
Hence, 
\begin{align*}
    \inner{h/n-v_0(\rho), \xi} \geq 0 \quad \mbox{for any $\xi \geq 0\in \mH^1_{\rm per}(I)$.}
\end{align*}
Consequently, $h/n -v_0(\rho)$ is a positive distribution in $\mathcal{D}'(\T)$. Therefore, by the Riesz representation theorem in $C(\T)$, there exists a unique non-negative Radon measure $\mu \in \mathcal{M}(\T)$ such that
\begin{align*}
    \inner{h/n-v_0(\rho), f} = \int_{\T} f(x) \mathrm{d} \mu(x), \quad \mbox{for any $f\in \mH^1_{\rm per}(I) \cong \mH^1(\T)$,}
\end{align*}
where the identification $\mH^1_{\rm per}(I) \cong \mH^1(\T)$ is done via the quotient map $\pi : I = [0,2\pi] \mapsto \T = [0,2\pi]/\sim$, $\pi(x) = x \mod 2\pi$. We can therefore write 
\begin{align*}
    v_0(\rho) = v_0(x) \mathrm{d} x - \mu_{s},
\end{align*}
where $h/n-v_0(x) \in \mL^1(\T)\cong \mL^1(I)$ is the Radon-Nikodym derivative of $\mu$ with respect to the Lebesgue (Haar) measure on the torus and $\mu_s\geq 0$ is the singular part. 

\textbf{Step 2: (Regular subgradient)} We now claim that 
\begin{align}
    c(x)-(\oplus^n v_0)(x) \geq 0 \quad \mbox{for almost every $x\in I_n$.} \label{eq:positive claim}
\end{align}
To prove this, first notice that, since $\mu_s \perp \mathrm{d}x $, there exists a Borel set $A\subset \T$  such that $|A| = 0$ and $\mu(\T\setminus A) = 0$. Let $\widetilde{A} = \pi^{-1}(A)$, where $\pi:I \rightarrow \T$ is the quotient map introduced before, then the set $(I\setminus \widetilde{A})^n$ has full Lebesgue measure in $I_n$. In particular, to prove~\eqref{eq:positive claim}, it suffices to show that 
\begin{align*}
    \int_K c(x)-(\oplus^n v_0)(x) \mathrm{d} x \geq 0 \quad \mbox{for any compact set $K \subset (I\setminus \widetilde{A})^n$.}
\end{align*}
For this, let $K$ be such a set and let $\widetilde{K} = \cup_{\ell \in (2\pi \Z)^n} (K+\ell)$ be the periodization of $K$. Then pick a sequence of periodic Lipschitz functions $f_h$ such that $0\leq f_h \leq 1$ and $f_h(x) \downarrow \mathbb{1}_{\widetilde{K}}(x)$; for instance, $f_k(x) \coloneqq 1- \min\{1,k \mathrm{dist}(x,\widetilde{K})\}$. Since $\widetilde{K} \cap I_n \subset (I\setminus \widetilde{A})^n$, the single-particle density of $f_k$ satisfy
\begin{align*}
    \lim_{k\rightarrow \infty} \rho_{f_k}(x) = \lim_{k\rightarrow \infty} \int_{I_{n-1}} \left(f_k(x,x_2,...,x_n) ... + f_k(x_2,...,x_n,x) \right)\mathrm{d} x_2... \mathrm{d} x_n \rightarrow 0 \quad \mbox{for any $x\in \widetilde{A}$.}
\end{align*}
Therefore, as $|\widetilde{K}\cap I_n \setminus K| = 0$, we can apply the dominated convergence theorem and the subgradient inequality~\eqref{eq:subgradient inequality} to obtain
\begin{align*}
    \int_{K} (c-\oplus^n v_0)(x) \mathrm{d} x = \lim_{k\rightarrow \infty} \left(\int_{I_n} (c-\oplus^n v_0)(x)|f_k(x)|^2 \mathrm{d} x + \int_\T \rho_{f_k}(x) \mathrm{d} \mu_s(x) \right)\geq 0,
\end{align*}
which establishes our claim. Moreover, from this claim and the second statement in Lemma~\ref{lem:simple}, we conclude that $\mu_s = 0$. 

\textbf{Step 3: (Uniform bounds)} To complete the proof, we now show that $-h \frac{n-1}{n} \leq v_0 \leq h/n$ almost everywhere in $I$. To this end, first notice that the a.e. bound~\eqref{eq:positive claim} yields $h - \otimes^n v_0 \geq c- \otimes^n v_0 \geq 0$ a.e. in $I_n$, which implies that $v_0 \leq h/n$ a.e. in $I$.  On the other hand, if we define $u = \max \{v_0, - \frac{n-1}{n} h\}$, then for any point $x= (x_1,...,x_n) \in I_n$ such that $c(x) - (\oplus^n u)(x) < 0$, we have 
\begin{align*}
    0>c(x) - \sum_{j\neq i}^n u(x_j) - u(x_i) \geq - \frac{n-1}{n} h - u(x_i) \quad \Rightarrow \quad u(x_i) > -\frac{n-1}{n} h \quad \Rightarrow u(x_i) = v_0(x_i). 
\end{align*}
Hence, $\{x: (c- \oplus^n u)(x) < 0\} \subset \{x:(c - \oplus^n v_0)(x) < 0\}$. Consequently, $c(x)-\oplus^n u(x) \geq 0$ a.e. in $I_n$. We can now use Lemma~\ref{lem:simple} and argue as before to conclude that $0 \geq \int_I (u-v_0)(x) \rho(x) \mathrm{d} x$ and therefore $v_0 = u$ a.e. in $I$, which completes the proof.
\end{proof}

We are now ready to prove the main result of this section, namely, that a generalized Kantorovich potential is in fact a standard (regular) Kantorovich potential for bounded cost functions.

\begin{lemma}[From generalized to classical Kantorovich potentials] Suppose the cost function is continuous and bounded. Then, for any $\rho \in \mathcal{D}_{\rm per}$, any generalized Kantorovich potential $v_0 = v_0(\rho) \in \mL^\infty(I)$ has a continuous representative $v_0 \in C(I)$ satisfying
\begin{align*}
    v_0(x) = \inf_{y \in I_{n-1}} c(x,y_1,...,y_{n-1}) - \sum_{j=1}^{n-1} v_0(y_j).
\end{align*}
In particular $v_0$ is a classical Kantorovich potential.
\end{lemma}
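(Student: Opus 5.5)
The approach is the standard multimarginal $c$-transform argument, combined with the minimality statement of Lemma~\ref{lem:simple}. By Lemma~\ref{lem:regularity}, $v_0\in\mL^\infty(I)$. We may assume $v_0$ is normalized, which only shifts it by a constant. By Step~2 of that proof, $c-\oplus^n v_0\geq 0$ almost everywhere in $I_n$.

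Define the $c$-transform
\begin{align*}
    u(x) \coloneqq \operatorname*{ess\,inf}_{y\in I_{n-1}} \Big( c(x,y_1,\dots,y_{n-1}) - \sum_{j=1}^{n-1} v_0(y_j)\Big).
\end{align*}
The essential infimum is taken in $y$ with respect to Lebesgue measure on $I_{n-1}$, so the definition does not depend on the representative of $v_0$. Since $c$ is continuous on the compact set $I_n$, it is uniformly continuous. Subtracting the bounded term $\sum_j v_0(y_j)$, each function $x\mapsto c(x,y)-\sum_j v_0(y_j)$ therefore has a common modulus of continuity $\omega$. An (essential) infimum of equicontinuous functions with a common modulus is again continuous with modulus $\omega$, and it is finite because $c$ and $v_0$ are bounded. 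Hence $u\in C(I)$. By construction $c-\oplus^n u\geq 0$ almost everywhere. For $c(x)\geq u(x_1)+\sum_{j\geq2}v_0(x_j)$ this holds for a.e.\ $y$ at each fixed $x_1$ by definition of the essential infimum; by symmetry of $c$ it holds in any slot, and the errant null sets are handled by Fubini.

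Next we show $u=v_0$ a.e. The a.e.\ inequality $c-\oplus^n v_0\geq 0$ and Fubini give, for a.e.\ $x_1$, the bound $c(x_1,y)-\sum_j v_0(y_j)\geq v_0(x_1)$ for a.e.\ $y$. Hence $u\geq v_0$ a.e. Now apply the second statement of Lemma~\ref{lem:simple} to $v=u$. We have $u\in\mL^1(I)$, $c-\oplus^n u\geq0$ a.e., and $u\geq v_0$ as distributions in $\mH^{-1}_{\rm per}$, since pointwise a.e.\ inequality of $\mL^\infty$ functions gives the distributional one. The lemma then yields $u=v_0$ in $\mH^{-1}_{\rm per}$, hence a.e. So $u$ is the desired continuous representative.

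To get the displayed formula with a genuine infimum, replace $v_0$ by $u$ in the essential infimum. Because $u$ is continuous and $c$ is continuous, the function $y\mapsto c(x,y)-\sum_j u(y_j)$ is continuous, and its essential infimum equals its infimum. Replacing $v_0$ by its a.e.-equal representative $u$ does not change the essential infimum. Thus
\[
u(x)=\inf_y\Big(c(x,y)-\sum_j u(y_j)\Big),
\]
which is the claimed identity with $v_0$ identified with its continuous representative.

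Finally, the Kantorovich property. Continuity of $u$ and of $c$ upgrades the a.e.\ inequality $c-\oplus^n u\geq0$ to an everywhere inequality. The identity $F_{\rm OT}(\rho)=n\int u\rho$ is exactly the normalization, since $v_0=u$ as distributions. Definition~\ref{def:Kantorovich potentials} is stated with $\rho$ a probability measure, so the normalization factor must be tracked: $c_n$ against $\gamma\in\Pi(\rho)$ gives $n\int u\rho$. The step I expect to require the most care is the Fubini and null-set bookkeeping showing that $c-\oplus^n u\geq0$ a.e.\ and $u\geq v_0$ a.e. when $u$ is defined through an essential infimum over $n-1$ variables and then substituted into each slot. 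Working throughout with the essential infimum avoids having to pick a good representative of $v_0$ a priori.
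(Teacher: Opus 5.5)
\textbf{Gap: the step ``By construction $c-\oplus^n u\geq 0$ almost everywhere'' does not follow.} What the construction gives, after Fubini and the symmetry of $c$, is only the one-slot inequality
\begin{align*}
    c(x) - u(x_i) - \sum_{j\neq i} v_0(x_j) \geq 0 \quad \text{a.e. in $I_n$, for each $1\leq i\leq n$.}
\end{align*}
These $n$ inequalities do not combine into $c(x)\geq \sum_j u(x_j)$. You have just shown $u\geq v_0$, so replacing the remaining $v_0(x_j)$ by $u(x_j)$ increases the right-hand side, which is the wrong direction.

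The $c$-transform construction alone cannot give the claim. Take $n=2$, $c\equiv 0$, $v_0\equiv -1$: then $c-v_0\oplus v_0\geq 0$ and $u\equiv 1$, but $c(x,y)-u(x)-u(y)=-2$. In your setting the inequality $c-\oplus^n u\geq 0$ is true in the end, but only because $u=v_0$ a.e. That is exactly what you then derive from it via Lemma~\ref{lem:simple}, so the argument is circular as written. The delicate point is therefore not the null-set bookkeeping you flagged, but this logical step.

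\textbf{Fix: symmetrize before invoking Lemma~\ref{lem:simple}.} Set $\bar v \coloneqq \frac{n-1}{n}\,v_0 + \frac{1}{n}\,u$. Averaging the $n$ one-slot inequalities gives
\begin{align*}
    c(x) - (\oplus^n \bar v)(x) = \frac{1}{n}\sum_{i=1}^n \Big( c(x) - u(x_i) - \sum_{j\neq i} v_0(x_j)\Big) \geq 0 \quad \text{a.e. in $I_n$.}
\end{align*}
Also $\bar v\geq v_0$ a.e., since $u\geq v_0$. Lemma~\ref{lem:simple} applied to $v=\bar v$ yields $\bar v=v_0$, that is, $u=v_0$ a.e. This is exactly how the paper argues, with $\bar u = \frac{n-1}{n}u+\frac1n u_c$.

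Once $u=v_0$ a.e. is established this way, your remaining steps go through as written:
\begin{itemize}
\item continuity of $u$ via the common modulus of continuity;
\item replacing the essential infimum by an infimum once $u$ is substituted for $v_0$;
\item upgrading $c-\oplus^n u\geq 0$ from a.e.\ to everywhere by continuity;
\item the normalization giving $F_{\rm OT}(\rho)=n\int_I u\rho$.
\end{itemize}
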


\begin{proof} First, let us introduce the following weak version of the multimarginal $c$-transform: for $u\in \mL^\infty(I)$,
\begin{align*}
    u_c(x) \coloneqq \mathrm{ess} \!\!\!\inf_{y \in I_{n-1}} c(x,y_1,...,y_{n-1}) - \sum_{j=1}^{n-1} u(y_j)
\end{align*}
Then, we claim that, for any $u\in \mL^\infty(I)$ we have
\begin{align}
    u(x) \leq u_c(x) \quad \mbox{for a.e. $x\in I$ if and only if} \quad c(x)- (\oplus^n u)(x) \geq 0 \quad \mbox{for a.e. $x\in I_n$.} \label{eq:obs 1}
\end{align}
Indeed, let $\mathcal{N} \coloneqq \{x \in I_n : c(x) - \oplus^n u(x) < 0 \}$ and $\mathcal{N}_{x} \coloneqq \{ y \in I_{n-1} : c(x,y_2,...,y_n) - \oplus_{j=1}^{n-1} u(y_j) < u(x) \}$, then
\begin{align*}
    |\mathcal{N}| = \int_I \int_{\mathcal{N}_x} \mathrm{d} y \mathrm{d} x = \int_I |\mathcal{N}_x| \mathrm{d} y.
\end{align*}
Hence $|\mathcal{N}| = 0$ if and only if $|\mathcal{N}_x| = 0$ for a.e. $x\in I$. The first statement is equivalent to $c-\oplus^n u \geq 0$ a.e. in $I_n$, while the second is equivalent to $u\leq u_c$ a.e. in $I$. Thus the claim holds. 

Next, note that a similar argument shows that
\begin{align*}
    c(x) - \oplus_{j\neq i}^n u(x_j) - u_c(x_i) \geq 0 \quad \mbox{for a.e. $x \in I_n$ and any $1\leq i \leq n$.}
\end{align*}
Hence, if we set $\bar{u}(x) = \frac{n-1}{n} u(x) + \frac{1}{n} u_c(x)$, it follows that
\begin{align}
    c(x) - (\oplus^n \bar{u})(x) = \frac{1}{n} \sum_{i=1}^n c(x) - \oplus_{j\neq i} u(x_j) - u_c(x_i) \geq 0 \quad \mbox{a.e. in $I_n$.} \label{eq:obs 2}
\end{align}

Now let $v_0 \in \mL^\infty(I)$ be the generalized Kantorovich potential from Lemma~\ref{lem:regularity}. Then, since $c- \oplus^n v_0 \geq 0$ a.e. (see the proof of Lemma~\ref{lem:regularity}), by~\eqref{eq:obs 1} and~\eqref{eq:obs 2}, the function
\begin{align*}
    \bar{v}(x) = \frac{n-1}{n} v_0(x) + \frac{1}{n} (v_0)_c(x)
\end{align*}
satisfies $\bar{v} \geq v_0$ a.e. and $c-\oplus^n \bar{v} \geq 0$ a.e. in $I_n$. By Lemma~\ref{lem:simple}, this implies that $\bar{v} = v_0$ and therefore $v_0 = (v_0)_c$ a.e. in $I$. To complete the proof, we now note that the $c$ transform is regularizing, i.e., $(v_0)_c$ is continuous for any $v_0 \in \mL^\infty(I)$. Indeed, since $c$ is continuous in the compact set $I_n$, it is uniformly continuous. Hence for any $\epsilon>0$, there exists $\delta>0$ such that $|c(x,y_1,...,y_{n-1})-c(x',y_1,...,y_{n-1})| \leq \epsilon$ provided that $|x-x'|\leq \delta$, and therefore,
\begin{align*}
    (v_0)_c(x)\leq \mathrm{ess} \inf_{y\in I_{n-1}} \{c(x',y) - (\otimes_{j=2}^n v_0)(y) + \epsilon\} = (v_0)_c(x') + \epsilon \quad \mbox{for $|x'-x|<\delta$.}
\end{align*}
As we can exchange the roles of $x$ and $x'$, this shows that $(v_0)_c$ is continuous. Since $v_0 = (v_0)_c$ a.e., $(v_0)_c$ is the continuous representative of $v_0$. Moreover, by continuity of $c$ and $(v_0)_c$, we can replace $\mathrm{ess} \inf$ by $\min$ in the definition of $(v_0)_c$, i.e., we have
\begin{align*}
    (v_0)_c(x) = \mathrm{ess} \!\!\!\inf_{y\in I_{n-1}} c(x,y) - (\oplus_{j=2}^n v_0)(y) = \mathrm{ess}\!\!\! \inf_{y\in I_{n-1}} c(x,y) - \oplus^n_{2} (v_0)_c(y) = \min_{y\in I_{n-1}} c(x,y) - (\oplus_{2}^n (v_0)_c(y),
\end{align*}
which completes the proof.
\end{proof}

\begin{remark}[Generalized Kantorovich potentials for vanishing densities] If $\rho \in \mathcal{R}_{\rm per}\setminus \mathcal{D}_{\rm per}$, i.e., $\rho(x) = 0$ for some $x\in I$, then for any Kantorovich potential $v_0 \in \partial F_{\rm OT}(\rho)$, the distribution $v_0 + \alpha \delta_x$ for $\alpha >0$ is also a Kantorovich potential. In particular, Lemma~\ref{lem:regularity} fails in this case.
\end{remark}
\subsection{Local equivalence to truncated costs}
\label{sec:reduction to truncated cost}

We have now seen that generalized Kantorovich potentials are regular potentials for bounded and continuous cost functions. To pass to the case of unbounded costs, we now show that, locally in $\mathcal{D}_{\rm per}$, the optimal transport problem with unbounded cost can be reduced to the problem with a truncated cost. For this, we first prove the following lemma, which is an extension of \cite[Theorem 4.1]{CDS19} (or rather \cite[Remark 4.2]{CDS19}, see also \cite[Proposition 2.5]{BCD18}) to the periodic and quantum cases.

\begin{lemma}[Crude local bound on optimal cost]\label{lem:uniform bound} Let $\rho \in \mathcal{P}(I)$ be such that $\kappa(\rho;r) < \frac{1}{n}$ for some $r>0$ and let $M$ denote the function introduced in~\eqref{eq:m and M functions}. Then we have
\begin{align}
    F_{\rm OT}(\rho) \leq n(n-1) M(r). \label{eq:local bound}
\end{align}
Moreover, if $\sqrt{\rho} \in \mH^1_{\rm per}(I)$, then
\begin{align}
    F^\varepsilon_{\rm per}(\rho) \leq \varepsilon n \left(\int_I |\partial_x \sqrt{\rho}|^2 + \frac{1}{\eta^2} \int_I |\nabla \chi|^2 \right) + n(n-1) M(r-4\eta),\label{eq:local bound quantum case}
\end{align}
for any $0<\eta<r/4$ and any mollifier $\chi\in C_c^\infty(\R)$ as in Proposition~\ref{prop:regularized}.
\end{lemma}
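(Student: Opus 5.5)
The plan is to construct an explicit competitor plan $\gamma \in \Pi(\rho)$ whose support lies in $I_n \setminus D_r = \{x : |x_i-x_j|_\T \geq r \ \text{for all } i\neq j\}$. Once we have it, \eqref{eq:interaction bounds} and the monotonicity of $M$ give $c_n(x) \leq \sum_{i\neq j} M(|x_i-x_j|_\T) \leq n(n-1)M(r)$ on $\mathrm{supp}\,\gamma$, and \eqref{eq:local bound} follows by integrating.

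\textbf{Construction of the competitor.} I follow the strategy of \cite[Theorem 4.1, Remark 4.2]{CDS19} and \cite[Proposition 2.5]{BCD18}, replacing Euclidean balls by periodic balls $B_{\rm per}(x,r)$. Start from any plan (for instance the product plan $\rho^{\otimes n}$) and repair it. Consider the set of configurations where some pair satisfies $|x_i-x_j|_\T<r$. For such a configuration, fix a particle $x_i$ and remove it from the cluster. The remaining $n-1$ particles have periodic $r$-neighbourhoods carrying $\rho$-mass at most $(n-1)\kappa(\rho;r) < 1 - 1/n$. Hence there is a positive amount of mass, at least $1-(n-1)\kappa(\rho;r) > 1/n$, available far from all of them. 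Redistribute the mass of $x_i$ onto this admissible region by a marginal-preserving exchange (the "swap" argument of \cite{BCD18,CDS19}), and iterate over the coordinates.

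An alternative, cleaner route is a direct construction. Use a symmetrized "cyclic" plan $(\mathrm{id},T,\dots,T^{(n-1)})^\#\rho$, where $T$ is a periodic rotation-type map sending each $x$ to the point $y$ with $\rho$-mass of the arc $[x,y)$ equal to $1/n$. The condition $\kappa(\rho;r)<1/n$ guarantees that any arc of $\rho$-mass $1/n$ has length at least $r$ in each direction. Then consecutive points $T^{(k)}(x), T^{(k+1)}(x)$ are at periodic distance at least $r$, and, since the $n$ points wind once around the circle, all pairs are at periodic distance at least $r$. This is in fact the route I would take: the points $x, T(x),\dots,T^{(n-1)}(x)$ split the circle into $n$ arcs of mass $1/n$ each, each arc contains no ball of mass $\geq 1/n$ strictly inside, and hence $|T^{(k)}x - T^{(l)}x|_\T \geq r$. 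Care is needed when $\rho$ has atoms or vanishes on intervals. Atoms are excluded since $\kappa<1/n$ forces every atom to have mass $<1/n$, but a generalized quantile construction still handles them. Marginals are preserved because $T$ pushes $\rho$ to $\rho$ on the torus.

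\textbf{Quantum bound.} For \eqref{eq:local bound quantum case}, apply Proposition~\ref{prop:regularized} to this competitor $\gamma$ with $\alpha = r$ and $\eta<r/4$. This gives a periodic trial state $\Gamma_\eta$ with density $n\rho$ and kinetic energy given by \eqref{eq:kinetic energy bound}. Its $n$-particle density is supported in $I_n\setminus D_{r-4\eta}$, so the interaction energy is at most $n(n-1)M(r-4\eta)$. Adding $\varepsilon$ times the kinetic term yields \eqref{eq:local bound quantum case}.

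\textbf{Main obstacle.} I expect the hard step to be verifying rigorously that the periodic cyclic map $T$ yields pairwise separation at least $r$, including the non-consecutive pairs and the degenerate cases where $\rho$ vanishes on arcs so that $T$ is not unique. I would first try to handle this by choosing a right-continuous generalized inverse of the periodic cumulative distribution function.
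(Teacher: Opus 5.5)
Your proposal is correct. Your final step (bounding $c_n$ by $n(n-1)M$ on the support, then regularizing via Proposition~\ref{prop:regularized} with $\alpha=r$ and $\eta<r/4$) is exactly what the paper does.

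The difference is how you obtain the well-separated competitor. The paper notes that $\rho(\{y:|x-y|_{\T}<r\})<1/n$ for every $x$ and cites the general existence result \cite[Theorem 4.3]{CDS19}, which gives some $\gamma\in\Pi(\rho)$ with $\gamma(D_r)=0$. You construct one explicitly: the cyclic plan generated by the mass-$1/n$ rotation $T(x)=G\big((F(x)+\tfrac1n)\bmod 1\big)$. This is precisely the Seidl plan $\gamma_\rho$ of \eqref{eq:optimizer} on $[0,2\pi]$.

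Your separation argument works, but you stated it the wrong way round: the phrase ``each arc contains no ball of mass $\geq 1/n$ strictly inside'' reverses the inclusion. The correct version is as follows. For $k\neq l$, each of the two arcs joining $T^{(k)}x$ and $T^{(l)}x$ is a union of at least one of the $n$ consecutive arcs of mass exactly $1/n$. Exactness uses continuity of $F$, so that $F\circ G=\mathrm{id}$. A closed arc of length $\ell<2r$ lies in the open periodic ball of radius $r$ about its midpoint, so its mass is at most $\kappa(\rho;r)<1/n$. Hence $|T^{(k)}x-T^{(l)}x|_{\T}\geq 2r$, and $\gamma_\rho$ is concentrated on the closed set $I_n\setminus D_{2r}$.

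Two side points. Arcs where $\rho$ vanishes are harmless with a generalized inverse. Atoms are not ``excluded'' by $\kappa(\rho;r)<1/n$; that condition only bounds their mass. They could still be handled by the quantile coupling $U\mapsto(G((U+k/n)\bmod 1))_k$ with $U$ uniform, though in the paper $\rho$ is a density anyway. The ``swap'' sketch in your first paragraph is only heuristic and can be dropped.

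As for what each route buys: the paper's citation is a one-liner and dimension-free, consistent with its remark that the periodic arguments extend to $\T^d$. Your construction is self-contained and elementary, and it ties the lemma back to the Seidl plan of Section~\ref{sec:well-ordering}. It relies on the cyclic order of the circle, so it is intrinsically one-dimensional. It also gives the sharper separation $2r$, i.e.\ $F_{\rm OT}(\rho)\leq n(n-1)M(2r)$.
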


\begin{proof} Define $D_r^2 = \{(x,y) \in I\times I: |x-y|_{\mathbb{T}} < r\}$. Since $\kappa(\rho;r) < 1/n$, we have $\rho(D^2_r(x)) < \frac{1}{n}$ for any $x\in I$, where $D^2_r(x) = \{ y : (x,y) \in D^2_r\} = \{ y : |y-x|_{\T} < r\}$. Therefore, we can apply \cite[Theorem 4.3]{CDS19} to find a plan $\gamma \in \Pi(\rho)$ such that $\gamma(D_r) = 0$ where $D_r\coloneqq \{ (x_1,..,x_n) \in I_n: |x_j-x_k|_\T < r \quad \mbox{for some $j\neq k$}\}$. Hence, from~\eqref{eq:interaction bounds} we find that $F_{\rm OT}(\rho) \leq \int c_n \mathrm{d} \gamma  = \int_{I_n \setminus D_r} c_n \mathrm{d} \gamma \leq n(n-1) M(r)$, which completes the proof of~\eqref{eq:local bound}.

For the estimate on the quantum case ($\varepsilon>0$), we define $\gamma$ as before and use the regularized density matrix $\Gamma_\eta$ for $\eta < r/4$ defined via~\eqref{eq:regularized DM} as a trial state. The estimate then follows from the kinetic energy bound~\eqref{eq:kinetic energy bound}, the fact that $\mathrm{supp}(\Gamma_\eta(x,x)) \cap I_n \subset I_n\setminus D_{r-4\eta}$, and the definition of $M$ (see~\eqref{eq:m and M functions}. 
\end{proof}

One can now combine Lemmas~\ref{lem:support} and~\ref{lem:uniform bound} to obtain the following analogue of \cite[Lemma 5.1]{CDS19}.

\begin{lemma}[Equivalence to truncated costs] \label{lem:truncated cost} Suppose that $w$ satisfies Assumption~\ref{assump:interaction} and $\rho \in \mathcal{P}(I)$ satisfies $\kappa(\rho;r) < 1/n$ for some $r> 0$. Let $\beta$ and $h>0$ be such that
\begin{align}
    0<\beta/2 \leq r, \quad m(\beta) > \frac{n(n-1) M(r)}{1-n \kappa(\rho;r)},\quad \mbox{and}\quad h > 2(n-1) M(\beta/2). \label{eq:support conditions 2}
\end{align}
Then 
\begin{enumerate}[label=(\roman*)]
\item \label{it:truncated optimizers} $\gamma \in \Pi(\rho)$ is a minimizer of $F_{\rm OT}(\rho)$ if and only if it is a minimizer of the optimal transport problem
\begin{align*}
    F_{\rm OT}^{h}(\rho) \coloneqq\inf_{\gamma \in \Pi(\rho)} \int_{I_n} \sum_{j\neq k} w^{h}(x_j,x_k)  \mathrm{d} \gamma(x),\quad \mbox{where $w^{h}(x,y) \coloneqq \min\{w(x,y), h\}$.}
\end{align*}
Moreover, $F_{\rm OT}^h(\rho) = F_{\rm OT}(\rho)$. 
\item \label{it:truncated potentials} If $v\in C(I)$ is a (regular) Kantorovich potential of $F_{\rm OT}^{h}(\rho)$, then it is also a (regular) Kantorovich potential for $F_{\rm OT}(\rho)$.
\end{enumerate}
\end{lemma}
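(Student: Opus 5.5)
The plan is to combine the crude bound of Lemma~\ref{lem:uniform bound} with the off-diagonal support result of Lemma~\ref{lem:support}, applied to both costs $w$ and $w^h$. First, by Lemma~\ref{lem:uniform bound}, $F_{\rm OT}(\rho)\le n(n-1)M(r)$. Hence the middle condition in~\eqref{eq:support conditions 2} gives $m(\beta) > F_{\rm OT}(\rho)/(1-n\kappa(\rho;r))$. Together with $\beta/2\le r$ and $h>2(n-1)M(\beta/2)$, this is exactly~\eqref{eq:conditions for support}. Lemma~\ref{lem:support} therefore shows that every minimizer $\gamma$ of $F_{\rm OT}(\rho)$ satisfies $\gamma(D^h_w)=0$. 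Outside $D_w^h$ we have $w=w^h$ for all pairs, so $c_n=c_n^h$ $\gamma$-a.e. Since $w^h\le w$ everywhere, this yields
\[
F^h_{\rm OT}(\rho)\le \int c^h_n\,\mathrm d\gamma=\int c_n\,\mathrm d\gamma=F_{\rm OT}(\rho).
\]

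For the reverse inequality and the characterization of minimizers in~\ref{it:truncated optimizers}, I apply Lemma~\ref{lem:support} to the truncated cost $w^h$. Its functions satisfy $m^h=\min\{m,h\}$ and $M^h=\min\{M,h\}\le M$, and Lemma~\ref{lem:support} only uses the inequalities~\eqref{eq:interaction bounds} and the CDS19 argument, not blow-up at the diagonal. The crude bound gives $F^h_{\rm OT}(\rho)\le n(n-1)M^h(r)\le n(n-1)M(r)$. The condition on $m^h(\beta)$ needs $\min\{m(\beta),h\}>n(n-1)M(r)/(1-n\kappa)$. Because $m(\beta)\le M(\beta/2)$ and $h>2(n-1)M(\beta/2)\ge M(\beta/2)$ (for $n\ge2$), we get $h> m(\beta)$, so $m^h(\beta)=m(\beta)$ and the condition holds. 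The threshold condition is also fine: $h>2(n-1)M^h(\beta/2)$ holds since $M^h\le M$.

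This check is the step I expect to be delicate. I must make sure the CDS19 proof goes through with the capped functions. If the argument needs $h$ strictly above the thresholds, I will slightly enlarge $h$, using that $D^{h'}_{w}\subset D^h_w$ for $h'\ge h$ and that $w^{h'}$ and $w^h$ agree off $D^h_w$. The conclusion is that any minimizer $\gamma'$ of $F^h_{\rm OT}$ satisfies $\gamma'(D^h_{w^h})=0$. Since $D^h_{w^h}\subset D^h_w$ and points with $w>h$ have $w^h=h$, I will argue the relevant set up to the strict/non-strict boundary $\{w=h\}$, where $w=w^h$ anyway. It follows that $c_n=c_n^h$ $\gamma'$-a.e., so
\[
F_{\rm OT}(\rho)\le\int c_n\,\mathrm d\gamma'=F^h_{\rm OT}(\rho).
\]
Hence $F^h_{\rm OT}(\rho)=F_{\rm OT}(\rho)$. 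The chains of inequalities then show that every minimizer of one problem attains the value of the other, which proves~\ref{it:truncated optimizers}.

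For~\ref{it:truncated potentials}, let $v\in C(I)$ be a Kantorovich potential for $F^h_{\rm OT}(\rho)$. Then $\int v\rho=F^h_{\rm OT}(\rho)=F_{\rm OT}(\rho)$, and pointwise
\[
c_n-\oplus^n v\ge c^h_n-\oplus^n v\ge 0,
\]
because $w\ge w^h$. This is exactly Definition~\ref{def:Kantorovich potentials} for $F_{\rm OT}(\rho)$, and continuity is preserved.
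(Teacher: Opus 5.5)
Your overall route is the paper's: the crude bound $F^h_{\rm OT}(\rho)\le n(n-1)M(r)$, the checks $m^h(\beta)=\min\{m(\beta),h\}=m(\beta)$ and $M^h\le M$, an application of Lemma~\ref{lem:support} to the truncated cost, and the chain $F_{\rm OT}(\rho)\le\int c_n\,\mathrm{d}\gamma'=\int c_n^h\,\mathrm{d}\gamma'=F^h_{\rm OT}(\rho)$. Part~(ii) is handled exactly as in the paper. Your first step, which applies Lemma~\ref{lem:support} to $w$ itself, is correct but not needed. Once $F^h_{\rm OT}(\rho)=F_{\rm OT}(\rho)$ is known, the inequality $c_n^h\le c_n$ already shows that minimizers of $F_{\rm OT}(\rho)$ minimize $F^h_{\rm OT}(\rho)$.

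The gap is in the key step. You apply Lemma~\ref{lem:support} to $w^h$ with threshold equal to the truncation level $h$. Since $w^h\le h$ everywhere, $D^h_{w^h}=\{w^h(x_j,x_k)>h \text{ for some } j\neq k\}$ is empty. So the conclusion $\gamma'(D^h_{w^h})=0$ says nothing about whether $\gamma'$ charges $\{w(x_j,x_k)>h\}$, which is exactly where $c_n\neq c_n^h$.

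Your proposed repairs do not close this. The points with $w>h$ lie on $\{w^h=h\}$, not on $\{w=h\}$, so arguing ``up to the boundary $\{w=h\}$, where $w=w^h$'' misidentifies the problem. Enlarging the threshold leaves the set empty. Enlarging the truncation level proves the claim for a different problem $F^{H}_{\rm OT}$ with $H>h$, not for the given $h$.

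The missing idea is to use the strictness of $h>2(n-1)M(\beta/2)$ and pick an intermediate threshold $h'$ with $2(n-1)M(\beta/2)<h'<h$. Then $h'>2(n-1)M^h(\beta/2)$, so Lemma~\ref{lem:support} applies to $w^h$ at threshold $h'$. Because $h'<h$, one has $\min\{w,h\}>h'$ if and only if $w>h'$, so $D^{h'}_{w^h}=D^{h'}_w$. Hence $\gamma'(D^{h'}_w)=0$. Off this set every pair satisfies $w\le h'<h$, so $c_n=c_n^h$ holds $\gamma'$-a.e. With this choice, the rest of your argument goes through as written.
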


\begin{proof} Let $M^{h}$ and $m^{h}$ be the functions defined as in~\eqref{eq:m and M functions} but for the truncated interaction $w^{h}(x,y) = \min\{w(x,y),h\}$. Then note that, if $\beta$ and $h$ satisfy~\eqref{eq:support conditions 2}, then \begin{align*}
    m^{h}(\beta) = \min\{ m(\beta),h\} \geq \min\{m(\beta), 2(n-1) M(\beta/2)\} = m(\beta) > \frac{n(n-1) M(r)}{1-n\kappa(\rho;r)} 
\end{align*}
and
\begin{align*}
    2(n-1) M(\beta/2) \geq 2(n-1) M^{h}(\beta/2),\quad \mbox{for any $h>0$.}
\end{align*}
Thus, using the upper bound in Lemma~\ref{lem:uniform bound}, we see that $\beta$ and any $h>h'> 2(n-1) M(\beta/2)$ satisfy~\eqref{eq:conditions for support} for the truncated interaction $w^{h}$. Since
\begin{align*}
    D^{h'}_{w^{h}} = \{x= (x_1,...,x_n) \in I_n : w^{h}(x_j,x_k) > h' \quad \mbox{for some $j\neq k$} \} = D^{h'}_w, 
\end{align*}
we conclude from Lemma~\ref{lem:support} that any optimizer $\gamma^h$ of $F^h_{\rm OT}(\rho)$ is supported outside the set $D^{h'}_w$. As $c_n(x) = \sum_{j\neq k} w(x_j,x_k) = \sum_{j\neq k} w^{h}(x_j,x_k) =: c_n^{h}(x)$ for any $x\in I_n\setminus D^{h'}_w$, we obtain
\begin{align*}
    F_{\rm OT}(\rho) \geq F_{\rm OT}^h(\rho) = \int_{I_n} c_n^h(x) \mathrm{d} \gamma^h(x) = \int_{I_n\setminus D^{h'}_w} c_n(x) \mathrm{d} \gamma^h(x) = \int_{I_n} c_n(x) \mathrm{d} \gamma^h(x) \geq F_{\rm OT}(\rho),
\end{align*}
which shows that $F_{\rm OT}(\rho) = F_{\rm OT}^h(\rho)$ and $\gamma^{h}$ is an optimizer of $F_{\rm OT}(\rho)$. Moreover, if we now use that $F_{\rm OT}(\rho) = F_{\rm OT}^h(\rho)$ and $c_n^h \leq c_n$, it is easy to show that any optimizer of $F_{\rm OT}(\rho)$ is also an optimizer for the truncated problem. This completes the proof of~\ref{it:truncated optimizers}.

To prove~\ref{it:truncated potentials} we note that, up to a constant (or normalizing $v$ in the sense of Remark~\ref{rem:normalized potential}), we have $F_{\rm OT}^h(\rho) = n \int v \rho$ and $c_n^h(x) - (\oplus^n v)(x) \geq 0$. So $c_n(x) - (\oplus^n v)(x) \geq 0$ and $n \int v \rho = F_{\rm OT}(\rho)$ thus completing the proof.
\end{proof}

\subsection{Leading asymptotics of the adiabatic potential}

We are now ready to prove Theorem~\ref{thm:Kantorovich potentials}.

\begin{proof}[Proof of Theorem~\ref{thm:Kantorovich potentials}]
Let $\rho \in \mathcal{D}_{\rm per}$ and $v_\varepsilon(\rho) \in \partial F^\varepsilon_{\rm per}(\rho)$ be a sequence of representing potentials, which exists by Lemma~\ref{lem:existence of potential}. Let $r>0$ be such that $\kappa(\rho;r)<1/n$. Since the $\mH^1$ norm controls the  $\mL^1$ norm, hence also the concentration $\kappa(\rho;r)$,  we can find a small $\delta>0$ such that 
\begin{align}
    \kappa_\delta \coloneqq \sup \left\{ \kappa(\xi;r) : \xi \in B_\delta^{\mH^1}(\rho)\cap \mathcal{P}(I)\right\} < 1/n. \label{eq:uniform concentration}
\end{align}
Moreover, since $\rho(x)\geq c >0$ for any $x\in I$, we can use the GNS inequality~\eqref{eq:GNS} to show that, if $\delta>0$ is taken small enough, we also have
\begin{align}
    \sup \left\{ \int_{I} |\partial_x \sqrt{\xi}|^2 : \xi \in B_\delta^{\mH^1}(\rho)\cap \mathcal{P}(I)\right\} < \infty. \label{eq:uniform kinetic energy}
\end{align}
Therefore, by Lemma~\ref{lem:uniform bound}, we have $F^\varepsilon_{\rm per}(\xi) \leq C$ for any $\xi \in B_\delta^{\mH^1}(\rho)$ and $\varepsilon\leq 1$ with a constant $C = C(\rho;\delta)>0$ independent of $\xi$ and $\varepsilon$. Hence, from the subgradient inequality we obtain
\begin{align*}
   n \delta \norm{v_\varepsilon(\rho)}_{\mH^{-1}/\{1\}} = \mathrm{sup}_{\xi \in B_\delta(\rho)\cap \mathcal{P}(I)} n\inner{v_\varepsilon(\rho),\xi-\rho} \leq \mathrm{sup}_{\xi \in B_\delta(\rho)\cap \mathcal{P}(I)}  F^\varepsilon(\xi)-F^\varepsilon(\rho) \leq C(\rho,\delta),
\end{align*}
where $\norm{\cdot}_{\mH^{-1}/\{1\}}$ denotes the canonical norm on the quotient space obtained by identifying potentials in $\mH^{-1}_{\rm per}$ modulo constants. In particular, after normalizing the potentials (in the sense of Remark~\ref{rem:normalized potential}), the set $\{v_\varepsilon(\rho)\}$ is uniformly bounded in $\mH^{-1}_{\rm per}$. Hence, up to subsequences, we can extract a limit $v_\varepsilon(\rho) \rightharpoonup v_0$ in $\mH^{-1}_{\rm per}$. To complete the proof, it suffices to show that $v_0$ is a regular Kantorovich potential for $F_{\rm OT}(\rho)$.

To this end, notice that, since we have a uniform bound on the concentration of any density in $\xi \in B_\delta^{\mH^1}(\rho)$ (namely~\eqref{eq:uniform concentration}), by Lemmas~\ref{lem:truncated cost}~\ref{it:truncated optimizers} and Theorem~\ref{thm:SCE limit}, we can find a $h>0$ independent of $\xi$ such that 
\begin{align*}
    \lim_{\varepsilon \downarrow 0} F_{\rm per}^\varepsilon(\xi) = F_{\rm OT}(\xi) = F_{\rm OT}^h(\xi) \quad \mbox{for any $\xi \in B_\delta^{\mH^1}(\rho)\cap \mathcal{P}(I)$.}
\end{align*}
Consequently, we can use the subgradient inequality for $F^\varepsilon$ and pass to the limit to obtain
\begin{align*}
    \lim_{\varepsilon \downarrow 0} F_{\rm per}^\varepsilon(\xi) - F^\varepsilon_{\rm per}(\rho) - n \inner{v_\varepsilon(\rho),\xi-\rho} = F_{\rm OT}^h(\xi) - F_{\rm OT}^h(\rho) - n \inner{v_0,\xi - \rho} \geq 0, \quad \mbox{for $\xi \in B^{\mH^1}_\delta(\rho)\cap \mathcal{P}(I)$.}
\end{align*}
Using the convexity of $F_{\rm OT}^h$, we then find that the subgradient inequality holds for any $\xi \in \mathcal{R}_{\rm per}$. Hence $v_0$ is a generalized Kantorovich potential of $F_{\rm OT}^h$. By Lemma~\ref{lem:regularity}, we must have $v_0 \in C(I)$; therefore, by Lemma~\ref{lem:truncated cost}~\ref{it:truncated potentials}, $v_0$ is a regular Kantorovich potential of $F_{\rm OT}(\rho)$, thereby completing the proof. 

The uniqueness of the Kantorovich potential in the Lipschitz case follows from classical multimarginal OT theory (see, e.g., \cite[Corollary 2.1]{Pas11}).
\end{proof}
%%%%%%%%%%%%%%%%%%%%%%%%%%%%%%%%%%%%%%%%%%%%%%%%%%
\addtocontents{toc}{\protect\setcounter{tocdepth}{1}}
%%%%%%%%%%%%%%%%%%%%%%%%%%%%%%%%%%%%%%%%%%%%%%%%%%

%\appendix
%\input{auxiliary.tex}
%%%%%%%%%%%%%%%%%%%%%%%%%%%%%%%%%%%%%%%%%%%%%%%%%%
\addtocontents{toc}{\protect\setcounter{tocdepth}{-1}}
%%%%%%%%%%%%%%%%%%%%%%%%%%%%%%%%%%%%%%%%%%%%%%%%%%
\section*{Acknowledgements}

The author is grateful to Gero Friesecke and Mathieu Lewin for fruitful discussions that motivated the topic of this paper.

T.C.~Corso acknowledges funding by the \emph{Deutsche Forschungsgemeinschaft} (DFG, German Research Foundation) - Project number 442047500 through the Collaborative Research Center "Sparsity and Singular Structures" (SFB 1481). 

%%%%%%%%%%%%%%%%%%%%%%%%%%%%%%%%%%%%%%%%%%%%%%%%%%
\addtocontents{toc}{\protect\setcounter{tocdepth}{2}}
%%%%%%%%%%%%%%%%%%%%%%%%%%%%%%%%%%%%%%%%%%%%%%%%%%
%\appendix
%%%%%%%%%%%%%%%%%%%%%%%%%%%%%%%%%%%%%%%%%%%%%%%%%%

%%%%%%%%%%%%%%%%%%%%%%%%%%%%%%%%%%%%%%%%%%%%%%%%%%

\section*{Data availability}
No datasets were generated or analysed during the current study.

\section*{Competing interests}

The author has no competing interests to declare that are relevant to the content of this article.

\appendix
\section{Well-ordering interactions on a convex graph}

In this appendix, we show that interactions of the form~\eqref{eq:graph example} are well-ordering. For this, let $0\leq x_1\leq x_2 \leq x_3 \leq x_4$. Then first, from the convexity of $f$, it is not hard to show that the convex quadrilateral with vertices at $v_j = (x_j,f(x_j))$ with $j=1,2,3,4$ has inner diagonals given by line segments $[v_1,v_3]$ and $[v_2,v_4]$, see Figure~\ref{fig:convex graph}. In particular, from the triangle inequality (for the Euclidean norm in $\R^2$) we have
\begin{align}
    |v_3-v_1| + |v_4-v_2| \geq \min\{ |v_2-v_1|+ |v_4-v_3|,|v_4-v_1|+ |v_3-v_2|\}. \label{eq:diagonal inequality}
\end{align}
\begin{figure}[ht!]
    \centering
    \includegraphics[scale=0.366]{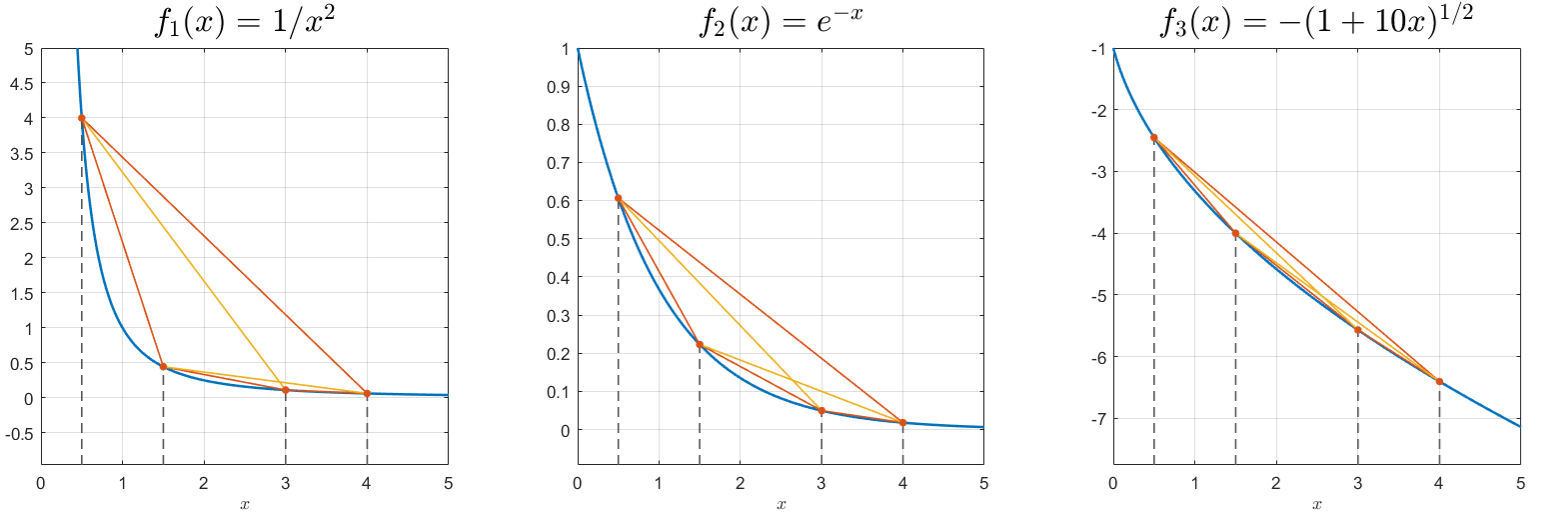}
    \caption{Illustration of the convex quadrilaterals with vertices $(x_j,f(x_j))$ for $\{x_1,x_2,x_3,x_4\} = \{0.5, 1.5, 3, 4\}$ (in red) with diagonals (in yellow) on the graph of different convex and non-increasing functions $f$ (in blue).}
    
 \label{fig:convex graph}
\end{figure}
On the other hand, as $f$ is non-increasing we also have
\begin{align*}
    |f(x_1)-f(x_3)| \geq \max\left\{|f(x_2)-f(x_1)|, |f(x_2) - f(x_3)| \right\},
\end{align*}
which implies that $|v_3-v_1| \geq \max \{|v_3-v_2|,|v_2-v_1|\}$. Similar arguments show that $|v_4-v_2| \geq \max\{|v_3-v_2|,|v_4-v_3|\}$ and $|v_4-v_1| \geq \max\{|v_1-v_3|,|v_4-v_2|\} \geq |v_3-v_2|$. Together, these inequalities imply that
\begin{align}
    &\max\{d_{13},d_{24}\} \geq \max\{d_{12},d_{34}\}, \quad \min\{d_{13}, d_{24}\} \geq \min \{ d_{12}, d_{34}\}, \quad\mbox{and}\label{eq:min and max inequality} \\
    &d_{14} \geq \max\{d_{13},d_{24}\} \geq \min\{d_{13},d_{24}\} \geq d_{23}, \label{eq:min inequality}
\end{align}
where $d_{ij} := |v_i-v_j|=\sqrt{(x_i-x_j)^2 + \left(f(x_i)-f(x_j)\right)^2}$.

As $g$ is non-increasing, the two inequalities in~\eqref{eq:min and max inequality} imply that
\begin{align*}
    w(x_1,x_3)+w(x_2,x_4) = g(d_{13}) + g(d_{24}) \leq g(d_{12}) + g(d_{34}) = w(x_1,x_2) + w(x_3,x_4).
\end{align*}
Hence, to finish the proof that $w$ is well-ordering, it now suffices to show that
\begin{align}
    w(x_1,x_3) + w(x_2,x_4) = g(d_{13}) + g(d_{24}) \leq g(d_{14}) + g(d_{23}) = w(x_1,x_4) + w(x_2,x_3). \label{eq: end estimate}
\end{align}
For this, we set $d \coloneqq d_{23} +  d_{14} - \max\{d_{13},d_{24}\}$, and note that, from inequalities~\eqref{eq:min inequality} and~\eqref{eq:diagonal inequality},
\begin{align*}
    d_{23} \leq d \leq \min\{d_{13},d_{24}\} \leq \max\{d_{13},d_{24}\} \leq d_{14}.
\end{align*}
In particular, there exists $t\in [0,1]$ such that 
\begin{align*}
    d = (1-t) d_{14} + t d_{23}
\quad \mbox{and}\quad \max\{d_{13},d_{24}\} = t d_{14} + (1-t) d_{23},
\end{align*}
and therefore, by using first the non-increasing property and then the convexity of $g$, we have
\begin{align*}
    g(d_{13}) + g(d_{24}) = g(\min\{d_{13},d_{24}\}) + g(\max\{d_{13},d_{24}\}) \leq g(d) + g(\max\{d_{13},d_{24}\}) \leq g(d_{14}) + g(d_{23}),
\end{align*}
which proves~\eqref{eq: end estimate}.

%%%%%%%%%%%%%%%%%%%%%%%%%%%%%%%%%%%%%%%%%%%%%%%%%%

\bigskip
%%%%%%%%%%%%%%%%%%%%%%%%%%%%%%%%%%%%%%%%%%%%%%%%%%
%%%%%%%%%%%%%%%%%%%%%%%%%%%%%%%%%%%%%%%%%%%%%%%%%%
\end{document}